\definecolor{green}{rgb}{0.796,0.948,0.816}
\newtheorem{Definition}{Definition}
\newtheorem{Proposition}{Proposition}
\newtheorem{Lemma}{Lemma}
\newtheorem{Remark}{Remark}
\newtheorem{theorem}{$\mathbf{Theorem}$}
\begin{document}
{\renewcommand\baselinestretch{1.2}\selectfont
\title{On the Outage Performance of Non-Orthogonal Multiple Access with  One-Bit
Feedback}
\author{Peng Xu, Yi Yuan, Zhiguo Ding, \IEEEmembership{Senior Member, IEEE},
 Xuchu Dai, \\and Robert Schober, \IEEEmembership{Fellow, IEEE}
\thanks{Peng Xu and Xuchu Dai are with Dept. of Electronic Engineering and Information Science, University of Science and Technology of China, P.O.Box No.4,
230027, Hefei, Anhui, China. Yi Yuan and Zhiguo Ding are with  School of
Computing and Communications, Lancaster University, LA1 4WA, UK.
 Robert Schober is with Institute for Digital Communications, the Friedrich-Alexander University (FAU), Erlangen 91058, Germany.
}\vspace{-2em}}\maketitle\par}

\begin{abstract}
 In this paper, the outage performance of downlink non-orthogonal multiple access (NOMA)
  is  investigated for the case
 where each user feeds back only one bit of its  channel state information (CSI) to the base station. Conventionally,
  opportunistic  one-bit feedback has been used in  fading broadcast channels to select  only one
   user   for   transmission. In contrast,
 the considered NOMA scheme adopts superposition coding  to serve all  users simultaneously
  in order to improve  user fairness.
 A closed-form expression for the common outage probability (COP)
 is derived, along with the optimal diversity gains under two types of power constraints. Particularly, it is demonstrated  that the diversity gain
 under a long-term power constraint is twice as large as that under a short-term power constraint.
 Furthermore, we study  dynamic power allocation optimization for minimizing the COP, based on  one-bit CSI feedback.
 This problem is challenging since the objective function is non-convex; however, under the short-term power constraint, we demonstrate that the original  problem can be transformed
 into a set  of convex problems. Under the long-term power constraint,
  an asymptotically optimal solution is obtained for high signal-to-noise ratio.

\end{abstract}
\begin{keywords}
  Non-orthogonal multiple access, downlink transmission,  common outage probability,
  one-bit feedback, power allocation.
\end{keywords}
\vspace{-1em}
\section{Introduction}
Non-orthogonal multiple access (NOMA) has been recognized
 as an important   multiple access (MA) technique  in  future fifth generation (5G) networks
 since a balanced tradeoff between spectral efficiency and user fairness can be realized   \cite{saito2013system,li20145g,ding2014performance,timotheou2015fairness,
 liu2015cooperative,Ding2015cooperative_NOMA,shi2015outage,yang2016outage}.
 Unlike conventional MA, such as
 time-division multiple access (TDMA), NOMA simultaneously transmits messages to multiple users.
 The power domain is utilized by NOMA such that different users are
 served at different power levels.
 The basic idea of NOMA is motivated by the optimal coding scheme for
  the broadcast channel (BC) \cite{cover1972broadcast}, which  combines   superposition coding at the transmitter with
 successive interference cancellation (SIC) decoding at the  receivers.
 However, compared to the conventional transmission schemes for the BC, NOMA imposes an additional fairness
 constraint on   transmission, i.e.,  more power is always allocated
  to the users with poorer channel conditions, which is  different from the
 conventional waterfilling power allocation scheme.
  In this sense, NOMA can be viewed as a special case of the superposition
coding developed for the BC \cite{xu2015new}.

The capacity region of the degraded discrete
memoryless BC was first found by Cover based on superposition coding \cite{cover1972broadcast}.
The work in \cite{bergmans1974simple}  then established the capacity region of the Gaussian BC with single-antenna
terminals.
For the  multiple-input multiple-output (MIMO) Gaussian BC,
the capacity region can be achieved by applying dirty paper coding (DPC) \cite{weingarten2006capacity}.
Moreover, the ergodic capacity and the outage capacity/probability of the fading BC
with perfect channel state information (CSI)
 at both the transmitter and receivers  were studied in \cite{li2001capacity}
 and \cite{li2001capacity_outage}, respectively.
 Compared to ergodic capacity, {the concept of outage assumes
 the transmission with a   predefined rate, which is more appropriate for
  applications with strict delay constraints}. Two types of outage probabilities were defined in \cite{li2001capacity_outage},
namely  the common outage probability (COP) and the
  individual outage probability (IOP).
  For the COP, an outage event occurs if any of the users are in outage.
  For the IOP, the outage events of individual users are considered.
   For the case where  CSI is not available  at the transmitter, the outage performance was analyzed
in \cite{zhang2009downlink}.

For the downlink MA scenario with $K$ users,
another key performance evaluation criterion is multiuser diversity, where  serving the user with the best instantaneous  channel gain
  yields  the optimal ergodic sum rate \cite{knopp1995information,tse1997optimal}.
However, user selection    requires a large amount of CSI feedback,
 which is difficult  to implement in practice.
Motivated by this, 
a significant amount of existing work is dedicated to harvesting  the multiuser diversity with only quantized
 CSI at the transmitter \cite{yu2006opportunistic,park2007scheduling}. One can
refer to the survey in \cite{love2008overview} for more details.
One of the most spectrally efficient approaches is to employ one-bit feedback for opportunistic user selection, which was proposed for the fading BC in
\cite{sanayei2005exploiting,sanayei2007opportunistic,nam2006opportunistic,somekh2007sum,
niu2010ergodic,wang2014exploiting}. 
The outage performance with one-bit feedback was  investigated in \cite{nam2006opportunistic,
niu2010ergodic}, and the use of one-bit feedback has also been applied  to the MIMO case in \cite{diaz2008asymptotic,xu2010two}.

This paper investigates  the block fading BC with one bit feedback
 from the new perspective of NOMA. The traditional  one-bit feedback schemes in \cite{sanayei2005exploiting,sanayei2007opportunistic,nam2006opportunistic,somekh2007sum,
niu2010ergodic,wang2014exploiting} opportunistically
  select  a single  user   for   transmission within each fading block, and hence do not
 achieve short-term fairness\footnote{In this paper, short-term fairness means that
 user fairness is  guaranteed within any fading block, whereas long-term fairness means that
  user fairness is guaranteed within a large number of fading blocks.} in general.
 Compared to these works, NOMA emphasizes short-term fairness,
   which is achieved by having  the base station
   transmit messages to all $K$ users simultaneously using superposition coding.
  { 
In comparison with the existing works on NOMA assuming availability of perfect CSI at the transmitter   (e.g.,
\cite{ding2014performance,timotheou2015fairness,
 liu2015cooperative,Ding2015cooperative_NOMA}), the proposed  NOMA scheme with one-bit feedback
  enjoys a lower overhead, especially when the number of users is large. It is worth pointing out
   that     this one-bit feedback scheme is aligned with how NOMA has been  implemented in practice. For example,   multiuser superposition transmission (MUST), a downlink two-user version of NOMA, has been
included  in 3rd generation partnership project long-term evolution advanced (3GPP-LTE-A) networks
\cite{LTE}. For MUST, the base station  needs to obtain partial CSI   to determine the ordering of the users,
and in \cite{LTE}, CSI feedback has   been particularly  highlighted  as
a potential enhancement  to assist the base station  in performing user ordering.}
   Most recently, in \cite{yang2016outage,shi2015outage}, the authors  have investigated the outage performance
   of NOMA with statistical CSI knowledge. However, the works in \cite{yang2016outage,shi2015outage} did not
   consider quantized CSI feedback and the proposed schemes are fundamentally different from our work.

 In this paper, {  a downlink  NOMA system with one-bit feedback is
investigated for
delay-sensitive applications. Therefore,  the outage probability is used as the relevant performance metric.}
Specifically, the COP is adopted as the  performance criterion,
 which is motivated by the fact that the COP captures the  event that outage occurs at any of the users and hence emphasizes short-term fairness compared to the IOP.
 We  derive a closed-form expression for the COP by first defining $(K+1)$ feedback events
 with respect to the number of channel gains exceeding a predefined threshold,
 and then analyzing the conditional COP for each event. The optimal diversity gains achieved by the considered NOMA scheme are derived
 under short-term and long-term power constraints, respectively.
 Our analysis shows  that the diversity gain
 under the long-term power constraint is twice as large as that under the short-term power constraint.

   Furthermore, in order to minimize the COP, we study a dynamic power allocation policy based on CSI feedback,
  i.e., different power allocation schemes are developed  for different feedback states.
      The formulated power allocation problem is challenging
 since the objective function for minimizing the COP is non-convex.
 To make this problem tractable, under the short-term power constraint,
  we first characterize  the properties of the  optimal power allocation solution, which can be used to  transform  the problem
 into a series of convex problems.
 Under the long-term power constraint,
  we apply a high signal-to-ratio (SNR) approximation and show that the approximated
   problem is convex.
 Our analysis shows that, for each  feedback event,
     the optimal solution is in the form of  two increasing
     geometric progressions.
     An efficient iterative search algorithm is proposed to determine the length of each
      geometric progression.
  {Numerical results reveal that one-bit feedback significantly  improves the outage
 performance of NOMA  compared to the case without CSI feedback.}


Throughout this paper, we use $\mathbb{P}(\cdot)$ to denote the probability of an event,
and  $\mathbb{E}(\cdot)$ denotes the expectation of a random variable.
In addition, $\{x_i\}$ denotes the sequence formed by all the possible $x_i$'s,
and $[1:K]$ denotes the set $\{1,\cdots,K\}$.
Furthermore, $\log(\cdot)$ denotes the logarithm that is taken to base 2; $\ln(\cdot)$ denotes
the natural logarithm; {$C_K^n\triangleq \frac{K!}{n!(K-n)!}$, for $n\leq K$}; and $[x]^+\triangleq\max\{x,0\}$.
{ Finally, ``$\doteq$''   denotes exponential equality, i.e.,
  $f(P)\doteq P^x$ implies  $\lim_{P\rightarrow \infty}\frac{\log f(P)}{\log P}=x$, and
   ``$\dot{\leq}$'' and ``$\dot{\geq}$'' are defined similarly.}

\section{System Model}\label{ii}
Consider a downlink NOMA scenario  with one single-antenna base station and $K$ single-antenna users.
{  Quasi-static block fading is assumed,
where the  channel gains from the base station to all users are  constant during one fading block consisting of $T$ channel uses, but change independently from one fading block to  the next fading block.}
The  base station sends $K$ messages to the users using the NOMA scheme, i.e., it sends
 $x(t,b)=\sum_{k=1}^K s_k(t,b)$ at time instant $t$ within fading block $b$, where $s_k(t,b)$ is
the transmitted signal (containing the information-bearing message and the power allocation coefficient)  for   user $k$ and the signals for different users are mutually independent.
 Accordingly,   user $i$ receives the following
\begin{align}\label{superposition}
  y_k(t,b)=h_k(b)\sum_{i=1}^K s_k(t,b)+n_k(t,b), \ t\in[1:T],
\end{align}
at  time instant $t$ within fading block $b$.  Here,
the noise samples $n_k(t,b)$ at   user $k$ are independent and identically distributed
complex Gaussian random variables  with zero mean and unit variance.
 {$h_k(b)$ denotes the channel gain from the base station to
 user $k$ in block $b$, which is  assumed
to be a zero mean circularly symmetric complex Gaussian random
variable with unit variance. Moreover,  the users have mutually independent
channel gains.}
This paper exclusively considers the case where all codewords span only a
single fading block, and the base station transmits one message to each user in each block  with
the same fixed rate $r_0$ bits per channel use (BPCU), in order to guarantee fairness \cite{timotheou2015fairness}.

For the sake of brevity, the fading block index $b$ will be omitted in the rest of this paper whenever
this does not cause any confusion. Assume that all users have perfect CSI and  compare their fading gains to  a predefined
    threshold, denoted by  $\alpha$. Particularly, given $h_k$, user $k$ feeds back in each fading block
    a single bit\footnote{   The one-bit feedback scheme  considered
    in this paper is the simplest form   of a quantized feedback scheme,
     and its overhead  is negligible when the length of each fading block is
moderate to large. However, this work can be viewed as a benchmark for future  studies of
 NOMA systems employing multiple-bit feedback.} ``$Q(h_k)$" to
    the base station  via
 a zero-delay reliable link, where $Q(h_k)=1$ if $|h_k|^2\geq \alpha$, and $Q(h_k)=0$, otherwise.


\subsection{User Ordering for NOMA}
 Denote the channel feedback sequence as $\{Q(h_k)\}\triangleq
 \{Q(h_1),\cdots,Q(h_K)\}$.
 Obviously, $\{Q(h_k)\}$ has $2^K$ possible realizations in each of which the elements are  0 or 1.
Based on these feedbacks,  the base station will perform power allocation for the $K$ users.  Thereby,
the base station focuses only on $(K+1)$ categories for the realizations of  $\{Q(h_k)\}$,
 and a corresponding random variable is  defined  in the following.
   \begin{Definition}\label{Defini_event}
    Define a random variable $N$ with respect to  the $K$-dimensional random binary feedback
    sequence $\{Q(h_k)\}$ as $N\triangleq K-\sum_{k=1}^K Q(h_k)$.
   Obviously, $N$ has $(K+1)$ possible realizations, and event $N=n$
    represents the case where $n$ users send ``0'' and the other
    $K-n$ users send ``1", $n\in[0:K]$.  \end{Definition}

  {  For event $N=n$, the base station uses three steps to determine the user ordering:
   (i) divide  the users into two groups
   corresponding to  feedbacks ``0" and
 ``1", denoted as $\mathcal{G}_{0|n}$ and $\mathcal{G}_{1|n}$, respectively;
 (ii) allocate the  ordering indices $\{1,\cdots,n\}$ to  the users  in $\mathcal{G}_{0|n}$,
 and  the  ordering  indices $\{n+1,\cdots,K\}$ to  the users in $\mathcal{G}_{1|n}$;
 (iii)  {\em randomly} index (order)  the users  in the same group  since the base station cannot
 distinguish their fading gains.}

Denote the channel gains for the ordered users  by $\{|h_{\pi_1}|^2,|h_{\pi_2}|^2, \cdots,|h_{\pi_K}|^2\}$,
 where $ \pi_k\in[1:K]$, and $\pi_i\neq \pi_j$ if $i\neq j$. Hence, for event $N=n$,
 { $Q(h_{\pi_k})=0$ if $1\leq k \leq n$}, and $Q(h_{\pi_k})=1$ if $n+1\leq k \leq K$.
{ Then, the base station broadcasts the superimposed message $\sum_{k=1}^K s_{\pi_k}(t)$ based on the power allocation policy  discussed in the next subsection, where $s_{\pi_k}(t)$
is the signal for user $\pi_k$ in the $t$-th channel use of a fading block.}
\begin{Remark}\label{Remark1}
  According to the applied user ordering principle, all channels $h_{\pi_k}$ are mutually independent
  if conditioned on event $N=n$. This is because the two groups  $\mathcal{G}_{0|n}$ and
  $\mathcal{G}_{1|n}$ are determined by event $N=n$, and all  users in the same group are randomly
  ordered.
\end{Remark}

{ \subsection{Successive Interference Cancellation (SIC)}
The users  employ SIC to decode their messages,
  based on the user ordering determined by the base station.
 As explained in the previous subsection,  the ordering of the channels is denoted as
 $\{|h_{\pi_1}|^2,|h_{\pi_2}|^2,
 \cdots,|h_{\pi_K}|^2\}$. In the SIC process,  user $\pi_k$ will sequentially  decode the messages
 of   users $\pi_l $, $l\in[1:k]$.
  Specifically,
   user $\pi_k$ will  successively  detect the message of  users $\pi_l$,  ${l}<{k}$,
   and then remove these messages from its  observation, such that
   the interference terms generated from user  ${\pi_1}$ to user ${\pi_l}$
   have been canceled when detecting
   the    message of  user $\pi_{l+1}$.}

 \subsection{Power Constraint}
 For any block,
   the power  allocated for user $\pi_k$, whose ordering index  in the SIC process is $k$,
is denoted as  $\mathcal{P}_k(\{Q(h_k)\})$.
While there are $2^K$ possible feedback sequences, the power allocation policy used at the base station
 will  depend only on which of the  $K+1$ events $N=n$ happens, i.e., the power allocation policy for all sequences corresponding to the same event are identical. Therefore, the power allocated to  user $\pi_k$ is denoted by  $P_{k,n}$, i.e.,  $\mathcal{P}_k(\{Q(h_k)\})=P_{k,n}$, for
   event $N=n$. 

We consider two different types of power constraints.
In particular, the {\em short-term} power constraint ensures that  the sum power of all users within any block is constrained. Specifically, the short-term power constraint
 requires that the total power allocated to  all   users within any block cannot exceed $P$, i.e.,
\begin{align}\label{power_short-term}
\sum_{k=1}^K{P_{k,n}}\leq P,\  \forall n\in[0:K].
\end{align}
In contrast, the considered {\em long-term} power constraint ensures that the average total transmission power is constrained, i.e., { 
\begin{align}\label{power_long-term2}
 \mathbb{E} \left[\sum_{k=1}^K \mathcal{P}_k(\{Q(h_k)\})\right]&=\sum_{k=1}^K \mathbb{E} \left[ \mathcal{P}_k(\{Q(h_k)\})\right]\leq P,
 \end{align}
 where the expectation of $\mathcal{P}_k(\{Q(h_k)\})$ can be calculated as
 \begin{align}
  \mathbb{E}\left[\mathcal{P}_k(\{Q(h_k)\})\right]&\stackrel{(a)}{=}\sum_{\mathbf{q}\in\mathcal{Q}}
  p(\mathbf{q})\mathcal{P}_k(\mathbf{q})\stackrel{(b)}{=}
  \sum_{n=0}^K \sum_{\mathbf{q}\in\mathcal{Q}_n}p(\mathbf{q})\mathcal{P}_k(\mathbf{q})\nonumber\\
  &\stackrel{(c)}{=}\sum_{n=0}^K \left[ P_{k,n}\sum_{\mathbf{q}\in\mathcal{Q}_n}p(\mathbf{q})\right]
  \stackrel{(d)}{=} \sum_{n=0}^K \left[P_{k,n}\mathbb{P}(N=n)\right]
\end{align}
where $(a)$ follows from the definitions $\mathcal{Q}\triangleq \left\{\mathbf{q}=
(q_1,\cdots,q_K):\ q_k\in\{0,1\},
  \forall k\in[1:K]\right\}$ and  $p(\mathbf{q})\triangleq\mathbb{P}(\{Q(h_k)\}=\mathbf{q})$;
  $(b)$ follows from the definition   $\mathcal{Q}_n\triangleq \left\{\mathbf{q}=(q_1,\cdots,q_K)\in\mathcal{Q}:
 K- \sum_{k=1}^K q_k=n\right\}$, $\forall n\in[0:K]$;
$(c)$ holds since $\mathcal{P}_k(\mathbf{q})=P_{k,n}$
if $\mathbf{q}\in\mathcal{Q}_n$ as shown at the beginning of this subsection; $(d)$ holds since
$\mathbb{P}(N=n)=\sum_{\mathbf{q}\in\mathcal{Q}_n}p(\mathbf{q})$ according to Definition \ref{Defini_event}.
Thus, the { long-term} power constraint in \eqref{power_long-term2} can be rewritten as
\begin{align}\label{power_long-term}
 \sum_{k=1}^K \mathbb{E} \left[ \mathcal{P}_k(\{Q(h_k)\})\right]= \sum_{k=1}^K\sum_{n=0}^K \left[ P_{k,n}\mathbb{P}(N=n)\right]
  =\sum_{n=0}^K \left[\mathbb{P}(N=n)\sum_{k=1}^K P_{k,n}\right]\leq P.
\end{align}
}


\begin{Remark}
 Both  types of power constraints are widely used in the related literature, e.g., \cite{sanayei2007opportunistic,somekh2007sum,kim2007expected,niu2010ergodic}. The short-term power constraint is appropriate  for applications with strict peak power constraints, whereas the
  long-term power constraint is appropriate for  applications with average power constraints.
\end{Remark}

 \section{Outage Probability}\label{iii}
  In this section, the outage probability of the NOMA system considered in Section \ref{ii} will
   be analyzed. However, first,  some useful  preliminary results are provided
   in the next subsection.
   \subsection{Preliminary Results}
    We first analyze of the conditional probability $ \mathbb{P}(|h_{\pi_k}|^2<x_k|N=n)$ for $x_k>0$, $k
  \in[1:K]$,
   where  random variable $N$ is defined in Definition \ref{Defini_event}.
   Based on the user ordering in Section \ref{ii}, we know that, for event $N=n$,
    $|h_{\pi_k}|^2<\alpha$ if $k\in[1:n]$, and   $|h_{\pi_k}|^2\geq\alpha$ otherwise.
    In addition, all channels $h_{\pi_k}$ are mutually independent
  if conditioned on event $N=n$, as explained in Remark \ref{Remark1}.
    Thus, we have
    \footnote{Note that,  when $n=0$ (i.e., event $N=0$), the probabilities  in  \eqref{CDF0} do
not exist; when $n=K$ (i.e.,  event $N=K$), the probabilities  in  \eqref{CDF1} do
not exist.
}
 \begin{align}
  \mathbb{P}(|h_{\pi_k}|^2&<x_k\ \big|\ N=n) {=}\mathbb{P}\left(|h_{\pi_k}|^2\leq x_k\;\big|\;|h_{\pi_k}|^2<\alpha\right)
 \nonumber\\& = \frac{ \mathbb{P}\left(|h_{\pi_k}|^2\leq x_k,|h_{\pi_k}|^2<\alpha\right)}{\mathbb{P}\left(|h_{\pi_k}|^2<\alpha\right)}\nonumber\\
 &=  \min\left\{\frac{1-e^{-x_k}}{1-e^{-\alpha}},1\right\},\  x_k\geq 0,k\in[1:n].  \label{CDF0}
  \end{align}
   Similarly, we have
  \begin{align}
     \mathbb{P}(|h_{\pi_k}|^2&<x_k\ \big|\ N=n) =  \left[1-e^{-(x_k-\alpha)}\right]^+,
     \  x_k\geq 0,k\in[n+1:K].
      \label{CDF1}
 \end{align}

Next,  the expressions for the signal-to-interference-plus-noise ratios (SINRs) at the receivers will be developed.
 As explained in Section \ref{ii}-B, SIC is adopted in the decoding process and
 the ordering of the channels is denoted as
 $\{|h_{\pi_1}|^2,|h_{\pi_2}|^2,
 \cdots,|h_{\pi_K}|^2\}$. Thus,
  the     SINR for  user $\pi_k$ to decode the message of user $\pi_l$  is given by \cite{cover1972broadcast}
\begin{align}\label{SNR}
  \textrm{SINR}_{l\rightarrow k}=
  \frac{P_{l,n}|{h}_{\pi_k}|^2}{|{h}_{\pi_k}|^2\sum_{m=l+1}^KP_{m,n}+1},
  \ l\in[1:k].
\end{align}
\subsection{Outage Probability}
 This paper adopts the COP \cite{li2001capacity_outage} as  performance criterion for the considered NOMA system { since short-term fairness  can be  guaranteed  with this criterion}.
  The COP  is provided
  in the following theorem.
\begin{theorem}\label{theorem_cop}
The COP of the considered one-bit NOMA scheme can be expressed as
\begin{align}
   \mathbb{P}^{\textrm{Common}}(\alpha,\{P_{k,n}\}) =\sum_{n=0}^K  \mathbb{P}_n(\alpha)\left[1-\prod_{k=1}^K (1-
\mathbb{P}_{k,n}^{\textrm{Indiv}}(\alpha,\mathbf{P}_n))\right],\label{common_outage_overall2}
\end{align}
where  $\mathbf{P}_n\triangleq\{P_{1,n},\cdots,P_{K,n}\}$  is the power allocation sequence for event $N=n$; $\mathbb{P}_n(\alpha)$ and $\mathbb{P}_{k,n}^{\textrm{Indiv}}(\alpha,\mathbf{P}_n)$ are defined as:  \begin{align}\label{Pr_n}
   \mathbb{P}_n(\alpha)\triangleq C_K^n(1-e^{-\alpha})^ne^{-\alpha(K-n)},
 \end{align}
 \begin{align}\label{P_i|n}
  \mathbb{P}_{k,n}^{\textrm{Indiv}}(\alpha,\mathbf{P}_n)\triangleq
  \left\{\begin{array}{ll}\min\left\{\frac{1-e^{-\hat{\zeta}_{k,n}}}{1-e^{-\alpha}},1\right\}, &
  k\in[1:n],\\
  \left[{1-e^{-(\hat{\zeta}_{k,n}-\alpha)}}\right]^+, &k\in[n+1:K],\end{array}\right.
\end{align}
with the definition $\hat{\zeta}_{k,n}\triangleq \max\{\zeta_{1,n},\cdots,\zeta_{k,n}\}$, and
\begin{align}\zeta_{k,n}=\frac{\hat{r}_0}{P_{k,n}-
\hat{r}_0\sum_{m=k+1}^K P_{m,n}}, \forall k\in[1:K], \textrm{ where }\hat{r}_0=2^{r_0}-1.\label{zeta}\end{align}
 \end{theorem}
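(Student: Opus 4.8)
The plan is to condition on the feedback event $N=n$, compute the conditional COP given each such event, and then combine the results via the law of total probability. The first---and key---step is to characterize the outage event of each user. Fix event $N=n$ with the induced channel ordering $\{|h_{\pi_1}|^2,\ldots,|h_{\pi_K}|^2\}$. User $\pi_k$ uses SIC and must decode the messages of users $\pi_1,\ldots,\pi_k$, each at rate $r_0$ BPCU. By the SINR expression \eqref{SNR}, user $\pi_k$ successfully decodes the message of user $\pi_l$ iff $\log(1+\textrm{SINR}_{l\to k})\ge r_0$, i.e., $|h_{\pi_k}|^2\big(P_{l,n}-\hat r_0\sum_{m=l+1}^K P_{m,n}\big)\ge \hat r_0$. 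When the coefficient $P_{l,n}-\hat r_0\sum_{m=l+1}^K P_{m,n}$ is positive this reduces to $|h_{\pi_k}|^2\ge\zeta_{l,n}$ with $\zeta_{l,n}$ as in \eqref{zeta}; when it is nonpositive decoding is impossible, which we capture by the convention $\zeta_{l,n}=+\infty$. Consequently, user $\pi_k$ decodes all of its required messages iff $|h_{\pi_k}|^2\ge\max\{\zeta_{1,n},\ldots,\zeta_{k,n}\}=\hat\zeta_{k,n}$, so, conditioned on $N=n$, the individual outage event of user $\pi_k$ is exactly $\{|h_{\pi_k}|^2<\hat\zeta_{k,n}\}$.

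Next I would evaluate the conditional individual outage probability $\mathbb{P}(|h_{\pi_k}|^2<\hat\zeta_{k,n}\mid N=n)$ by substituting $x_k=\hat\zeta_{k,n}$ into the conditional CDFs \eqref{CDF0}--\eqref{CDF1} from the preliminary results; this yields precisely $\mathbb{P}_{k,n}^{\textrm{Indiv}}(\alpha,\mathbf{P}_n)$ in \eqref{P_i|n}, distinguishing $k\in[1:n]$ (users that fed back $0$) from $k\in[n+1:K]$ (users that fed back $1$), with the $\min\{\cdot,1\}$ and $[\cdot]^+$ operations automatically absorbing the $\hat\zeta_{k,n}=+\infty$ case. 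Since the channels $h_{\pi_k}$ are mutually independent conditioned on $N=n$ (Remark \ref{Remark1}) and each user's outage event depends only on its own channel gain, the conditional COP---the probability that at least one user is in outage---factorizes as $\mathbb{P}(\textrm{outage}\mid N=n)=1-\prod_{k=1}^K\big(1-\mathbb{P}_{k,n}^{\textrm{Indiv}}(\alpha,\mathbf{P}_n)\big)$.

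Finally I would compute $\mathbb{P}(N=n)$. Because $|h_k|^2$ is exponentially distributed with unit mean, each user independently feeds back $0$ with probability $1-e^{-\alpha}$ and $1$ with probability $e^{-\alpha}$; hence $N$, the number of users feeding back $0$, is binomially distributed with parameters $K$ and $1-e^{-\alpha}$, giving $\mathbb{P}(N=n)=C_K^n(1-e^{-\alpha})^n e^{-\alpha(K-n)}=\mathbb{P}_n(\alpha)$ as in \eqref{Pr_n}. Combining the three pieces by the law of total probability, $\mathbb{P}^{\textrm{Common}}(\alpha,\{P_{k,n}\})=\sum_{n=0}^K\mathbb{P}(N=n)\,\mathbb{P}(\textrm{outage}\mid N=n)$, yields \eqref{common_outage_overall2}. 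I expect the main obstacle to be the first step---translating the chained SIC decoding requirements into the single threshold $\hat\zeta_{k,n}$ on $|h_{\pi_k}|^2$, justifying the maximization over $l\in[1:k]$, and handling the infeasible-power degenerate case cleanly; once the conditional independence of Remark \ref{Remark1} and the conditional CDFs \eqref{CDF0}--\eqref{CDF1} are in hand, the remaining steps are essentially bookkeeping.
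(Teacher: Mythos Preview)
Your proposal is correct and mirrors the paper's own proof in Appendix~\ref{proof_theorem_cop}: condition on $N=n$, reduce each user's SIC success to the single threshold $|h_{\pi_k}|^2\ge\hat{\zeta}_{k,n}$, invoke the conditional CDFs \eqref{CDF0}--\eqref{CDF1} and the conditional independence of Remark~\ref{Remark1} to factorize, then average over the binomial law of $N$. Your treatment of the infeasible-power case via $\zeta_{l,n}=+\infty$ is slightly more explicit than the paper, which simply assumes $\zeta_{k,n}\ge 0$ via \eqref{noma_constraint}.
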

 \begin{proof}
   Please refer to Appendix \ref{proof_theorem_cop}.
 \end{proof}


Note that in \eqref{zeta}, we have implicitly assumed that $\zeta_{k,n}\geq 0$, i.e.,
  \begin{align}\label{noma_constraint}
    {P_{k,n}\geq \hat{r}_0\sum_{m=k+1}^KP_{m,n}},\ \forall k\in[1:K-1],\ n\in[0:K].
  \end{align}
  Such a constraint on power allocation is typical for   NOMA systems \cite{ding2014performance,Ding2015cooperative_NOMA,timotheou2015fairness}, where a user
  with poorer channel conditions has to be allocated more power in order to guarantee fairness.
 In addition, in order to facilitate the use of different power constraints in the following discussions,
we express  $\{P_{k,n}\}$ as a function of  $\{\zeta_{k,n}\}$ as follows:
  \begin{align}\label{Power-zeta}
P_{k,n}=\frac{\hat{r}_0}{\zeta_{k,n}}+\hat{r}_0\sum_{m=k+1}^K
(\hat{r}_0+1)^{m-k-1}\frac{\hat{r}_0}{\zeta_{m,n}},\ \forall k\in[1:K],\ n\in[0:K],
\end{align}
which is obtained from   \eqref{zeta} by applying mathematical induction. {  Thus,
the sum power for event $N=n$ can be expressed as
\begin{align}\label{Power-zeta2}
  \sum_{k=1}^K P_{k,n}&=\sum_{k=1}^K \left(\frac{\hat{r}_0}{\zeta_{k,n}}+\hat{r}_0\sum_{m=k+1}^K
(\hat{r}_0+1)^{m-k-1}\frac{\hat{r}_0}{\zeta_{m,n}}\right)\nonumber\\
&=\sum_{k=1}^K \left(\frac{\hat{r}_0}{\zeta_{k,n}}+\frac{\hat{r}_0^2}{\zeta_{k,n}}\sum_{i=0}^{k-2}
(\hat{r}_0+1)^{i-2}\right)\nonumber\\
&=\sum_{k=1}^K\frac{(\hat{r}_0+1)^{k-1}\hat{r}_0}{\zeta_{k,n}}.
\end{align}}

\subsection{Diversity Gain}
In order to provide some insight into  the outage performance,
in this subsection, we analyze the diversity gains of the COP in \eqref{common_outage_overall2}
 under the short-term and long-term power constraints. The
 diversity gain is defined as follows.
 \begin{Definition}
 { The diversity gain based on the  COP is defined as
  \begin{align}\label{definition_diversity}
    d=-\lim_{P\rightarrow \infty} \frac{\log\mathbb{P}^{\textrm{Common}}}{\log P}.
  \end{align}}
     In addition, the diversity gain in \eqref{definition_diversity} can be also expressed  as
  $\mathbb{P}^{\textrm{Common}}\doteq P^{-d}$.
  \end{Definition}

Then, the following two lemmas provide the diversity gains of the COP
 under the short-term and long-term power constraints.
\begin{Lemma}\label{theorem_diversity_short}
 Under the short-term power constraint in \eqref{power_short-term},
  the maximum  achievable diversity gain
 of the considered NOMA scheme is $1$.
\end{Lemma}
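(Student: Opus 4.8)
The plan is to establish the two matching bounds $\mathbb{P}^{\textrm{Common}}\dot{\leq}P^{-1}$ (achievability, $d\ge1$) and $\mathbb{P}^{\textrm{Common}}\dot{\geq}P^{-1}$ (converse, $d\le1$), the optimization being over the threshold $\alpha$ and all powers $\{P_{k,n}\}$ obeying \eqref{power_short-term} and \eqref{noma_constraint}. For achievability I would exploit that the conditional COP of the event $N=0$ can be driven to \emph{exactly zero}: by \eqref{P_i|n}, its per-user terms are $[1-e^{-(\hat{\zeta}_{k,0}-\alpha)}]^+$, which vanish once $\zeta_{k,0}\le\alpha$ for all $k$. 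By the sum-power identity \eqref{Power-zeta2}, the choice $\zeta_{k,0}=\alpha$ for every $k$ consumes total power $\sum_{k=1}^K(\hat{r}_0+1)^{k-1}\hat{r}_0/\alpha=((\hat{r}_0+1)^K-1)/\alpha$, which is admissible under \eqref{power_short-term} precisely when $\alpha\ge((\hat{r}_0+1)^K-1)/P$. Taking $\alpha=((\hat{r}_0+1)^K-1)/P$ then annihilates the $N=0$ summand of \eqref{common_outage_overall2}, while every remaining summand is at most its prefactor $\mathbb{P}_n(\alpha)=C_K^n(1-e^{-\alpha})^ne^{-\alpha(K-n)}\le C_K^n\alpha^n$, and $\sum_{n\ge1}C_K^n\alpha^n=O(\alpha)=O(P^{-1})$ for any feasible power allocation placed in those (improbable) events (such allocations exist, e.g.\ $\zeta_{k,n}\equiv\alpha$ again). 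Hence $\mathbb{P}^{\textrm{Common}}\dot{\leq}P^{-1}$. A cruder variant with $\alpha=0$ (trivial feedback) also gives $d=1$, but the above choice keeps the dominant event outage-free and is cleaner.

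For the converse I would show $\mathbb{P}^{\textrm{Common}}\ge c/P$ with a constant $c>0$ for all large $P$, \emph{uniformly} in $\alpha>0$ and in all feasible $\{P_{k,n}\}$, by keeping a single summand of \eqref{common_outage_overall2} and a single factor inside it, so that $\mathbb{P}^{\textrm{Common}}\ge\mathbb{P}_n(\alpha)\,\mathbb{P}_{k,n}^{\textrm{Indiv}}$. The two quantitative ingredients are $\zeta_{1,n}\ge\hat{r}_0/P$ (immediate from $P_{1,n}\le P$ in \eqref{zeta}) and $\zeta_{K,n}\ge(\hat{r}_0+1)^{K-1}\hat{r}_0/P$ (keep only the $k=K$ term of \eqref{Power-zeta2} and use $\sum_kP_{k,n}\le P$), both valid since \eqref{noma_constraint} keeps the denominators positive. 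I would then split on the size of $\alpha$: if $\alpha\lesssim1/P$, use the $N=0$ summand and user $\pi_K$, for which $\mathbb{P}_0(\alpha)=e^{-K\alpha}\to1$ and $[1-e^{-(\hat{\zeta}_{K,0}-\alpha)}]^+\ge[1-e^{-(\zeta_{K,0}-\alpha)}]^+\gtrsim1/P$; if $1/P\lesssim\alpha\le1$, use the $N=1$ summand and user $\pi_1$, for which $\mathbb{P}_1(\alpha)\gtrsim\alpha$ while $\mathbb{P}_{1,1}^{\textrm{Indiv}}=\min\{(1-e^{-\zeta_{1,1}})/(1-e^{-\alpha}),1\}\gtrsim(\hat{r}_0/P)/\alpha$, so the product is $\Theta(1/P)$; and if $\alpha>1$, use the $N=K$ summand and user $\pi_K$, for which $\mathbb{P}_K(\alpha)=(1-e^{-\alpha})^K\ge(1-e^{-1})^K$ and $\mathbb{P}_{K,K}^{\textrm{Indiv}}\ge1-e^{-\zeta_{K,K}}\gtrsim1/P$. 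Throughout, the elementary inequalities $1-e^{-x}\ge x/2$ on $[0,1]$ and $1-e^{-\alpha}\le\alpha$ convert the $\zeta$-bounds into $\Omega(1/P)$ bounds. Combining the two parts gives $d=1$.

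The main obstacle is the converse's case analysis: one must argue that for \emph{every} threshold $\alpha$ there is a feedback event that is simultaneously non-negligible in probability and whose conditional COP cannot be made to vanish — and the natural ``non-negligible'' event when $\alpha$ is small, namely $N=0$, is exactly the one whose conditional COP \emph{can} be zeroed out (this is what achievability uses), which forces the argument to hand off to $N=1$ and then $N=K$ as $\alpha$ increases. A subsidiary, purely bookkeeping difficulty is the clipping in \eqref{P_i|n}: each lower bound used for $\mathbb{P}_{k,n}^{\textrm{Indiv}}$ must be checked to sit in the regime where $[\,\cdot\,]^+$ and $\min\{\cdot,1\}$ are inactive, which is precisely what the placement of $\alpha$ relative to $1/P$ controls in the three regimes; likewise the boundary indices $n=0$ and $n=K$, for which one branch of \eqref{P_i|n} is empty (cf.\ the footnote accompanying \eqref{CDF1}), have to be handled separately. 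Everything else — the geometric sum \eqref{Power-zeta2}, the binomial estimate of $\mathbb{P}_n(\alpha)$, and the exponential inequalities — is routine.
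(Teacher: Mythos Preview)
Your proposal is correct and follows essentially the same route as the paper. The converse is virtually identical: the paper also lower-bounds $\mathbb{P}^{\textrm{Common}}$ by $\mathbb{P}_n(\alpha)\,\mathbb{P}_{k,n}^{\textrm{Indiv}}$ for a well-chosen $(n,k)$ and splits into three regimes for $\alpha$ (written there as $\alpha\doteq P^{\beta}$ with $\beta>0$, $-1\le\beta\le0$, $\beta<-1$), using the events $N=K$, $N=1$, and $N=0$ respectively, together with the bound $\zeta_{k,n}\ge\hat{r}_0/P$ implied by \eqref{power_short-term}. The only noteworthy difference is in the achievability: the paper fixes $\alpha=\ln 2$ (and remarks that any constant $\alpha$ would do), keeps the same equal-$\zeta$ allocation $\zeta_{k,n}=\mu_1/P$, and then evaluates each conditional COP as $\mathbb{P}_n^{\textrm{Common}}\approx 2n\mu_1/P$; you instead let $\alpha=\mu_1/P\to0$, which drives the $N=0$ conditional COP to exactly zero and bounds all remaining summands by $\mathbb{P}_n(\alpha)=O(\alpha^n)$. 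Your variant is a touch quicker, while the paper's makes the additional (though not strictly needed) point that the maximal diversity is attained for \emph{every} threshold $\alpha$, not only for a vanishing one.
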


\begin{proof}
We consider a specific power allocation scheme such that  the values of  the $\zeta_{k,n}$'s
 in \eqref{zeta} are  identical. Based on this power allocation scheme, we will show that
 a diversity gain of $1$ can be achieved. 
 The feedback threshold is set as $\alpha=\ln(2)$ for simplicity. Note that one can
 also choose any other value of $\alpha$ to achieve a diversity gain of $1$, which
 means that the maximum  diversity gain can be achieved for any $\alpha$.
   Then, a lower bound on the COP is derived to prove that a diversity gain of $1$
   is optimal for all possible power allocation schemes
 and all possible choices of  threshold  $\alpha$. Details of the proof are provided in Appendix \ref{proof_theorem_short}.
\end{proof}

\begin{Lemma}\label{theorem_diversity_long}
Under the long-term power constraint in \eqref{power_long-term},  the maximum achievable
diversity gain of the considered NOMA scheme  is $2$,
 which is achieved only if $\alpha$ satisfies   $\alpha\doteq P^{-1}$.
\end{Lemma}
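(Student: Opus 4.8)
The plan is to prove the lemma in two halves: an achievability construction attaining $d=2$ when $\alpha\doteq P^{-1}$, and a converse showing $\mathbb{P}^{\textrm{Common}}\dot{\geq}P^{-2}$ for \emph{every} feasible power allocation and \emph{every} choice of $\alpha$, with the refinement that the bound is strict, i.e.\ $d<2$, unless $\alpha\doteq P^{-1}$. For the achievability part I would take $\alpha=cP^{-1}$ with the constant $c=(K+1)\big((\hat{r}_0+1)^K-1\big)$, and for each event $N=n$ use the balanced allocation obtained from \eqref{Power-zeta} with all $\zeta_{k,n}$ equal to a common value $\zeta_n$; by \eqref{Power-zeta2} this is the same as $\sum_{k=1}^K P_{k,n}=\big((\hat{r}_0+1)^K-1\big)/\zeta_n$, and I would fix $\sum_{k=1}^K P_{k,n}=\frac{P}{(K+1)\mathbb{P}_n(\alpha)}$ so that the long-term constraint \eqref{power_long-term} holds with equality; this yields $\zeta_n=\frac{c\,\mathbb{P}_n(\alpha)}{P}\doteq\mathbb{P}_n(\alpha)/P$, which, since $\mathbb{P}_n(\alpha)\le1$ and by the choice of $c$, satisfies $\zeta_n\le\alpha$ for every $n$ (strictly for $n=0$). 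Feeding this into \eqref{P_i|n}: in event $N=0$ all users are strong and $\mathbb{P}_{k,0}^{\textrm{Indiv}}=\big[1-e^{-(\zeta_0-\alpha)}\big]^+=0$, so that event contributes nothing; in event $N=n$ with $n\ge1$ the $K-n$ strong users likewise contribute $0$, while each of the $n$ weak users has $\mathbb{P}_{k,n}^{\textrm{Indiv}}=\frac{1-e^{-\zeta_n}}{1-e^{-\alpha}}\doteq\zeta_n/\alpha=\mathbb{P}_n(\alpha)\doteq P^{-n}$ (using $\mathbb{P}_n(\alpha)\doteq\alpha^n$ for small $\alpha$). Hence the conditional COP of $N=n$ is $\doteq P^{-n}$ and its weighted contribution is $\mathbb{P}_n(\alpha)\cdot P^{-n}\doteq P^{-2n}$; the $n=1$ term dominates the sum in \eqref{common_outage_overall2}, so $\mathbb{P}^{\textrm{Common}}\dot{\leq}P^{-2}$ and $d\ge2$.

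For the converse I would extract pointwise lower bounds from \eqref{common_outage_overall2} by keeping a single event and a single user. Retaining event $N=1$ and its weak user, $\mathbb{P}^{\textrm{Common}}\ge\mathbb{P}_1(\alpha)\,\mathbb{P}_{1,1}^{\textrm{Indiv}}$; since $P_{1,1}-\hat{r}_0\sum_{m\ge2}P_{m,1}\le P_{1,1}\le\sum_k P_{k,1}\le P/\mathbb{P}_1(\alpha)$ --- the last step from \eqref{power_long-term} by nonnegativity of all summands --- we get $\zeta_{1,1}\ge\hat{r}_0\mathbb{P}_1(\alpha)/P$ and hence $\mathbb{P}^{\textrm{Common}}\dot{\geq}\mathbb{P}_1(\alpha)\cdot\frac{1-e^{-\hat{r}_0\mathbb{P}_1(\alpha)/P}}{1-e^{-\alpha}}$. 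Retaining instead event $N=0$ and its strongest user and using \eqref{Power-zeta2} to get $\sum_k P_{k,0}\ge\big((\hat{r}_0+1)^K-1\big)/\hat{\zeta}_{K,0}$ together with $\sum_k P_{k,0}\le P/\mathbb{P}_0(\alpha)$ gives $\hat{\zeta}_{K,0}\ge\big((\hat{r}_0+1)^K-1\big)\mathbb{P}_0(\alpha)/P$ and hence $\mathbb{P}^{\textrm{Common}}\dot{\geq}\mathbb{P}_0(\alpha)\big[1-e^{-(((\hat{r}_0+1)^K-1)\mathbb{P}_0(\alpha)/P-\alpha)}\big]^+$; an analogous bound built from event $N=K$ (where $\mathbb{P}_K(\alpha)\to1$) covers unboundedly growing $\alpha$. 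A short case split on the growth of $\alpha$ then closes the argument: if $\alpha$ stays bounded away from $0$ or grows, the $N=1$ (resp.\ $N=K$) bound gives $\mathbb{P}^{\textrm{Common}}\dot{\geq}P^{-1}$; if $P^{-1}\,\dot{<}\,\alpha\to0$, the $N=1$ bound gives $\mathbb{P}^{\textrm{Common}}\dot{\geq}\alpha P^{-1}\,\dot{>}\,P^{-2}$; and if $\alpha\,\dot{<}\,P^{-1}$, then $\mathbb{P}_0(\alpha)\to1$ and the $N=0$ bound gives $\mathbb{P}^{\textrm{Common}}\dot{\geq}P^{-1}$. In every case $\mathbb{P}^{\textrm{Common}}\dot{\geq}P^{-2}$, so $d\le2$, with equality forcing $\alpha\doteq P^{-1}$; together with the achievability construction this establishes the lemma.

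I expect the main obstacle to be making the converse bounds hold uniformly over \emph{all} feasible power allocations and not merely the balanced one; this is precisely the role played by the elementary inequalities $P_{1,1}\le\sum_k P_{k,1}$ and $\sum_k P_{k,0}\ge\big((\hat{r}_0+1)^K-1\big)/\hat{\zeta}_{K,0}$ fed into the per-event budget $\mathbb{P}_n(\alpha)\sum_k P_{k,n}\le P$, after which one only has to patch the regimes of $\alpha$ together without leaving a gap.
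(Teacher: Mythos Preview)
Your proposal is correct and follows essentially the same route as the paper's proof. The achievability construction is identical: equal $\zeta_{k,n}$ per event with total power $P/\big((K+1)\mathbb{P}_n(\alpha)\big)$ and the specific threshold $\alpha=(K+1)\big((\hat r_0+1)^K-1\big)/P\doteq P^{-1}$, leading to the dominant $n=1$ term of order $P^{-2}$. The converse is also the same in spirit: the paper extracts the uniform bound $\zeta_{k,n}\ge \hat r_0\mathbb{P}_n(\alpha)/P$ from the per-event budget $\mathbb{P}_n(\alpha)\sum_kP_{k,n}\le P$ and then case-splits on $\alpha\doteq P^{\beta}$ over the ranges $\beta>0$, $-1\le\beta\le0$, $\beta<-1$, invoking events $N=K$, $N=1$, $N=0$ respectively---precisely your three cases. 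The only cosmetic difference is that for the $N=0$ regime you bound $\hat\zeta_{K,0}$ via \eqref{Power-zeta2} with the constant $(\hat r_0+1)^K-1$ rather than bounding a single $\zeta_{k,0}$ with the constant $\hat r_0$; both give the same $P^{-1}$ conclusion.
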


\begin{proof}
We consider a specific power allocation scheme such that the $\zeta_{k,n}$'s
 in \eqref{zeta} have the same value for a given $n$.
 We also choose a threshold $\alpha$ such that  outages are not occurring for event $N=0$
 (i.e., all the users feed back ``1'').
   Then, a  lower bound on the COP is  derived to prove that a diversity gain of $2$
   is optimal for all  possible power allocation schemes
 and all possible choices of threshold  $\alpha$, under the long-term power constraint.
Details of the proof are provided in Appendix  \ref{proof_theorem_long}.
\end{proof}


\section{Power Allocation}\label{iv}
Existing works have demonstrated that power allocation has significant
impact on the outage performance in conventional  multiple access scenarios \cite{li2001capacity_outage,li2005outage,luo2005service}.
Motivated by this, in this section, we formulate a power allocation problem to minimize  the COP
 $ \mathbb{P}^{\textrm{Common}}$
in \eqref{common_outage_overall2}, under short-term and long-term power constraints.

\subsection{Problem Formulation }
The optimization problem for the short-term power constraint can be formulated as follows:
\begin{subequations}\label{problem_short}
\begin{align}
 &\min_{\alpha,\{P_{k,n}\}}\sum_{n=0}^K  \mathbb{P}_n(\alpha)
  \left[1-\prod_{k=1}^K (1-
\mathbb{P}_{k,n}^{\textrm{Indiv}}(\alpha,\mathbf{P}_n))\right] \\
  &\textrm{s.t. } \eqref{power_short-term} \textrm{ and } \eqref{noma_constraint},\
  P_{k,n}\geq 0,\  \forall k\in[1:K], n\in[0: K]. \label{constraint1}
\end{align}
\end{subequations}
Similarly, the optimization problem for the long-term power constraint can be formulated as follows:
\begin{subequations}\label{problem_long}
\begin{align}
 &\min_{\alpha,\{P_{k,n}\}}\sum_{n=0}^K  \mathbb{P}_n(\alpha)
  \left[1-\prod_{k=1}^K (1-
\mathbb{P}_{k,n}^{\textrm{Indiv}}(\alpha,\mathbf{P}_n))\right] \\
  &\textrm{s.t. } \eqref{power_long-term} \textrm{ and } \eqref{noma_constraint},\
  P_{k,n}\geq 0,\  \forall k\in[1:K], n\in[0: K].
\end{align}
\end{subequations}

To simplify the above two problems, variable transformation according to \eqref{zeta} is applied,
and the problem in \eqref{problem_short} is transformed into the following equivalent form:
\begin{subequations}\label{new_problem1}
\begin{align}
 \textrm{(P1) }&\min_{\alpha,\{\zeta_{k,n}\}}\sum_{n=0}^K  \mathbb{P}_n(\alpha)
 \left[1-\prod_{k=1}^K (1-
\mathbb{P}_{k,n}^{\textrm{Indiv}}(\alpha,\boldsymbol{\zeta}_n))\right] \label{obj2}\\
  \textrm{s.t. }&
  \sum_{k=1}^K\frac{(\hat{r}_0+1)^{k-1}\hat{r}_0}{\zeta_{k,n}}\leq P,\  n\in[0: K];\label{constraint2}\\
  &\zeta_{k,n}\geq 0,\ \forall k\in[1:K], n\in[0: K]\label{constraint22}.
\end{align}
\end{subequations}
where $\boldsymbol{\zeta}_n=\{\zeta_{1,n},\cdots,\zeta_{K,n}\}$ and $\mathbb{P}_{k,n}^{\textrm{Indiv}}$
becomes a function of $\boldsymbol{\zeta}_n$; { \eqref{constraint2} is based on
\eqref{power_short-term} and \eqref{Power-zeta2}.}
Note that, according to \eqref{Power-zeta},
 the optimal power allocation scheme can be found once the optimal values of $\{\zeta_{k,n}\}$ are obtained.
 Similarly, the problem in \eqref{problem_long} can be transformed into the following equivalent form:
\begin{subequations}\label{new_problem_long1}
\begin{align}
 \textrm{(P2) }&\min_{\alpha,\{\zeta_{k,n}\}}
 \sum_{n=0}^K  \mathbb{P}_n(\alpha)
 \left[1-\prod_{k=1}^K (1-
\mathbb{P}_{k,n}^{\textrm{Indiv}}(\alpha,\boldsymbol{\zeta}_n))\right]
 \label{obj_long1}\\
  \textrm{s.t. }&
  \sum_{n=0}^K\mathbb{P}(\alpha)
  \sum_{k=1}^K\frac{(\hat{r}_0+1)^{k-1}\hat{r}_0}{\zeta_{k,n}}\leq P;\label{constraint_long1}\\
  & \zeta_{k,n}\geq 0,\ \forall k\in[1:K], n\in[0: K].\label{constraint_long12}
\end{align}
\end{subequations}
The benefit of using  the transformed problems in \eqref{new_problem1} and \eqref{new_problem_long1} is that the number of constraints has been reduced. However, problems (P1) and (P2) still involve the non-convex objective function and  are difficult to  solve.
There are   $(K(K+1)+1)$ optimization variables in total, including $K(K+1)$ power variables $\zeta_{k,n}$
and one threshold variable $\alpha$.
In the subsequent subsections,  we first address the   power allocation problem
for a fixed threshold $\alpha$,
and then utilize a one-dimensional search  to find the optimal $\alpha$.

\subsection{Short-Term Power Constraint}\label{subsection_short}
 { 
 For a fixed $\alpha$,  $\mathbb{P}_n(\alpha)$ is also fixed, and therefore, the objective in \eqref{obj2}
is additive with respect to  subfunctions $\mathbb{P}_n(\alpha)
 \left[1-\prod_{k=1}^K (1-
\mathbb{P}_{k,n}^{\textrm{Indiv}}(\alpha,\boldsymbol{\zeta}_n))\right]$, where the $n$-th subfunction depends on
 variable vector $\boldsymbol{\zeta}_{n}$, $0\leq n\leq K$. Moreover, the constraints in \eqref{constraint2} and
 \eqref{constraint22}
are uncoupled with respect to the $(K+1)$ variable vectors $\boldsymbol{\zeta}_{n}$, $0\leq n\leq K$.
Hence,   the joint optimization problem (P1) can be decomposed  into $(K+1)$ decoupled   subproblems
{\em without loss of optimality},
where the $n$-th subproblem  has the following form:}
  \begin{subequations}\label{new_problem2}
\begin{align}
 &\max_{\boldsymbol{\zeta}_{n}}\  f_{1,n}(\alpha,\boldsymbol{\zeta}_{n})\triangleq  \prod_{k=1}^K (1-
\mathbb{P}_{k,n}^{\textrm{Indiv}}(\alpha,\boldsymbol{\zeta}_n)) \label{obj20.n}\\
  \textrm{s.t. }
  &\sum_{k=1}^K\frac{(\hat{r}_0+1)^{k-1}\hat{r}_0}{\zeta_{k,n}}\leq P,\label{constraint2.n}\
  \zeta_{k,n}\geq 0,\ \forall k\in[1:K].
\end{align}
\end{subequations}
As shown in \eqref{P_i|n}, $\mathbb{P}_{k,n}^{\textrm{Indiv}}$ is { a non-convex function.
The following  proposition shows how to simplify  $\mathbb{P}_{k,n}^{\textrm{Indiv}}$ }
for $k\in[n+1:K]$.
\begin{Proposition}\label{poposition_zeta1}
  The optimal solution of problem \eqref{new_problem2} satisfies
  $\hat{\zeta}_{k,n}\geq \alpha$, $\forall k\in[n+1:K]$, $n\in[0:K-1]$.
\end{Proposition}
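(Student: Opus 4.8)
The plan is to show that problem \eqref{new_problem2} admits an optimal solution with $\hat\zeta_{n+1,n}\ge\alpha$; since $\hat\zeta_{k,n}=\max\{\zeta_{1,n},\dots,\zeta_{k,n}\}$ is nondecreasing in $k$, this single inequality propagates to $\hat\zeta_{k,n}\ge\alpha$ for every $k\in[n+1:K]$, which is the assertion. First I would fix $\alpha>0$, take any optimal vector $\boldsymbol{\zeta}_n^\ast$, and dispose of the case $\hat\zeta_{n+1,n}^\ast\ge\alpha$. In the remaining case $\hat\zeta_{n+1,n}^\ast<\alpha$, I would record the consequences that every component $\zeta_{j,n}^\ast$ with $j\in[1:n+1]$ satisfies $\zeta_{j,n}^\ast\le\hat\zeta_{n+1,n}^\ast<\alpha$ (in particular $0<\zeta_{n+1,n}^\ast<\alpha$, positivity being forced by \eqref{constraint2.n}), and that any user $k\in[n+1:K]$ with $\hat\zeta_{k,n}^\ast<\alpha$ has $\mathbb{P}_{k,n}^{\textrm{Indiv}}=[1-e^{-(\hat\zeta_{k,n}^\ast-\alpha)}]^+=0$, the argument of $[\,\cdot\,]^+$ being negative.

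Next I would introduce the perturbed vector $\tilde{\boldsymbol{\zeta}}_n$ that agrees with $\boldsymbol{\zeta}_n^\ast$ except that $\tilde\zeta_{n+1,n}=\alpha$, and argue that it is feasible and optimal. Feasibility for \eqref{constraint2.n} is immediate: only the $(n+1)$-th summand $(\hat{r}_0+1)^{n}\hat{r}_0/\zeta_{n+1,n}$ is altered, and it strictly decreases because $\zeta_{n+1,n}^\ast<\alpha$, while nonnegativity is obvious. That the objective $f_{1,n}$ is unchanged I would check factor by factor. For $k\in[1:n]$ the quantity $\hat\zeta_{k,n}$ does not involve the $(n+1)$-th component, so these factors are untouched. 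For $k=n+1$, the new running maximum is $\max\{\zeta_{1,n}^\ast,\dots,\zeta_{n,n}^\ast,\alpha\}=\alpha$ (every listed $\zeta_{j,n}^\ast$ lies below $\alpha$, and the list is empty when $n=0$), so $\mathbb{P}_{n+1,n}^{\textrm{Indiv}}=[1-e^{0}]^+=0$, exactly its former value.

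The step that will require the most care, and the only real obstacle, is $k\in[n+2:K]$, where $\hat\zeta_{k,n}$ genuinely depends on the $(n+1)$-th component through the running maximum. Here I would set $M_k\triangleq\max\{\zeta_{j,n}^\ast:j\in[1:k],\,j\neq n+1\}$, so the old and new values are $\hat\zeta_{k,n}^{\textrm{old}}=\max\{M_k,\zeta_{n+1,n}^\ast\}$ and $\hat\zeta_{k,n}^{\textrm{new}}=\max\{M_k,\alpha\}$, and split into two cases. If $M_k\ge\alpha$, then $\zeta_{n+1,n}^\ast<\alpha\le M_k$ forces both values to equal $M_k$, so the factor is unchanged. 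If $M_k<\alpha$, then $\hat\zeta_{k,n}^{\textrm{old}}<\alpha$ makes the factor equal to $1$, and $\hat\zeta_{k,n}^{\textrm{new}}=\alpha$ keeps it at $1-[1-e^{0}]^+=1$; again unchanged. Hence $f_{1,n}(\alpha,\tilde{\boldsymbol{\zeta}}_n)=f_{1,n}(\alpha,\boldsymbol{\zeta}_n^\ast)$, so $\tilde{\boldsymbol{\zeta}}_n$ is optimal with $\hat\zeta_{n+1,n}=\alpha\ge\alpha$, and monotonicity of the running maximum in $k$ finishes the proof. The crux is precisely this running-maximum bookkeeping: raising $\zeta_{n+1,n}$ to the threshold $\alpha$ can lift a previously sub-threshold $\hat\zeta_{k,n}$ only up to $\alpha$ — i.e. to the boundary of $[\,\cdot\,]^+$, where the individual outage is still zero — so no factor of the product can shrink. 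As a byproduct, the argument justifies replacing $\mathbb{P}_{k,n}^{\textrm{Indiv}}$ by the smooth $1-e^{-(\hat\zeta_{k,n}-\alpha)}$ for $k\in[n+1:K]$ in the subsequent analysis.
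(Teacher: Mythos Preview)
Your proposal is correct and follows essentially the same idea as the paper's proof: both argue that whenever $\hat\zeta_{k,n}<\alpha$ for $k\in[n+1:K]$ the corresponding factor of $f_{1,n}$ already equals $1$, so raising the relevant variable to the threshold $\alpha$ leaves the objective unchanged while only relaxing the power constraint. Your version is in fact more carefully executed---by exploiting the monotonicity of the running maximum you reduce to the single index $k=n+1$, perturb the free variable $\zeta_{n+1,n}$ (rather than the derived quantity $\hat\zeta_{k,n}$ as the paper loosely does), and verify factor by factor that nothing is lost; this is a genuine rigor improvement over the paper's sketch, but not a different route.
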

\begin{proof}
{ From \eqref{P_i|n}, we have $\mathbb{P}_{k,n}^{\textrm{Indiv}}=0$ when $ \hat{\zeta}_{k,n}\leq \alpha$,
  $\forall k\in[n+1:K]$,   which means that, once   $ \hat{\zeta}_{k,n}\leq \alpha$,
further decreasing $\hat{\zeta}_{k,n}$ cannot decrease  $\mathbb{P}_{k,n}^{\textrm{Indiv}}$
nor increase $ f_{1,n}$ in \eqref{obj20.n}.
Thus, once   $ \hat{\zeta}_{k,n}\leq \alpha$,
we only need to consider the case of  $\hat{\zeta}_{k,n}=\alpha$, since this leads to a lower  power
consumption for user $k$ (i.e., $P_{k,n}$)
  than the case of $\hat{\zeta}_{k,n}<\alpha$, as is oblivious from \eqref{zeta}.
In summary, the case of $\hat{\zeta}_{k,n}<\alpha$ can be ignored and the optimal solution of the considered optimization problem satisfies $\hat{\zeta}_{k,n}\geq\alpha$.}
\end{proof}

{We can also simplify the functions $\mathbb{P}_{k,n}^{\textrm{Indiv}}$  for $k\in[1:n]$  by considering
$\hat{\zeta}_{k,n}\leq \alpha$ only} as explained in the following. As shown in \eqref{P_i|n}, if $\hat{\zeta}_{k,n}>\alpha$,
$\forall k\in[1:n]$, we have $\mathbb{P}_{k,n}^{\textrm{Indiv}}=1$,  and the
objective function in \eqref{obj20.n} has the worst value (i.e., $f_{1,n}=0$)  among the possible
values between 0 and 1. Exploiting  the above considerations, the problem in \eqref{new_problem2} can be simplified as follows:
\begin{subequations}\label{new_problem3}
\begin{align}
 \max_{\boldsymbol{\zeta}_{n}}\  &f_{1,n}(\alpha,\boldsymbol{\zeta}_{n})= \label{obj3.n}
 \prod_{k=1}^n \frac{e^{-\hat{\zeta}_{k,n}}-e^{-\alpha}}{1-e^{-\alpha}}
  \prod_{k=n+1}^K e^{-(\hat{\zeta}_{k,n}-\alpha)} \\
  \textrm{s.t. } &\eqref{constraint2.n}; \textrm{ and } \hat{\zeta}_{k,n}\leq \alpha,\forall k\in[1:n];\
  \hat{\zeta}_{k,n}\geq \alpha, \forall k\in[n+1:K]. \label{constraint3.n}
\end{align}
\end{subequations}
\begin{Remark}
The constraint in \eqref{constraint3.n} requires $P\geq \frac{({ \hat{r}_0+1})^n-1}{\alpha}$
to satisfy $\hat{\zeta}_{k,n}\leq \alpha$, $\forall k\in[1:n]$, as is oblivious from \eqref{Power-zeta}.
{ Note that if this requirement
  on the transmit power is not satisfied, i.e., $P<\frac{({ \hat{r}_0+1})^n-1}{\alpha}$,  $\mathbb{P}_{n}^{\textrm{Common}}=1-\prod_{k=1}^K (1-
\mathbb{P}_{k,n}^{\textrm{Indiv}})=1$ for    any power allocation,
 i.e., the COP for event $N=n$ must be 1 in this case.}
\end{Remark}

  To further simplify this problem, we introduce another  proposition which allows the elimination of $\hat{\zeta}_{k,n}$.
\begin{Proposition}\label{poposition_zeta}
  The optimal solution of problem \eqref{new_problem3} satisfies
  $\zeta_{k,n}\leq \zeta_{k+1,n}$, $\forall k\in[1:K-1]$.
\end{Proposition}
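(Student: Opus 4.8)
The plan is to establish the claim by a direct replacement argument rather than a pairwise-exchange (``bubble sort'') argument. Starting from any optimal point of \eqref{new_problem3}, I will exhibit another point that is non-decreasing, feasible, and attains the same value of $f_{1,n}$; this shows that the optimum may always be taken to satisfy $\zeta_{k,n}\le\zeta_{k+1,n}$, $\forall k\in[1:K-1]$, which is the sense in which the proposition is used (exactly as for Proposition \ref{poposition_zeta1}). Concretely, let $\boldsymbol{\zeta}_n^{\star}=\{\zeta_{1,n}^{\star},\dots,\zeta_{K,n}^{\star}\}$ be optimal for \eqref{new_problem3}, with running maxima $\hat{\zeta}_{k,n}^{\star}=\max\{\zeta_{1,n}^{\star},\dots,\zeta_{k,n}^{\star}\}$, and define $\boldsymbol{\zeta}_n^{\dagger}$ by $\zeta_{k,n}^{\dagger}\triangleq\hat{\zeta}_{k,n}^{\star}$ for every $k\in[1:K]$.

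I would then carry out four steps. (i) The sequence $\boldsymbol{\zeta}_n^{\dagger}$ is non-decreasing, simply because a running maximum is non-decreasing in its index: $\zeta_{k,n}^{\dagger}=\hat{\zeta}_{k,n}^{\star}\le\hat{\zeta}_{k+1,n}^{\star}=\zeta_{k+1,n}^{\dagger}$. (ii) Since $\boldsymbol{\zeta}_n^{\dagger}$ is non-decreasing, its own running maxima equal its entries, so $\hat{\zeta}_{k,n}^{\dagger}=\zeta_{k,n}^{\dagger}=\hat{\zeta}_{k,n}^{\star}$; because the objective $f_{1,n}$ in \eqref{obj3.n} (and, likewise, $\mathbb{P}_{k,n}^{\textrm{Indiv}}$ in \eqref{P_i|n}) depends on $\boldsymbol{\zeta}_n$ only through the quantities $\{\hat{\zeta}_{k,n}\}$, this yields $f_{1,n}(\alpha,\boldsymbol{\zeta}_n^{\dagger})=f_{1,n}(\alpha,\boldsymbol{\zeta}_n^{\star})$. (iii) Feasibility of $\boldsymbol{\zeta}_n^{\dagger}$: the two threshold constraints in \eqref{constraint3.n}, namely $\hat{\zeta}_{k,n}\le\alpha$ for $k\in[1:n]$ and $\hat{\zeta}_{k,n}\ge\alpha$ for $k\in[n+1:K]$, hold for $\boldsymbol{\zeta}_n^{\dagger}$ verbatim because $\hat{\zeta}_{k,n}^{\dagger}=\hat{\zeta}_{k,n}^{\star}$; for the power constraint \eqref{constraint2.n} one uses $\zeta_{k,n}^{\dagger}=\hat{\zeta}_{k,n}^{\star}\ge\zeta_{k,n}^{\star}>0$ (strict positivity is automatic, since a vanishing $\zeta_{k,n}$ would make the left-hand side of \eqref{constraint2.n} infinite), so each term $(\hat{r}_0+1)^{k-1}\hat{r}_0/\zeta_{k,n}^{\dagger}$ is no larger than its counterpart for $\boldsymbol{\zeta}_n^{\star}$, whence $\sum_{k=1}^K(\hat{r}_0+1)^{k-1}\hat{r}_0/\zeta_{k,n}^{\dagger}\le\sum_{k=1}^K(\hat{r}_0+1)^{k-1}\hat{r}_0/\zeta_{k,n}^{\star}\le P$. (iv) Hence $\boldsymbol{\zeta}_n^{\dagger}$ is feasible, attains the optimal value, and is non-decreasing, which proves the proposition.

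The step I expect to be the main obstacle is (iii), i.e., preserving feasibility — specifically the two-sided threshold constraints — and this is precisely why I would use the running-maximum replacement instead of an exchange argument. The most natural alternative is to swap an out-of-order adjacent pair $\zeta_{k,n}^{\star}>\zeta_{k+1,n}^{\star}$: this never increases any running maximum and never increases the consumed power, but for $k=n+1$ it can push $\hat{\zeta}_{n+1,n}$ strictly below $\alpha$ and so violate the constraint $\hat{\zeta}_{n+1,n}\ge\alpha$, meaning a plain exchange argument does not close. The replacement $\zeta_{k,n}\mapsto\hat{\zeta}_{k,n}$ sidesteps this because it leaves every $\hat{\zeta}_{k,n}$ — hence the objective and all threshold constraints — literally unchanged while only relaxing the power budget, and monotonicity of the new sequence is then automatic. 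It is worth noting in passing that the proposition is a ``without loss of optimality'' statement (not every optimal point need be monotone, e.g.\ when $n=0$ and $P$ is large), but the construction above shows that a monotone optimum always exists, which is all that is needed for the subsequent simplification of \eqref{new_problem3}.
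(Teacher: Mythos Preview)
Your proof is correct and is essentially the same idea as the paper's: replace each $\zeta_{k,n}$ by its running maximum $\hat{\zeta}_{k,n}$, which leaves every $\hat{\zeta}_{k,n}$ (hence the objective and the threshold constraints) unchanged while only relaxing the power budget \eqref{constraint2.n}. The paper phrases this as an iterative step (``set $\zeta_{2,n}=\zeta_{1,n}$ and repeat'') rather than a one-shot replacement, and it does not spell out the feasibility of \eqref{constraint3.n} as explicitly as you do, but the underlying argument is identical.
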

\begin{proof}
 { We first consider the case that $\zeta_{2,n}\leq\zeta_{1,n}$ for a fixed $\zeta_{1,n}$.
  { From the definition of  $\hat{\zeta}_{k,n}$ in Theorem \ref{theorem_cop} (i.e.,
  $\hat{\zeta}_{k,n}=\max\{\zeta_{1,n},\cdots,\zeta_{k,n}\}$),  we have $\mathbb{P}_{2,n}^{\textrm{Indiv}}=\mathbb{P}(|\tilde{h}_{2}|^2\leq \hat{\zeta}_{2,n})=\mathbb{P}(|\tilde{h}_{2}|^2\leq \max\{\zeta_{1,n},\zeta_{2,n}\})=\mathbb{P}(|\tilde{h}_{2}|^2\leq {\zeta}_{1,n})$  if $\zeta_{2,n}\leq\zeta_{1,n}$}, which means that, once $\zeta_{2,n}\leq\zeta_{1,n}$,
decreasing $\zeta_{2,n}$ cannot further decrease  $\mathbb{P}_{2,n}^{\textrm{Indiv}}$ nor increase
$f_{1,n}$ in \eqref{obj3.n}. In this case, we should set $\zeta_{2,n}=\zeta_{1,n}$,
   which requires less power for user $2$ (i.e., $P_{n,2}$)  
    compared to the choice  {$\zeta_{2,n}<\zeta_{1,n}$}, as is oblivious in \eqref{zeta}.
    Therefore, we can ignore the case $\zeta_{2,n}<\zeta_{1,n}$ and only consider the case $\zeta_{2,n}\geq\zeta_{1,n}$ without
  loss of optimality.
  Carrying out the above steps iteratively, the proposition is proved.} \footnote{Note that a similar proposition has been provided in \cite{timotheou2015fairness} to
solve a different optimization problem. However,  the  proof used here is different from the one in
\cite{timotheou2015fairness}.}
\end{proof}

Using Proposition \ref{poposition_zeta}, the problem in \eqref{new_problem3} can be transformed into
\begin{subequations}
\begin{align}
 \max_{\boldsymbol{\zeta}_{n}}\  &f_{2,n}(\alpha,\boldsymbol{\zeta}_{n})\triangleq
 \prod_{k=1}^n {\left(e^{-{\zeta}_{k,n}}-e^{-\alpha}\right)}
  \prod_{k=n+1}^K e^{-{\zeta}_{k,n}} \\
  \textrm{s.t. } &\eqref{constraint2.n}; \  {\zeta}_{k,n}\leq \alpha,\forall k\in[1:n];\
  {\zeta}_{k,n}\geq \alpha, \forall k\in[n+1:K];\\
  &\zeta_{k,n}\leq\zeta_{k+1,n},\ \forall k\in[1:K-1].
\end{align}
\end{subequations}

The objective function $f_{2,n}$ is still non-convex. However, by using   the natural logarithm of $f_{2,n}$,
 the problem in \eqref{new_problem3} (i.e., the $n$-th suboptimal problem of   problem (P1) in
\eqref{new_problem1} for a fixed $\alpha$) can be transformed into the following equivalent {\em convex} problem:
 \begin{subequations}
\begin{align}
  \textrm{(P1.$n$) } &\max_{
  \boldsymbol{\zeta}_n}\sum_{k=1}^n
  \ln\left({e^{-\zeta_{k,n}}-e^{-\alpha}}
  \right)-\sum_{k=n+1}^K {\zeta_{k,n}}\\
  &\textrm{s.t. } \sum_{k=1}^K\frac{(\hat{r}_0+1)^{k-1}\hat{r}_0}{\zeta_{k,n}}\leq P;\label{P1.n_C1}\\
   &\qquad\zeta_{1,n}\geq 0;\ \zeta_{n,n}\leq \alpha;\ \zeta_{n+1,n}\geq \alpha;\\
  &\qquad \zeta_{k,n}\leq\zeta_{k+1,n},\ \forall k\in[1:K-1].
\end{align}
\end{subequations}
One can calculate the  Hessian matrix of the objective function and the constraint in \eqref{P1.n_C1}
to verify that this problem is convex.
This convex optimization problem will be solved later in Section \ref{v} using corresponding numerical solvers, since a closed-form expression  for the optimal solution of problem (P1.$n$) is difficult to obtain.

Furthermore, the optimal value of $\alpha$ in   problem of (P1) in \eqref{new_problem1}
 can be found by applying  a one-dimensional  search.
It is worth pointing out that the optimal  $\alpha$ has a finite value.
{This is because the probability that all  users feed back the message ``0" goes to 1 (i.e.,
 $\mathbb{P}_K(\alpha)\rightarrow 1$)
  if $\alpha$ is sufficiently large,} which is equivalent to the case without
  CSI feedback. 


\subsection{Long-Term Power Constraint}\label{subsection_long}
\subsubsection{Approximation for High SNR}
Compared to problem  (P1),   problem  (P2) in \eqref{new_problem_long1} is more challenging,
since the decoupling approach used to solve   problem  (P1) is not applicable.
Here, in this subsection, we will focus on the high SNR approximation of
the objective function (i.e., $\mathbb{P}^{\textrm{Common}}$) in order  to simplify the problem.
Specifically, the objective function is first simplified for high SNR, the optimal solution of  this approximated problem is then obtained   for a fixed $\alpha$,
and finally { a  one-dimensional} search is used  to find the optimal value for $\alpha$.

Based on Propositions \ref{poposition_zeta1} and \ref{poposition_zeta},
 problem (P2) can be simplified as:
\begin{subequations}\label{new_problem_long2}
\begin{align}
 \textrm{(P3) }&\min_{\{\zeta_{k,n}\}}\sum_{n=0}^K  \mathbb{P}_n(\alpha)
      \label{obj_long2}
  f_{3,n}(\alpha,\boldsymbol{\zeta}_n)\\
  \textrm{s.t. }&\eqref{constraint_long1}\textrm{ and } \zeta_{k,n}\geq 0,k\in [1:n],n\in[1:K];
  \label{constraint_long20}\\& \zeta_{k,n}\geq \alpha,\
  \forall k\in[n+1:K],\ n\in[0:K-1];\label{constraint_long2}\\
  &\zeta_{k,n}\leq\zeta_{k+1,n},\ \forall k\in[1:K-1],n\in[0:K],\label{constraint_long22}
\end{align}
\end{subequations}
where $f_{3,n}(\alpha,\boldsymbol{\zeta}_n)\triangleq1-\prod_{k=1}^n \frac{\left[e^{-{\zeta}_{k,n}}-e^{-\alpha}\right]^+}
{1-e^{-\alpha}}\prod_{k=n+1}^K e^{-({\zeta}_{k,n}-\alpha)}$.
The following proposition shows that problem (P3)  can be approximately
transformed into a convex problem at high SNR.
\begin{Proposition}\label{poposition_P3}
At high SNR, problem (P3) in \eqref{new_problem_long2}  can be approximately
transformed into  convex problem (P4), which is defined as follows:
\begin{subequations}\label{P3}
\begin{align}
\textrm{(P4) }&\min_{\{\zeta_{k,n}\}} \sum_{n=0}^K\mathbb{P}_n(\alpha)\left[\sum_{k=1}^n
\frac{\zeta_{k,n}}{1-e^{-\alpha}}+
\sum_{k=n+1}^K(\zeta_{k,n}-\alpha)\right]\label{P3_obj}\\
  \textrm{s.t. }&\eqref{constraint_long1}\textrm{ and } \zeta_{k,n}\geq 0,k\in [1:n],n\in[1:K];\label{P3_C2}\\
     &\zeta_{k,n}\geq \alpha,k\in [n+1:K],\ n\in[0:K-1];\label{P3_C3}\\
  &  \zeta_{k,n}\leq\zeta_{k+1,n},\ k\in[1:K-1],n\in[0:K].\label{P3_C4}
\end{align}
\end{subequations}
\end{Proposition}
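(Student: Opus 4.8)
The plan is to obtain (P4) from (P3) by linearizing each term $f_{3,n}(\alpha,\boldsymbol{\zeta}_n)$ of the objective \eqref{obj_long2} about the point where every $\zeta_{k,n}$ sits at its lower bound — $\zeta_{k,n}=0$ for $k\in[1:n]$ and $\zeta_{k,n}=\alpha$ for $k\in[n+1:K]$ — and then checking that the linearized problem is convex. So the argument splits into three parts: (i) showing that, for a fixed $\alpha$, the optimal $\zeta_{k,n}$ of (P3) collapse onto this point as $P\to\infty$, so that the linearization is asymptotically exact at the optimum; (ii) computing that the linearization of $f_{3,n}$ is exactly the bracketed expression in \eqref{P3_obj}; and (iii) verifying convexity of (P4).

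For (i), I would first observe that each factor $\frac{[e^{-\zeta_{k,n}}-e^{-\alpha}]^+}{1-e^{-\alpha}}$ ($k\le n$) and $e^{-(\zeta_{k,n}-\alpha)}$ ($k>n$) is nonnegative and nonincreasing in its argument, so the product is nonincreasing and $f_{3,n}$ is nondecreasing in every $\zeta_{k,n}$; since the weights $\mathbb{P}_n(\alpha)$ in \eqref{obj_long2} are positive and each $f_{3,n}$ involves only $\boldsymbol{\zeta}_n$, the objective of (P3) is minimized by taking every $\zeta_{k,n}$ as small as the constraints allow. The only obstructions are the lower bounds in \eqref{constraint_long20}–\eqref{constraint_long2} and the single shared power budget \eqref{constraint_long1}; because the left-hand side of \eqref{constraint_long1} is finite as soon as all $\zeta_{k,n}$ are bounded away from $0$, a budget $P\to\infty$ eventually leaves that constraint slack on any such configuration, forcing the optimal $\zeta_{k,n}^\star$ to converge to $\alpha$ for $k\in[n+1:K]$ and to $0$ (at rate $\Theta(1/P)$) for $k\in[1:n]$, for every $n$ simultaneously. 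In particular $\zeta_{k,n}^\star<\alpha$ for $k\le n$ at high SNR, so the $[\cdot]^+$ in $f_{3,n}$ is inactive there and $f_{3,n}$ is smooth near the expansion point.

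For (ii), I would write $f_{3,n}=1-\prod_{k=1}^K(1-\epsilon_{k,n})$ with $\epsilon_{k,n}=\frac{1-e^{-\zeta_{k,n}}}{1-e^{-\alpha}}$ for $k\le n$ and $\epsilon_{k,n}=1-e^{-(\zeta_{k,n}-\alpha)}$ for $k>n$, note that all $\epsilon_{k,n}=o(1)$ by (i), drop the products of two or more such quantities to get $f_{3,n}=\sum_{k=1}^K\epsilon_{k,n}+o(\max_k\epsilon_{k,n})$, and then use $1-e^{-x}=x+o(x)$ to obtain $\epsilon_{k,n}=\frac{\zeta_{k,n}}{1-e^{-\alpha}}+o(\zeta_{k,n})$ for $k\le n$ and $\epsilon_{k,n}=(\zeta_{k,n}-\alpha)+o(\zeta_{k,n}-\alpha)$ for $k>n$. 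Hence $f_{3,n}(\alpha,\boldsymbol{\zeta}_n)\approx\sum_{k=1}^n\frac{\zeta_{k,n}}{1-e^{-\alpha}}+\sum_{k=n+1}^K(\zeta_{k,n}-\alpha)$, exactly the $n$-th bracket of \eqref{P3_obj}; substituting this into \eqref{obj_long2} — the weights $\mathbb{P}_n(\alpha)$ and the constraints \eqref{constraint_long20}–\eqref{constraint_long22} being left unchanged — produces (P4). For (iii), for a fixed $\alpha$ the objective \eqref{P3_obj} is an affine function of the $\zeta_{k,n}$, hence convex; the constraints \eqref{P3_C2}–\eqref{P3_C4} are linear; and the power constraint \eqref{constraint_long1} has convex left-hand side since $\zeta\mapsto 1/\zeta$ is convex on $\zeta>0$ and the coefficients $(\hat{r}_0+1)^{k-1}\hat{r}_0$ and $\mathbb{P}_n(\alpha)$ are positive. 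Thus (P4) is a convex program.

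The hard part will be part (i): making rigorous that the \emph{optimal} (not merely some feasible) $\zeta_{k,n}$ are driven to the lower-bound point as $P\to\infty$, and that this occurs for all $n$ at once even though the budget \eqref{constraint_long1} couples the $(K+1)$ vectors $\boldsymbol{\zeta}_n$ — for instance via a complementary-slackness argument on the KKT conditions of (P3). Once that is established, the Taylor expansions in (ii) and the convexity verification in (iii) are routine.
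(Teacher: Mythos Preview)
Your proposal is correct, and the Taylor-expansion step (ii) and the convexity check (iii) match the paper's closing computations almost line for line. The route you take in step (i), however, is genuinely different from the paper's and in some ways cleaner.

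The paper does \emph{not} argue, as you do, that for a fixed $\alpha$ the monotonicity of $f_{3,n}$ in each $\zeta_{k,n}$ drives the optimizer of (P3) toward the lower-bound corner as $P\to\infty$. Instead it imports the diversity-gain analysis of Lemma~\ref{theorem_diversity_long}: knowing already that the optimal threshold obeys $\alpha\doteq P^{-1}$ and that the optimal COP satisfies $\mathbb{P}^{\textrm{Common}}\doteq P^{-2}$, the paper back-solves what scaling each $\zeta_{k,n}$ must have (separately for $n=0,1,2$ and $n\ge 3$) so that every term $\mathbb{P}_n(\alpha)\,\mathbb{P}_n^{\textrm{Common}}$ either meets the $P^{-2}$ target or is asymptotically negligible at negligible power cost. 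Those scalings are then used to justify $\zeta_{k,n}/(1-e^{-\alpha})\ll 1$ and $\zeta_{k,n}-\alpha\ll 1$, after which the linearization proceeds as in your (ii). So the paper's argument lives in the regime where $\alpha$ itself shrinks with $P$, and it leans on Lemma~\ref{theorem_diversity_long} as an external input; your argument keeps $\alpha$ fixed, is self-contained, and justifies the approximation for every $\alpha$ in the subsequent one-dimensional search rather than only at the asymptotically optimal one. The trade-off is that the paper's argument says something about the jointly optimal $(\alpha,\{\zeta_{k,n}\})$, whereas yours (as you note in your last paragraph) still needs a short KKT or squeeze argument to turn the monotonicity intuition into a rigorous statement about the \emph{optimal} $\zeta_{k,n}^\star$; once that is supplied, your route is the more elementary of the two.
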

\begin{proof}
 Please refer to Appendix \ref{proof_proposition_P3}.
\end{proof}
\begin{Remark}
  Although the approximation in Proposition \ref{poposition_P3} is obtained for  high SNR,
  even in the moderate SNR regime, the resulting suboptimal solution
    can still provide a significant performance gain
  compared to  benchmark schemes, as shown later in Section \ref{v},
\end{Remark}
\subsubsection{Optimal Solution of Problem (P4)}
Problem (P4) is a convex optimization problem for a given $\alpha$.
 To further simplify this problem, we define a new problem as follows.
\begin{Definition}
A new convex optimization problem, denoted by (P5), is obtained by removing
the last constraint in \eqref{P3_C4} of problem (P4).
 \end{Definition}

 We will show  in Proposition \ref{poposition_zeta2} that  problems (P4) and (P5) are exactly equivalent, i.e.,
 the optimal solution of problem (P5) automatically satisfies  constraint \eqref{P3_C4}.
  The Lagrangian function of the optimal solution for  problem (P5) is given by
\begin{align}
 & \mathcal{L}(\{\zeta_{k,n}\},w,\{\lambda_{k,n}\})\triangleq\mathbb{P}_n(\alpha)\left[\sum_{k=1}^n
\frac{\zeta_{k,n}}{1-e^{-\alpha}}+
\sum_{k=n+1}^K(\zeta_{k,n}-\alpha)\right] \nonumber\\
&+\omega\left(\sum_{n=0}^{K}\mathbb{P}_n(\alpha) \sum_{k=1}^K\frac{(\hat{r}_0+1)^{k-1}\hat{r}_0}{\zeta_{k,n}}
  -{P}\right)\!-\!\sum_{n=1}^K \sum_{k=1}^n \lambda_{k,n}\zeta_{k,n}\!-\!
  \sum_{n=0}^{K-1}\sum_{k=n+1}^K \lambda_{k,n}(\zeta_{k,n}\!-\!\alpha),
\end{align}
where $\lambda_{k,n},\omega\geq 0$ are Lagrange multipliers. { The
 Karush-Kuhn
Tucker (KKT) conditions are given by}
\begin{align}\label{partialL}
  \frac{\partial \mathcal{L}}{\partial \zeta_{k,n}}\!=\!\left\{\begin{array}{ll}
   \frac{\mathbb{P}(\alpha) }{1-e^{-\alpha}}\!-\!\frac{\omega\mathbb{P}(\alpha) (\hat{r}_0+1)^{k-1}
   \hat{r}_0}{\zeta_{k,n}^2}\!-\!\lambda_{k,n}=0, &\textrm{if }k\in[1:n],n\in[1:K];\\
    {\mathbb{P}(\alpha) }-\frac{\omega\mathbb{P}(\alpha) (\hat{r}_0+1)^{k-1}
   \hat{r}_0}{\zeta_{k,n}^2}\!-\!\lambda_{k,n}=0, &\textrm{if }k\in[n\!+\!1:K],n\in[0:K\!-\!1].
  \end{array}\right.
\end{align}
The complementary slackness conditions can be expressed as follows:
\begin{subequations}
\begin{align}
&\omega\left(\sum_{n=0}^{K}\mathbb{P}_n(\alpha)
\sum_{k=1}^K\frac{(\hat{r}_0+1)^{k-1}\hat{r}_0}{\zeta_{k,n}}-P\right)=0\label{kkt0}\\
 & \lambda_{k,n}\zeta_{k,n}=0\textrm{ if }k\in[1:n],n\in[1:K];\label{kkt1}\\
 & \lambda_{k,n}(\zeta_{k,n}\!-\!\alpha)=0 \textrm{ if }k\in[n+1:K],n\in[0:K-1].  \label{kkt2}
\end{align}
\end{subequations}
From \eqref{partialL} and \eqref{kkt0}-\eqref{kkt2}, we have $\omega>0$,
 $\lambda_{k,n}=0$, for $k\in[1:n],\ n\in[1:K]$,
and the optimal $\zeta_{k,n}$ can be expressed as follows:
\begin{align}\label{P4_zeta}
  \zeta_{k,n}=\left\{\begin{array}{ll}
  \sqrt{\omega (\hat{r}_0+1)^{k-1}\hat{r}_0(1-e^{-\alpha})},
  &\textrm{if }k\in[1:n],n\in[1:K];\\
 \sqrt{\frac{\omega \mathbb{P}(\alpha)(\hat{r}_0+1)^{k-1}\hat{r}_0}{\mathbb{P}(\alpha)-\lambda_{k,n}}}  , &\textrm{if }k\in[n+1:K],n\in[0:K-1].
  \end{array}\right.
  \end{align}
 The Lagrange multipliers are difficult to obtain directly.
Hence, we first  study the  properties
of the optimal power allocation. The following proposition
 will demonstrate that the constraint in \eqref{P3_C4} is always satisfied.
 \begin{Proposition}\label{poposition_zeta2}
   The optimal solution  of problem (P5) in \eqref{P4_zeta} satisfies
   $\zeta_{k,n}\leq\zeta_{k+1,n}$, $\forall k\in[1:K-1]$, $n\in[0:K]$, i.e.,
   problems (P4) and (P5) are equivalent.
 \end{Proposition}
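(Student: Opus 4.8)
The plan is to verify the chain $\zeta_{1,n}\le\zeta_{2,n}\le\cdots\le\zeta_{K,n}$ directly from the closed-form solution \eqref{P4_zeta}, separating three regimes for a consecutive pair $(k,k+1)$: (i) $k,k+1\in[1:n]$ (both in $\mathcal{G}_{0|n}$); (ii) $k=n$, $k+1=n+1$ (the boundary between the two groups); and (iii) $k,k+1\in[n+1:K]$ (both in $\mathcal{G}_{1|n}$). The structural facts I will lean on are: $\omega>0$ (established right after \eqref{kkt2}); $\hat{r}_0+1=2^{r_0}\ge 1$, so the geometric weights $(\hat{r}_0+1)^{k-1}$ are non-decreasing in $k$; and $0\le 1-e^{-\alpha}\le 1$.

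Regime (i) is immediate: from the first branch of \eqref{P4_zeta}, $\zeta_{k+1,n}/\zeta_{k,n}=\sqrt{\hat{r}_0+1}\ge 1$. For regime (iii), I would first recast the second branch of \eqref{P4_zeta} using complementary slackness \eqref{kkt2}: for each $k\in[n+1:K]$, either $\lambda_{k,n}=0$, whence $\zeta_{k,n}=\sqrt{\omega(\hat{r}_0+1)^{k-1}\hat{r}_0}$ and feasibility of \eqref{P3_C3} forces this value to be $\ge\alpha$; or $\lambda_{k,n}>0$, whence \eqref{kkt2} gives $\zeta_{k,n}=\alpha$ while the closed-form expression forces $\sqrt{\omega(\hat{r}_0+1)^{k-1}\hat{r}_0}<\alpha$. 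In both cases $\zeta_{k,n}=\max\bigl\{\alpha,\ \sqrt{\omega(\hat{r}_0+1)^{k-1}\hat{r}_0}\bigr\}$, i.e. the pointwise maximum of a constant and a non-decreasing sequence, which is itself non-decreasing in $k$; hence $\zeta_{k,n}\le\zeta_{k+1,n}$ throughout $\mathcal{G}_{1|n}$.

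For regime (ii) I would chain the two pieces: $\zeta_{n,n}=\sqrt{\omega(\hat{r}_0+1)^{n-1}\hat{r}_0(1-e^{-\alpha})}\le\sqrt{\omega(\hat{r}_0+1)^{n-1}\hat{r}_0}\le\sqrt{\omega(\hat{r}_0+1)^{n}\hat{r}_0}\le\max\bigl\{\alpha,\ \sqrt{\omega(\hat{r}_0+1)^{n}\hat{r}_0}\bigr\}=\zeta_{n+1,n}$, where the first step uses $1-e^{-\alpha}\le 1$ and the second uses $\hat{r}_0+1\ge 1$. Running these three regimes over all $k\in[1:K-1]$ and all $n\in[0:K]$ (with the obvious degeneracies when $n=0$ or $n=K$, where $\mathcal{G}_{0|n}$ resp. $\mathcal{G}_{1|n}$ is empty) shows that the omitted constraint \eqref{P3_C4} is automatically met by the optimal solution of (P5). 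Since (P5) is a relaxation of (P4) obtained by deleting a single constraint, this solution is feasible, hence optimal, for (P4) as well, which establishes the equivalence claimed.

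The only genuinely delicate step is the characterization in regime (iii): instead of trying to compare the Lagrange multipliers $\lambda_{k,n}$ across indices, one must invoke complementary slackness to collapse each $\mathcal{G}_{1|n}$-variable to the form $\max\{\alpha,\sqrt{\cdots}\}$. Once that identity is in hand, the whole argument reduces to the monotonicity of the weights $(\hat{r}_0+1)^{k-1}$ together with the elementary bound $1-e^{-\alpha}\le 1$.
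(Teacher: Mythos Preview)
Your proof is correct and follows essentially the same three-case decomposition as the paper (both in $[1:n]$, the boundary $k=n$, and both in $[n+1:K]$), relying on the same ingredients: the KKT form \eqref{P4_zeta}, complementary slackness \eqref{kkt2}, $\omega>0$, $\hat{r}_0>0$, and $0<1-e^{-\alpha}<1$. The only cosmetic difference is that in regime~(iii) you condense the two subcases into the single identity $\zeta_{k,n}=\max\{\alpha,\sqrt{\omega(\hat{r}_0+1)^{k-1}\hat{r}_0}\}$ and read off monotonicity directly, whereas the paper treats $\lambda_{k,n}>0$ and $\lambda_{k,n}=0$ separately; the substance is the same.
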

\begin{proof}
  Please refer to Appendix \ref{proof_theorem_longpower}.
\end{proof}

   By using  this proposition and also constraint \eqref{P3_C3}, one can observe that, if $\zeta_{k,n}=\alpha$
   for a given $k\in[n+1:K]$ and $n\in[0:K-1]$, $\zeta_{l,n}=\alpha$ also holds $\forall l\in[n+1:k]$.
   Hence, we can define a series of integers representing the number of         $\zeta_{k,n}$'s that are equal to $\alpha$  as follows.
   \begin{Definition}\label{definition_in}
   For each $n$, denote  $i_n\in[0:K-n]$ as the number of $\zeta_{k,n}$'s whose values are equal to $\alpha$, i.e.,
   $\zeta_{k,n}=\alpha$, for $k\in[n+1:n+i_n]$ and $\zeta_{k,n}>\alpha$  for $k\in[n+i_n+1:K]$.
   \end{Definition}

   Once all $i_n$'s are given, the optimal solution of the $\zeta_{k,n}$'s
can be easily obtained as follows.
\begin{theorem}\label{theorem_power_long}
 If   all integers $i_n\in[0:K-n]$ defined in Definition \ref{definition_in} are known,
 the optimal solution of problems (P4) and (P5) can be expressed as follows:
    \begin{align}\label{eq_theorem_long}
  \zeta_{k,n}=\left\{\begin{array}{ll}
  \sqrt{\omega (\hat{r}_0+1)^{k-1}\hat{r}_0(1-e^{-\alpha})},
  &\textrm{if }k\in[1:n],\\
  \qquad\alpha,&\textrm{if }k\in[n+1:n+i_n],\\
 \sqrt{{\omega (\hat{r}_0+1)^{k-1}\hat{r}_0}}  , &\textrm{if }k\in[n+i_n+1:K],
  \end{array}\right.
  \end{align}
 for each $n\in[0:K]$, where \begin{align}\label{omega}
\sqrt{w}=\frac{\sum_{n=0}^K\mathbb{P}_n(\alpha)A_n(i_n)}
  {P-\sum_{n=0}^K\mathbb{P}_n(\alpha)B_n (i_n)},\end{align}
  and \begin{align}
     A_n(i_n)&\triangleq \sum_{k=1}^n
   \sqrt{\frac{{(\hat{r}_0+1)^{k-1}\hat{r}_0}}{{1-e^{-\alpha}}}}
   +\sum_{k=n+i_n+1}^K \sqrt{(\hat{r}_0+1)^{k-1}\hat{r}_0},\label{An}\\
  B_n(i_n)&\triangleq  \sum_{k=n+1}^{n+i_n}\frac{{(\hat{r}_0+1)^{k-1}\hat{r}_0}}{\alpha}.
  \label{Bn}
  \end{align}
  Note that $A_0(i_0)\triangleq 0$ if $i_0=K-n$, and $B_n(i_n)\triangleq0$ if $i_n=0$, $\forall n\in[0:K]$.
 \end{theorem}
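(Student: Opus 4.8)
The plan is to build directly on the KKT system already written down for problem (P5): invoking Definition~\ref{definition_in} together with Proposition~\ref{poposition_zeta2}, the Lagrange multipliers $\lambda_{k,n}$ can be eliminated, leaving $\omega$ as the only unknown, which is then pinned down by the (active) power constraint.

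First I would recall that, by \eqref{P4_zeta}, any KKT point of (P5) already satisfies $\zeta_{k,n}=\sqrt{\omega(\hat{r}_0+1)^{k-1}\hat{r}_0(1-e^{-\alpha})}$ for $k\in[1:n]$ and $\zeta_{k,n}=\sqrt{\omega\mathbb{P}_n(\alpha)(\hat{r}_0+1)^{k-1}\hat{r}_0/(\mathbb{P}_n(\alpha)-\lambda_{k,n})}$ for $k\in[n+1:K]$. For $k\in[n+1:n+i_n]$ Definition~\ref{definition_in} fixes $\zeta_{k,n}=\alpha$ directly. For $k\in[n+i_n+1:K]$ Definition~\ref{definition_in} gives $\zeta_{k,n}>\alpha$, so complementary slackness \eqref{kkt2} forces $\lambda_{k,n}=0$ and the stationarity expression collapses to $\zeta_{k,n}=\sqrt{\omega(\hat{r}_0+1)^{k-1}\hat{r}_0}$. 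This is exactly the three-branch formula \eqref{eq_theorem_long}, with $\omega$ yet to be determined; Proposition~\ref{poposition_zeta2} ensures that the ordering constraint \eqref{P3_C4} holds, so the candidate is feasible for both (P4) and (P5).

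Next I would fix $\omega$. The earlier multiplier analysis established $\omega>0$, so by \eqref{kkt0} the power constraint \eqref{constraint_long1} is tight. Substituting \eqref{eq_theorem_long} into $\sum_{k=1}^K(\hat{r}_0+1)^{k-1}\hat{r}_0/\zeta_{k,n}$ and splitting the sum over the three index ranges, the first and third ranges each produce a factor $1/\sqrt{\omega}$ times $\sqrt{(\hat{r}_0+1)^{k-1}\hat{r}_0/(1-e^{-\alpha})}$ and $\sqrt{(\hat{r}_0+1)^{k-1}\hat{r}_0}$ respectively, while the middle range (where $\zeta_{k,n}=\alpha$) produces $(\hat{r}_0+1)^{k-1}\hat{r}_0/\alpha$; collecting terms gives $\sum_{k=1}^K(\hat{r}_0+1)^{k-1}\hat{r}_0/\zeta_{k,n}=A_n(i_n)/\sqrt{\omega}+B_n(i_n)$ with $A_n,B_n$ as in \eqref{An}--\eqref{Bn}. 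Inserting this into the tight constraint $\sum_{n=0}^K\mathbb{P}_n(\alpha)(A_n(i_n)/\sqrt{\omega}+B_n(i_n))=P$ and solving the resulting linear equation for $1/\sqrt{\omega}$ yields \eqref{omega}.

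The only items needing care are the degenerate index ranges flagged in the statement --- $i_n=0$ (empty middle branch, $B_n=0$) and $i_0=K-n$ with $n=0$ (empty first and third sums, $A_0=0$) --- which are covered by the stated conventions, together with the implicit requirement that the denominator $P-\sum_n\mathbb{P}_n(\alpha)B_n(i_n)$ in \eqref{omega} be positive, i.e.\ that enough power remains after funding all the $\zeta_{k,n}=\alpha$ entries; otherwise the given choice of $\{i_n\}$ is infeasible. I do not anticipate a genuine obstacle: the theorem is essentially a bookkeeping consequence of the KKT conditions already derived, the only real step being the careful decomposition of the inner power sum into its three pieces.
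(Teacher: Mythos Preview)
Your proposal is correct and follows essentially the same route as the paper: use Definition~\ref{definition_in} together with complementary slackness \eqref{kkt2} to force $\lambda_{k,n}=0$ on the third range and hence obtain the three-branch formula from \eqref{P4_zeta}, then use $\omega>0$ to tighten the power constraint and substitute to solve for $\sqrt{\omega}$. Your write-up is in fact more explicit than the paper's in spelling out the decomposition of the inner sum into $A_n(i_n)/\sqrt{\omega}+B_n(i_n)$ and in noting the degenerate-range conventions and the implicit feasibility requirement on the denominator.
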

\begin{proof}
Since $\zeta_{k,n}>\alpha$ if $k\in[n+i_n+1:K]$ as shown in Definition \ref{definition_in},
 { we have $\lambda_{k,n}=0$ for $k\in[n+i_n+1:K]$ as shown in \eqref{kkt2}}.
 Hence, from \eqref{P4_zeta}, the expression for $\zeta_{k,n}$ in \eqref{eq_theorem_long}
   can be obtained. Moreover, since $\omega>0$ in \eqref{kkt0}, we have
   \begin{align}\label{omega_P}\sum_{n=0}^{K}\mathbb{P}_n(\alpha)
\sum_{k=1}^K\frac{(\hat{r}_0+1)^{k-1}\hat{r}_0}{\zeta_{k,n}}=P.\end{align}
Substituting {the $\zeta_{k,n}$ in  \eqref{eq_theorem_long}}
   into the above equality, we obtain $\omega$ as shown in \eqref{omega}.
\end{proof}

\begin{Remark}
  Theorem \ref{theorem_power_long} shows that the optimal solution of
  $\{\zeta_{1,n},\cdots,\zeta_{K,n}\}$ is in the form of two increasing  geometric progressions
  and some constant $\alpha$ between them. Interestingly, parameter $n$ which represents    the feedback event $N=n$ only affects
  the lengths of the two  geometric progressions, but does not affect the value of the elements.
\end{Remark}

\subsubsection{Search Algorithm for $\{i_n^*\}$}
  The work left is to determine the unique integer sequence, denoted by $\{i_n^*\}$, such that
   all complementary slackness conditions are satisfied. We know that $\lambda_{k,n}=0$ for $k\in[1:n]$,
    so we only need to choose   $\{i_n^*\}$ such that
  \begin{align} \label{kkt_constraint} \textrm{$\lambda_{k,n}\geq 0$
    for $k\in[n+1:n+i_n^*]$ and $\zeta_{k,n}>\alpha$  for $k\in[n+i_n^*+1:K]$.}\end{align}
   Note that, given $\{i_n\}$,   since $\zeta_{k,n}^{(t)}=\alpha$ for $k\in[n+1:n+i_n]$ in \eqref{P4_zeta},
    $\lambda_{k,n}$ can be obtained  as
    \begin{align}\label{lambda}\lambda_{k,n}=\mathbb{P}_n(\alpha)
   \left(1-\frac{\omega(\hat{r}_0+1)^{k-1}\hat{r}_0}
  {\alpha^2}\right), \ k\in[n+1:n+i_n].\end{align}

  Unfortunately, a closed-form solution for the $i_n^*$ does not exist. Hence, we design an
  efficient  iterative algorithm to find $\{i_n^*\}$, as summarized in Algorithm I.
 Specifically, the search starts from $i_n^{(1)}=0$, $\forall n\in[0:K]$, and the main idea is to
  narrow down the search range of a certain number of $i_n^*$'s in  each iteration,
  by enlarging the lower bounds on these $i_n^*$'s.

 { The following theorem  ensures that
 the unique sequence $\{i_n^*\}$ can be found by the proposed algorithm, i.e., Algorithm I converges.}
 \begin{theorem}\label{theorem_algorithm}
The strategy proposed in Algorithm I, updating each
  $i_n^{(t)}$ satisfying   $\zeta_{n+i_n^{(t)}+1,n}^{(t)}\leq\alpha$ as
 $i_n^{(t+1)}=\arg\max_{i\in\left[i_n^{(t)}:K-n\right]}
  \{i:\zeta_{n+i}^{(t)}\leq \alpha\}$, guarantees that $\{i_n^*\}$ must be found.
 \end{theorem}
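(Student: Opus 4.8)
The plan is to establish three facts: (a) Algorithm I always generates a well-defined, positive multiplier $\omega^{(t)}$ and terminates after finitely many iterations; (b) the sequence $\{\omega^{(t)}\}$ is non-increasing; and (c) the tuple reached at termination satisfies the full KKT system \eqref{partialL}--\eqref{kkt2} of problem (P5). Since (P5) is convex and, by Proposition \ref{poposition_zeta2}, equivalent to (P4), the KKT conditions are then necessary and sufficient for global optimality; and as the optimum of (P4) is unique (a linear objective subject to the strictly convex, active power constraint \eqref{omega_P}), its number of block-$n$ coordinates equal to $\alpha$ is precisely $i_n^*$, so the sequence $\{i_n^{(t)}\}$ returned by the algorithm must coincide with $\{i_n^*\}$.

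For (a): the update rule only enlarges each $i_n^{(t)}$, and $i_n^{(t)}\le K-n$, so the algorithm can perform only finitely many updates and stops. Using \eqref{eq_theorem_long}, the left-hand side of the power balance \eqref{omega_P} for a fixed configuration $\{i_n\}$ equals $\sum_{n=0}^{K}\mathbb{P}_n(\alpha)\big(A_n(i_n)/\sqrt{\omega}+B_n(i_n)\big)$, a strictly decreasing function of $\omega$ because $\sum_n\mathbb{P}_n(\alpha)A_n(i_n)>0$ for every finite $\alpha>0$ (indeed $A_n(i_n)\ge\sqrt{\hat{r}_0/(1-e^{-\alpha})}>0$ for $n\ge1$ and $\mathbb{P}_1(\alpha)>0$, independently of $i_n$); hence \eqref{omega} has a unique solution whenever its denominator is positive, and positivity of the denominator is maintained automatically as a by-product of step (b) below.

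For (b), which I expect to be the main obstacle: when the rule moves the user of index $k=n+i$ from the third block into the second, it does so exactly because $\zeta_{n+i,n}^{(t)}=\sqrt{\omega^{(t)}(\hat{r}_0+1)^{k-1}\hat{r}_0}\le\alpha$. Writing $c_k=(\hat{r}_0+1)^{k-1}\hat{r}_0$, this reads $\sqrt{c_k\omega^{(t)}}\le\alpha$, i.e.\ $c_k/\alpha\le\sqrt{c_k}/\sqrt{\omega^{(t)}}$; therefore the contribution $c_k/\zeta_{k,n}$ of that user to the power sum \emph{does not increase} when the reassignment (which replaces $\zeta_{k,n}=\sqrt{c_k\omega^{(t)}}$ by $\zeta_{k,n}=\alpha$) is carried out at the old value $\omega=\omega^{(t)}$. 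Summing over all users reassigned in the step, the left-hand side of \eqref{omega_P} evaluated at $\omega^{(t)}$ with the new configuration $\{i_n^{(t+1)}\}$ is $\le P$, and it is $<P$ if any update took place; in particular its $B$-part $\sum_n\mathbb{P}_n(\alpha)B_n(i_n^{(t+1)})$ is $<P$, so the denominator in \eqref{omega} stays positive and $\omega^{(t+1)}>0$ is well defined. Finally, since that left-hand side is strictly decreasing in $\omega$ and takes a value $\le P$ at $\omega^{(t)}$, restoring it to $P$ forces $\omega^{(t+1)}\le\omega^{(t)}$ (strictly, unless the iteration is the last). The subtlety is precisely that it is not a priori clear which way $\omega$ moves; the argument hinges on using the activation test $\zeta\le\alpha$ to sign the perturbation of the power sum.

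For (c), at the terminal iteration $t^*$: for $k\in[1:n]$ we have $\zeta_{k,n}>0$, so \eqref{kkt1} gives $\lambda_{k,n}=0$, consistent with \eqref{partialL}. Because no update is triggered, for each $n$ with $i_n^*<K-n$ we have $\zeta_{n+i_n^*+1,n}>\alpha$, and since the third-block expression in \eqref{eq_theorem_long} is increasing in $k$ this yields $\zeta_{k,n}>\alpha$ for all $k\in[n+i_n^*+1:K]$, so constraint \eqref{P3_C3} holds and \eqref{kkt2} forces $\lambda_{k,n}=0$ there. For $k\in[n+1:n+i_n^*]$, the user entered the second block at some earlier iteration $\tau\le t^*$ with $\sqrt{\omega^{(\tau)}c_k}\le\alpha$; by the monotonicity from (b), $\omega^{(t^*)}c_k\le\omega^{(\tau)}c_k\le\alpha^2$, hence $\lambda_{k,n}=\mathbb{P}_n(\alpha)\big(1-\omega^{(t^*)}c_k/\alpha^2\big)\ge0$ by \eqref{lambda}, giving dual feasibility. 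Stationarity \eqref{partialL} and complementary slackness \eqref{kkt0}--\eqref{kkt2} hold by the construction of \eqref{eq_theorem_long}--\eqref{omega}; $\omega^{(t^*)}>0$; and the ordering \eqref{P3_C4} is immediate from the two-geometric-progression form of \eqref{eq_theorem_long} (or from Proposition \ref{poposition_zeta2}). Therefore the returned tuple solves the KKT system of the convex problem (P5), hence is its unique optimum---and, via Proposition \ref{poposition_zeta2}, that of (P4)---and $\{i_n^{(t^*)}\}=\{i_n^*\}$, as claimed.
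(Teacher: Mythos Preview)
Your proof is correct and takes a genuinely different route from the paper's. The paper argues by maintaining the invariant $i_n^{(t)}\le i_n^*$ throughout: it first proves (Proposition~\ref{poposition_omega}) that $\omega$ moves in opposite directions depending on whether one enlarges an $i_n^{(t)}$ from the set $\mathcal{N}_1^{(t)}=\{n:\zeta_{n+i_n^{(t)}+1,n}^{(t)}>\alpha\}$ or from $\mathcal{N}_2^{(t)}=\{n:\zeta_{n+i_n^{(t)}+1,n}^{(t)}\le\alpha\}$, and then uses two reduction-to-absurdity arguments (Propositions~\ref{proposition_prin1} and~\ref{proposition_prin2}) to show that the greedy rule never overshoots $i_n^*$. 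In contrast, you bypass the invariant and the contradiction arguments entirely: you prove only the one direction of $\omega$-monotonicity that the algorithm actually uses, and then verify the full KKT system directly at termination, the key step being that $\omega^{(t^*)}\le\omega^{(\tau)}$ for the iteration $\tau$ at which index $k$ was absorbed into the second block forces $\lambda_{k,n}\ge0$ via \eqref{lambda}. Your argument is shorter and more self-contained; the paper's gives the additional structural information that $i_n^{(t)}\le i_n^*$ at every step, which your approach does not need but also does not yield. One minor slip: your claim ``it is $<P$ if any update took place'' should read ``$\le P$'' (equality is possible when $\sqrt{\omega^{(t)}c_k}=\alpha$ exactly), but this does not affect the conclusion since the strict positivity of the $A$-part still forces the $B$-part to be strictly below $P$, keeping the denominator of \eqref{omega} positive.
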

   \begin{proof}
   Please  refer to Appendix \ref{proof_theorem_algorithm}.
   \end{proof}


  \begin{table}
\hrule
\vspace{1mm}
\noindent {\bf Algorithm I}: Proposed search for $\{i_n\}$ defined in Definition \ref{definition_in}. \label{Table: Table I}
\vspace{0mm}
\hrule

\begin{enumerate}
   \item Initialize $t=1$, $i_n^{(1)}=0$ for $ n\in[0:K]$, and $\lambda_{k,n}^{(1)}=0$ for $k\in[n+1:K]$, $n\in[0:K]$.
  \item {The $t$-th iteration:}
 \begin{enumerate} 
  \item Update $\omega^{(t)}$, $\lambda_{k,n}^{(t)}$, and $\zeta_{k,n}^{(t)}$  
     in  \eqref{omega}, \eqref{lambda}, and \eqref{eq_theorem_long}, respectively.
   \item If $i_n^{(t)}=K-n$ or $\zeta_{n+i_n^{(t)}+1,n}^{(t)}>\alpha$, $\forall n\in[0:K]$, break the loop and the algorithm ends.
   \item { Else, for each $n$ satisfying  $\zeta_{n+i_n^{(t)}+1,n}^{(t)}\leq\alpha$, set $i_n^{(t+1)}$ as $$i_n^{(t+1)}=\arg\max_{i\in\left[i_n^{(t)}+1:K-n\right]}
   \{i:\zeta_{n+i,n}^{(t)}\leq \alpha\},$$ whereas,
   for each $n$ satisfying  $\zeta_{n+i_n^{(t)}+1}^{(t)}>\alpha$,  set $i_n^{(t+1)}$ as $
   i_n^{(t+1)}=i_n^{(t)}$.}

   \end{enumerate}
   \item Update $t=t+1$ and repeat Step 2) until $\{i_n^*\}$ is found.
\end{enumerate}
\hrule
\end{table}

    According to \eqref{eq_theorem_long}, $\zeta_{k,0}^{(t)}=\zeta_{k,n}^{(t)}$, $\forall k\in[n+1]$, $n\in[0:K-1]$.
    Thus, according to Step 2-c in Algorithm I, we obtain  $i_n^{(t)}=i_0^{(t)}-n$ if $ n\in[0:i_0^{(t)}-1]$
    and $i_n^{(t)}=0$, otherwise. {Since $i_0^{(t)}\in[0:K]$,  at most $K+1$ iterations  are required
      to find $\{i_n^*\}$}, which means that the proposed algorithm  enjoys  low complexity compared to
      an exhaustive search which would have complexity $O((K+1)!)$.

\section{Numerical Results}\label{v}
In this section, computer simulation  results are provided to evaluate the outage performance of the considered NOMA scheme
with one-bit feedback.

\subsection{Benchmark Schemes}\label{subsection_benchmark}
Some   benchmark transmission and power allocation schemes  are considered as explained in the following.
\subsubsection{TDMA  Scheme}
{The first benchmark scheme is  TDMA transmission with one-bit feedback since it is
equivalent to any orthogonal multiple access scheme \cite[Sec. 6.1.3]{tse2005fundamentals}.}
For TDMA transmission, assume that each fading block is equally divided into $K$ time slots, and   user
$k$ is served during the $k$-th time slot. The power allocated to   user $k$ is denoted by $P_{k,n}^T$
for each event $N=n$, where $N=n$ is defined in Definition \ref{Defini_event} based on the feedback sequence.
 The short-term and long-term power constraints in TDMA are $\frac{1}{K}\sum_{k=1}^KP_{k,n}^T\leq P$
and $\frac{1}{K}\sum_{n=0}^K\mathbb{P}(N=n)\sum_{k=1}^KP_{k,n}^T\leq P$, respectively. Furthermore, redefine  $\{\zeta_{k,n}\}$
in \eqref{zeta} as $\zeta_{k,n}=\frac{2^{Kr_0}-1}{P}$. 
The  short-term and long-term power constraints can be rewritten as follows:
\begin{align}
 & \frac{2^{Kr_0}-1}{K}\sum_{k=1}^K\frac{1}{\zeta_{k,n}}\leq P,\ \forall n\in[0:K]; \label{TDMA_short}\\
  & \frac{2^{Kr_0}-1}{K} \sum_{n=0}^K\mathbb{P}_n(\alpha)\sum_{k=1}^K\frac{1}{\zeta_{k,n}}\leq {P} ,
  \label{TDMA_long}
\end{align}
respectively. Now, similar to  problems (P1) an (P2) in \eqref{new_problem1} and \eqref{new_problem_long1},
 one can formulate two power allocation problems for TDMA transmission under  short-term and long-term power constraints as shown  in \eqref{TDMA_short} and \eqref{TDMA_long}, respectively.
We can solve the two  new problems using similar approaches as in Section \ref{iv}. The details are omitted here
due to space limitations.
\subsubsection{Fixed NOMA} In order to show the benefits of the proposed power allocation schemes,
  NOMA with fixed power allocation using one-bit feedback is used as the second benchmark scheme.
   {Due to its simplicity, fixed NOMA has been adopted in many relevant works (e.g.,
   \cite{ding2014performance,liu2015cooperative}).}
   Specifically, we also utilize the NOMA
transmission scheme  in Section \ref{ii}, but fix the power allocation as follows:
under the short-term power constraint, we let
$\zeta_{k,n}=\frac{(\hat{r}_0+1)^K-1}{P}$, $\forall k\in[1:K]$, $n\in[0:K]$;
under the long-term power constraint we let $\zeta_{k,n}=\frac{[(\hat{r}_0+1)^K-1](K+1)
\mathbb{P}_n(\alpha)}{P}$, $\forall k\in[1:K]$, $n\in[0:K]$. Note that such  power allocation
schemes have  been utilized in Appendices \ref{proof_theorem_short} and \ref{proof_theorem_long}
to prove Lemmas \ref{theorem_diversity_short} and \ref{theorem_power_long}, respectively.
The optimal $\alpha$ is also obtained via a one-dimensional search, for fairness of comparison.

\subsubsection{NOMA without Feedback} In order to show the benefits of using one-bit feedback,
 the third benchmark scheme is NOMA without CSI feedback, i.e., { the base station only has the average CSI information, but does not have the instantaneous CSI nor the  ordering information  \cite{yang2016outage}}. In this case, the base station  randomly orders the users;
 the long-term power constraint reduces to the short-term power constraint and utilizes
  only one power allocation within each fading block.
 Note that NOMA without CSI is a special case of the considered
     NOMA with one-bit feedback when we set $\alpha=0$ or $\alpha=\infty$.
\subsubsection{NOMA with Perfect CSI} Finally, NOMA with perfect CSI
is considered as a lower bound on the COP.
With perfect CSI, the base station informs the users  of
the optimal ordering of all channel gains, and knows the required power threshold for the users within any block
for correct decoding. In this case, {  we only consider the short-term power constraint,
where an outage event occurs if the required power threshold is larger than $P$ \cite{caire1999optimum}}.
For the long-term power constraint, an outage probability of zero can be achieved when $P$ is
sufficiently large, as shown in \cite{li2001capacity_outage}, which will not be considered in this section.
\subsection{Short-Term Power Constraint}
{\renewcommand\baselinestretch{1}\selectfont
\begin{figure}[tbp]
    \hspace{0.3cm}\begin{minipage}[t]{0.45\linewidth} 
    \centering
    \includegraphics[width=3.3in]{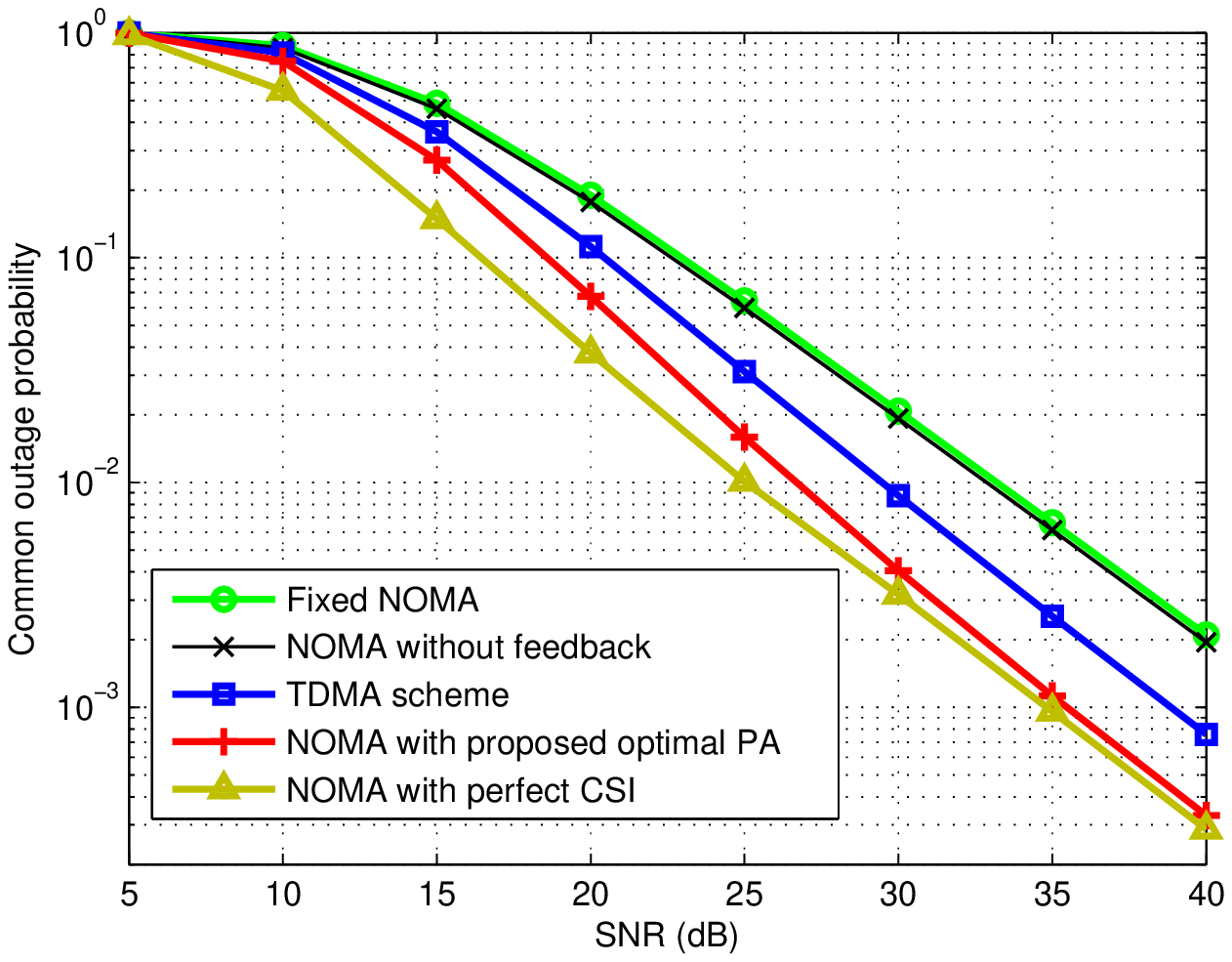}\vspace{-1em}
    \caption{COP versus SNR under the short-term power constraint, where  $K=3$, the target rate is $r_0=1$ BPCU
    for each user, and ``PA'' stands for  ``power allocation''.}
    \label{fig:side:a1}
  \end{minipage}%
  \hspace{0.8cm}
  \begin{minipage}[t]{0.45\linewidth}
    \centering
    \includegraphics[width=3.3in]{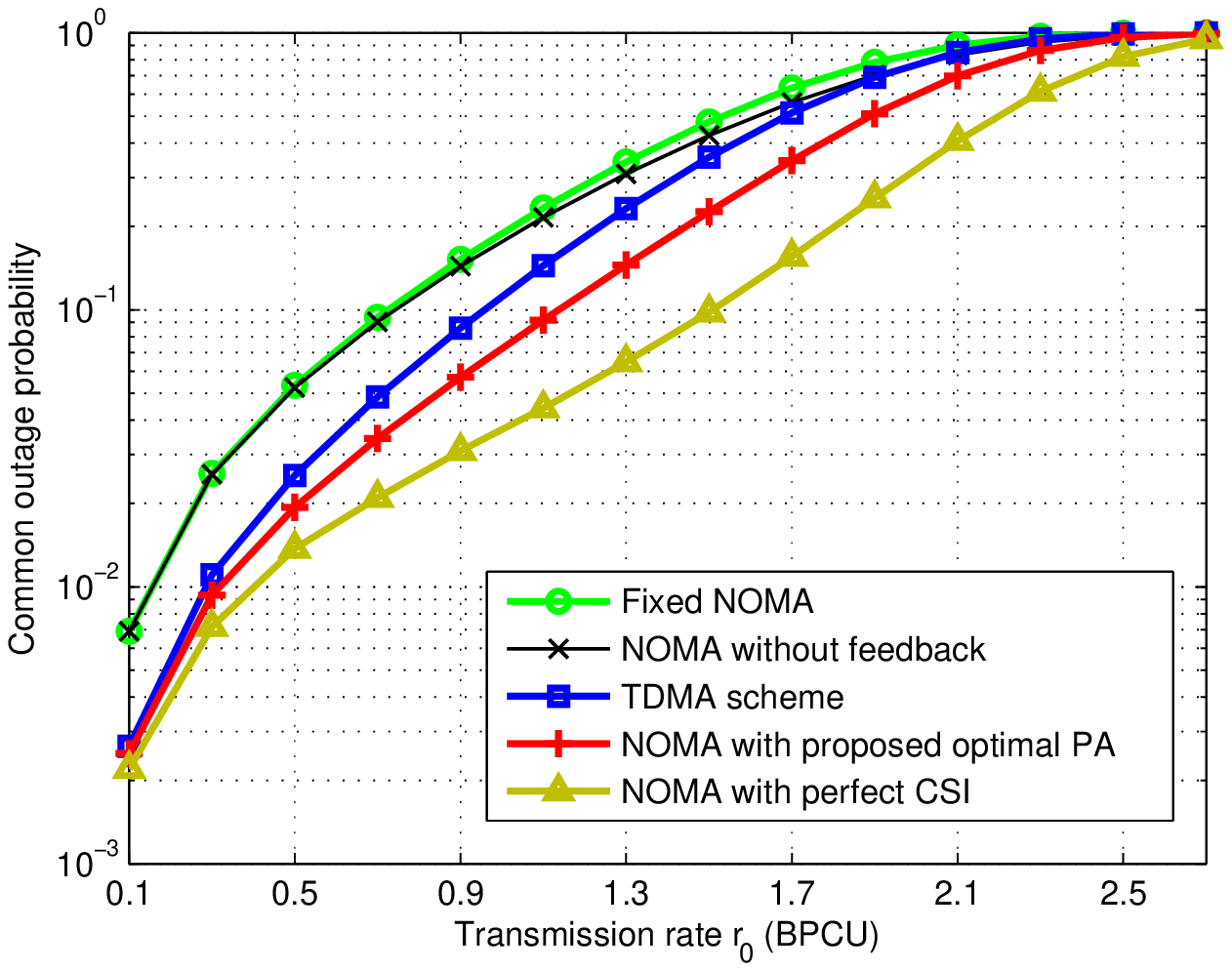}\vspace{-1em}
    \caption{COP versus transmission rate under the short-term power constraint, where  $K=3$, and the SNR is $20$ dB.}
    \label{fig:side:a2}
  \end{minipage}
\end{figure}
\par}

This subsection focuses on the outage performance of NOMA with one-bit feedback
under the short-term power constraint in \eqref{problem_short}.
Figs. \ref{fig:side:a1}, \ref{fig:side:a2}, and \ref{fig:side:b1} compare the outage performance of
 NOMA employing  the   optimal power allocation scheme proposed in Section \ref{subsection_short} with
  the benchmark schemes
defined in the previous subsection as a function of the SNR,   the transmission rate $r_0$, and the  number
of users $K$,
respectively. These figures demonstrate  that NOMA with optimal power allocation outperforms
the TDMA scheme, fixed NOMA, and NOMA without feedback.  As can be observed
in Fig. \ref{fig:side:a1}, all the curves
have almost the same slope at high SNR, but a constant gap exists between the proposed scheme
and each benchmark scheme. This is because all the schemes achieve the same diversity gain of $1$
(Lemma \ref{theorem_diversity_short})
under the short-term power constraint.
In addition, the performance of the proposed NOMA scheme with one-bit feedback  approaches that of NOMA with perfect CSI at high SNR, which means that the one-bit feedback is  effectively used
 by the proposed scheme to improve the outage performance.
Fig. \ref{fig:side:a2} reveals that  NOMA with the proposed optimal power allocation
has almost the same COP as the TDMA scheme when $r_0=0.1$, but outperforms the latter
as $r_0$ increases. For example, when $r_0=1.3$, these two schemes have  COPs of
approximately  $0.15$ and $0.23$, respectively. Finally, as shown in Fig. \ref{fig:side:b1},
the COPs of all  schemes increase with the number of the users. Particularly,
the gap between the proposed NOMA scheme and the TDMA scheme is enlarged as $K$ increases.
{This is because, compared to the orthogonal TDMA scheme,
NOMA is more spectrally efficient in the sense that all users are served  simultaneously.}

{\renewcommand\baselinestretch{1}\selectfont
\begin{figure}[tbp]
    \hspace{0.3cm}\begin{minipage}[t]{0.45\linewidth} 
    \centering
    \includegraphics[width=3.3in]{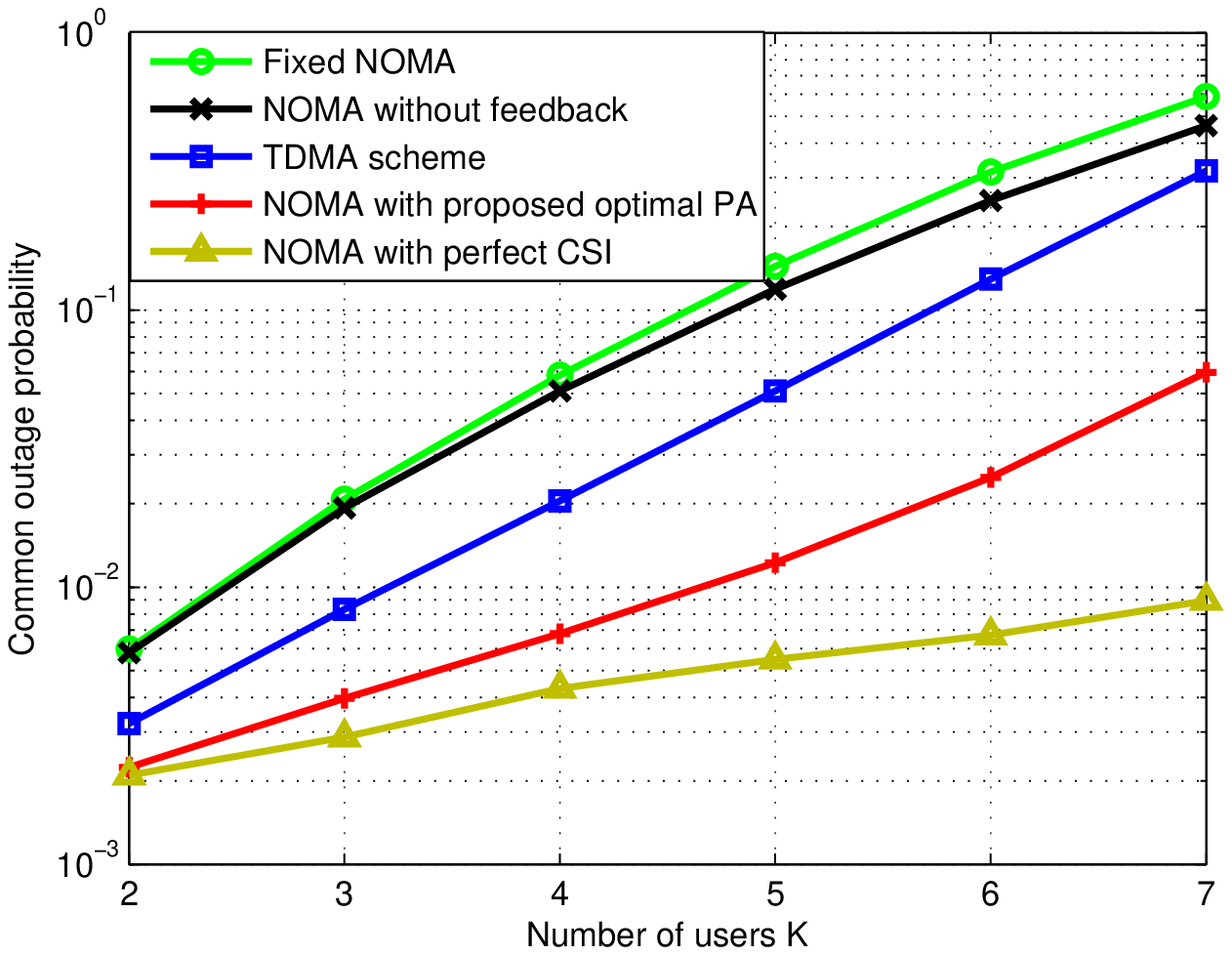}\vspace{-1em}
    \caption{COP versus the number of users under the short-term power constraint, where  the target transmission rate is $r_0=1$ BPCU
    for each user, and  the SNR is $30$ dB.}
    \label{fig:side:b1}
  \end{minipage}%
  \hspace{0.8cm}
  \begin{minipage}[t]{0.45\linewidth}
    \centering
    \includegraphics[width=3.3in]{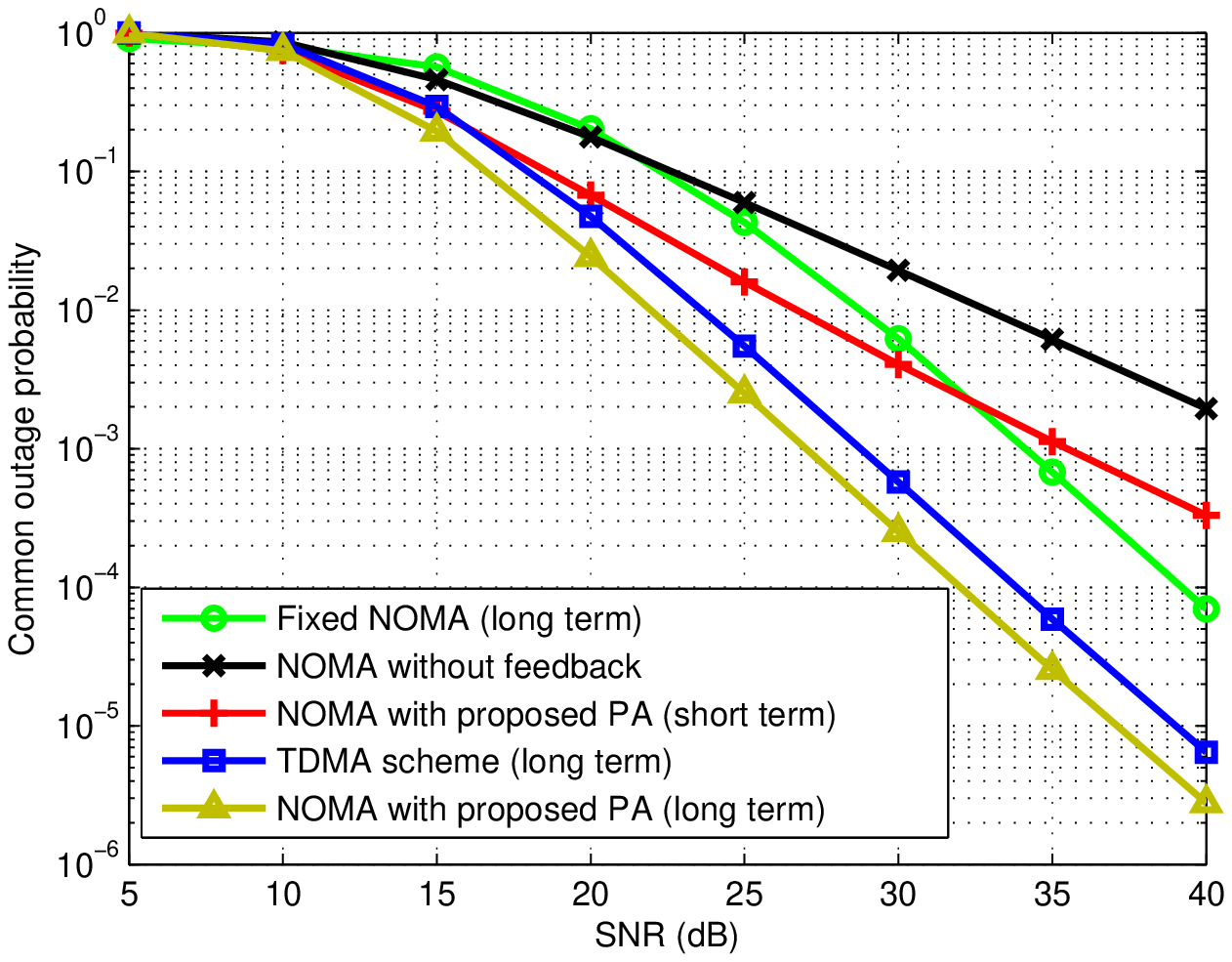}\vspace{-1em}
    \caption{COP versus SNR under the long-term power constraint, where the number of users is $K=3$, and the target transmission rate is $r_0=1$ BPCU
    for each user.}
    \label{fig:side:b2}
  \end{minipage}
\end{figure}
\par}
\subsection{Long-Term Power Constraint}
{\renewcommand\baselinestretch{1}\selectfont
\begin{figure}[tbp]
    \hspace{0.3cm}\begin{minipage}[t]{0.45\linewidth} 
    \centering
    \includegraphics[width=3.3in]{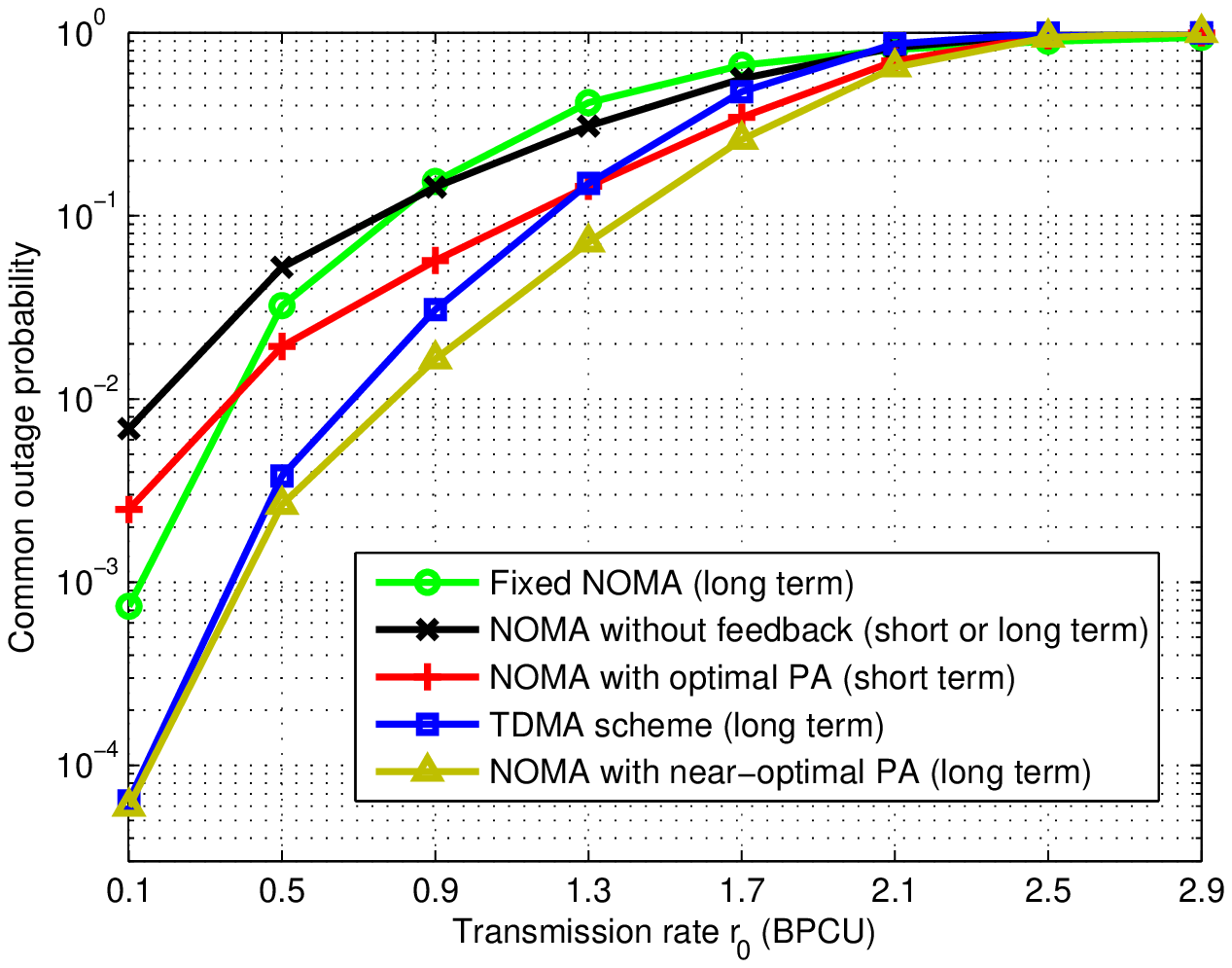}\vspace{-1em}
    \caption{COP versus transmission rate under the long-term power constraint, where the number of users is $K=3$, and the SNR is $20$ dB.}
    \label{fig:side:c1}
  \end{minipage}%
  \hspace{0.8cm}
  \begin{minipage}[t]{0.45\linewidth}
    \centering
    \includegraphics[width=3.3in]{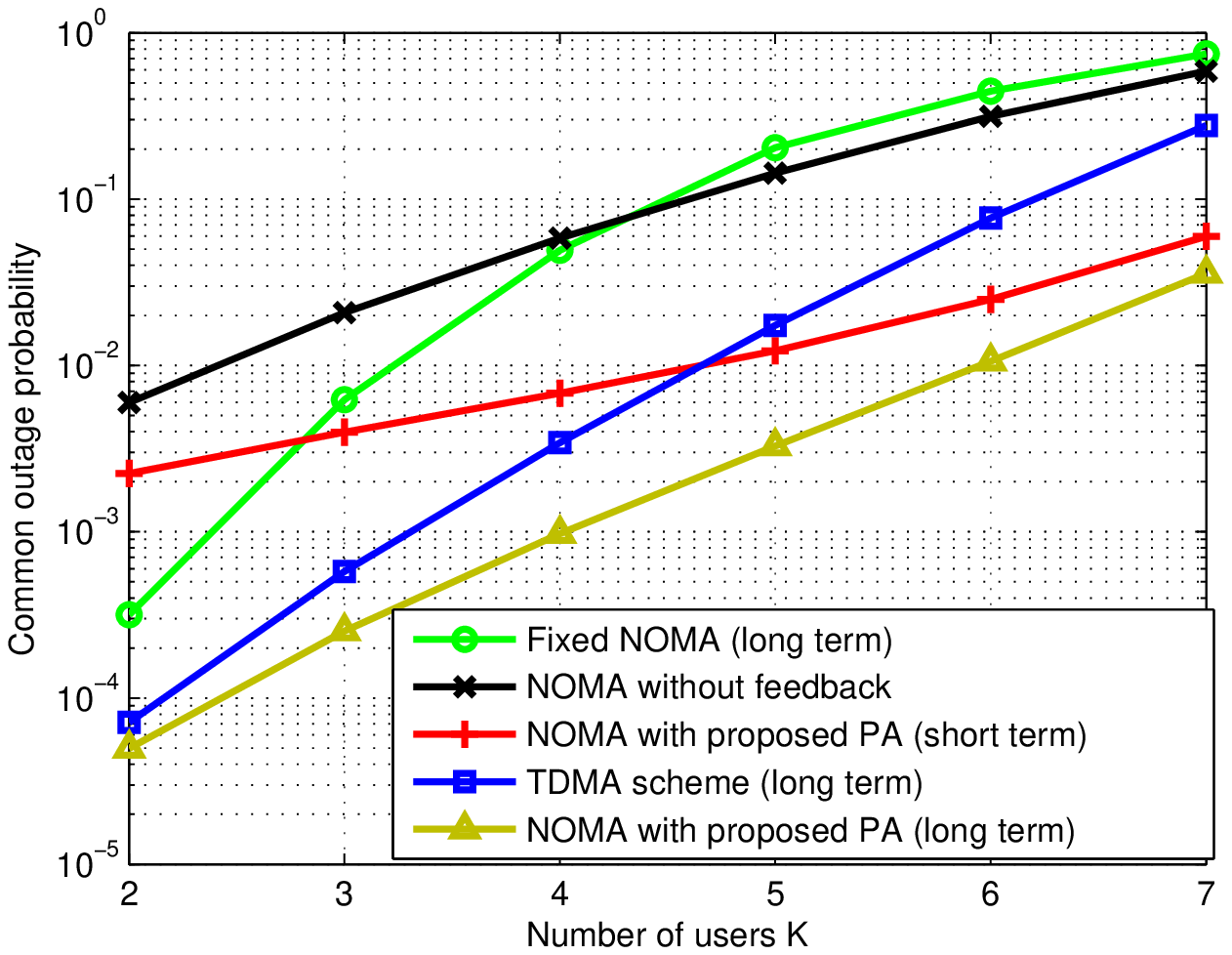}\vspace{-1em}
    \caption{COP versus the  number of users under the long-term power constraint, where  the target transmission rate is $r_0=1$ BPCU
    for each user, and  the SNR is $30$ dB.}
    \label{fig:side:c2}
  \end{minipage}
\end{figure}
\par}

This subsection  focuses on the outage performance of NOMA with one-bit feedback
 under the long-term power constraint in \eqref{problem_long}.
Figs. \ref{fig:side:b2}, \ref{fig:side:c1}, and \ref{fig:side:c2} compare the outage performance of
 NOMA with the   asymptotically optimal power allocation scheme proposed in Section \ref{subsection_long} with
  the benchmark schemes in Section \ref{subsection_benchmark}
 and NOMA under the short-term power constraint as a function of
 the SNR,   the transmission rate $r_0$, and the number of users  $K$,
respectively. As can be seen  in Fig. \ref{fig:side:b2}, under the long-term power constraint,
the COPs of NOMA with the proposed power allocation,
the TDMA scheme, and fixed NOMA  have the same slope at high SNR,
which is due to the fact that all these schemes achieve a diversity gain of $2$ (Lemma \ref{theorem_diversity_long}). However, fixed NOMA   suffers from   a poor  performance,
especially at high SNR. This implies that the power allocation scheme proposed in Section
\ref{subsection_long} plays an important role for improving  the outage performance.
Note that, although  the power allocation scheme proposed in Section
\ref{subsection_long} is based on the high-SNR approximation, it also  performs well at low SNR
compared to NOMA under the short-term power constraint.
As can be observed in Fig. \ref{fig:side:c1}, the fixed NOMA scheme also does not perform well especially
for large transmission rates $r_0$.
NOMA with the proposed asymptotically optimal long-term  power allocation scheme
has the best outage performance among the considered schemes.
When $r_0=1.3$, NOMA with the proposed power allocation scheme achieves a COP of approximate
  $0.07$, whereas the TDMA scheme
achieves only a COP of approximate  $0.15$.
 Finally, as shown in Fig. \ref{fig:side:c2},
the gap between the proposed NOMA scheme and the TDMA scheme increases  as $K$ increases.
The TDMA scheme with long-term power constraint  has  a   COP even higher than that of the NOMA  scheme
with short-term power constraint,
which means that the TDMA scheme is not suitable for scenarios with
large numbers of users due to its poor spectral efficiency.

{\renewcommand\baselinestretch{1}\selectfont
\begin{figure}[tbp]\centering \vspace{-1em}
    \epsfig{file=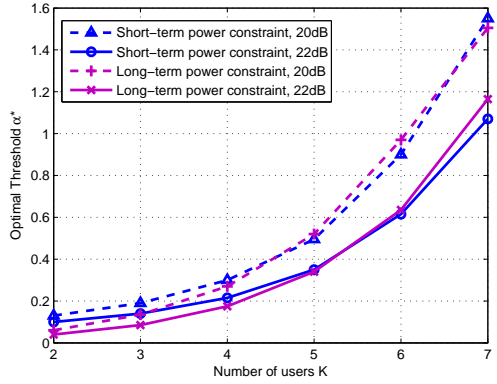, width=0.45\textwidth,clip=}
\caption{Optimal threshold versus the number of the users,
where  the target transmission rate is $r_0=1$ BPCU
    for each user, and  the SNR is $20$ dB or $22$ dB.}\label{Optima_alpha}
\vspace{-1em}
\end{figure}
\par}

Fig. \ref{Optima_alpha} illustrates the optimal threshold $\alpha^*$ versus the number
of users, $K$, where  the target transmission rate is $r_0=1$ BPCU
    for each user, and  the SNR is either $20$ dB or $22$ dB.
    As can be observed in  this figure, the optimal threshold increases
    significantly  with the number of users
and decreases with the SNR.
{ The optimal threshold decreases with the SNR for the following reason. Recall  that compared to the case of perfect CSI, the disadvantage of using one-bit feedback is that a user with a poor channel  may be categorized  as a user with a strong channel and hence given less transmit power. A good choice of $\alpha$ should avoid this problem as much as possible. For example, consider a scenario with two users, where the users' channels are ordered as $|h_1|^2\leq |h_2|^2$.  When the transmit power approaches infinity, one type of outage event is due to  the situation where users have very poor channel conditions, i.e., $|h_i|^2\rightarrow 0$, $i\in\{1,2\}$. In
this case, a good choice of $\alpha$ is $|h_1|^2\leq \alpha \leq |h_2|^2$, which means $\alpha\rightarrow 0$.
This intuition  can also be confirmed by the analytical results developed for the case with  the
 long-term power constraint. In particular,  Lemma 2 demonstrates that the maximum diversity gain can be achieved only when
   threshold $\alpha$ satisfies $\alpha\doteq P^{-1}$,
  i.e., the optimal threshold (denoted as $\alpha^*$) decreases with $P$ when $P$ is large.
Similarly, we can intuitively explain why the optimal threshold increases with the number of users $K$. Specifically, a small threshold $\alpha$ may result in a user $k$ with feedback ``1''
   having a poor channel, and thus, user $k$ with a poor channel
    may be mistakenly allocated with a large order index since the base station cannot distinguish the  channel gains
   with feedback ``1'' as discussed in Section \ref{ii}-A.
  Note that, when $K$ becomes large, the power allocated to
  a user   with a large order index  will become particularly small, according to the NOMA principle as discussed in \eqref{noma_constraint}.
   In this case, user $k$ with a poor channel will be given a very small amount of power,
   and thus  an outage event is prone  to happen.
 Therefore, $\alpha$ has to increase as  $K$ increases, in order to avoid this problem.
   }


\section{Conclusions}\label{vi}
This paper has investigated  the outage performance of downlink NOMA with 
  one-bit CSI feedback.
 We have derived a closed-form expression for the COP, as well as the optimal diversity gains
  under short-term and long-term power constraints.
 The diversity gain
 under the long-term power constraint was shown to be two whereas that under the short-term power constraint
 is only one.
 In order to minimize the COP, a dynamic power allocation policy based on the feedback state has also been proposed. For the short-term power constraint, we demonstrated that the original non-convex  problem can be transformed
 into a series  of convex problems.  For  the long-term power constraint,
 we have applied  high-SNR approximations to obtain an asymptotically optimal solution. Simulation results have been provided to demonstrate that the proposed NOMA schemes with one-bit feedback can outperform various existing multiple access schemes and achieve an outage performance close to the optimal one in many cases.
  An interesting topic for future research  is to extend the one-bit feedback scheme for NOMA  to multi-bit
  feedback. Moreover, the extension of the analysis of the one-bit feedback scheme
  to asymmetric scenarios with different distances and different rates for different users is also of interest.

\appendices
\section{Proof of Theorem \ref{theorem_cop}}\label{proof_theorem_cop}
  We first analyze the probability of  event $N=n$ defined in Definition \ref{Defini_event}, denoted
 by $\mathbb{P}_n(\alpha)$, which is a function of threshold $\alpha$. {Specifically, since all {\it unordered} channel gains are identically and independent distributed and $\mathbb{P}(Q(h_k)=0)=\mathbb{P}(|h_k|^2<\alpha)=
 1-e^{-\alpha}$, $\forall k\in[1:K]$,
the random variable $N$ defined in Definition \ref{Defini_event} is binomially distributed,
i.e., $N\sim B(K,1-e^{-\alpha})$. Thus,  $\mathbb{P}_n(\alpha)=C_K^n(1-e^{-\alpha})^ne^{-\alpha(K-n)}$
 as shown  in \eqref{Pr_n}.}

We then calculate the  outage probability  of individual  users for  event $N=n$, which is denoted by
$ \mathbb{P}_{k,n}^{\textrm{Indiv}}$ for user $\pi_k$. Note that an outage event at user $\pi_k$ occurs
if it fails to decode  the message for any user $\pi_l$,
$l\in[1:k]$.
 Therefore, the outage probability can be expressed as follows:
\begin{align}\label{Pr_i|n}
  \mathbb{P}_{k,n}^{\textrm{Indiv}}(\alpha,\mathbf{P}_n)&=1-\mathbb{P}\left\{\log(1+\textrm{SINR}_{l\rightarrow k,\;n})\geq r_0,\
  \forall l\in[1: k]\ \big|\ N=n\right\}\nonumber\\
 &=1- \mathbb{P}\left\{|{h}_{\pi_k}|^2 \geq \zeta_{l,n},\  \forall l\in[1:k]\ \big| \ N=n\right\}\nonumber\\
 &=\mathbb{P}(|{h}_{\pi_k}|^2\leq \hat{\zeta}_{k,n}\ \big|\ N=n).
\end{align}
Furthermore, based on  \eqref{CDF0} and \eqref{CDF1},
 $ \mathbb{P}_{k,n}^{\textrm{Indiv}}$ can be calculated as shown in \eqref{P_i|n}.

Moreover, the COP conditioned on event $N=n$, denoted as $\mathbb{P}_{n}^{\textrm{Common}}$,
can be obtained as follows:
\begin{align}\label{common_outage}
  \mathbb{P}_{n}^{\textrm{Common}}(\alpha,\mathbf{P}_n)&=1-\mathbb{P}
  \left\{\bigcap_{k\in[1:K]}\left\{\textrm{SINR}_{l\rightarrow k}\geq \hat{r}_0,
  \forall l\in[1:k]\right\}\ \big|\ N=n\right\}\nonumber\\
  &\stackrel{(a)}{=}1-
  \prod_{k=1}^K\mathbb{P}\left\{\textrm{SINR}_{l\rightarrow k}\geq \hat{r}_0,
  \forall l\in[1:k]\ \big|\ N=n\right\}\nonumber\\
  &=1-\prod_{k=1}^K (1-
\mathbb{P}_{k,n}^{\textrm{Indiv}}(\alpha,\mathbf{P}_n)),
\end{align}
where $(a)$ is due to the fact that, conditioned on  event $N=n$, the
 ${h}_{\pi_k}$'s are mutually independent as explained  in Remark \ref{Remark1},
and $\textrm{SINR}_{l\rightarrow k}$ is a function of ${h}_{\pi_k}$ as shown in \eqref{SNR}.

Now, the overall COP averaged over all  $(K+1)$ events can be expressed as
\begin{align}
   \mathbb{P}^{\textrm{Common}}(\alpha,\{P_{k,n}\})&=\sum_{n=0}^K  \mathbb{P}_n(\alpha)
  \mathbb{P}_{n}^{\textrm{Common}}(\alpha,\mathbf{P}_n).\label{common_outage_overall}
\end{align}
This completes the proof.

\section{Proof of Lemma \ref{theorem_diversity_short}}\label{proof_theorem_short}
\subsection{Proof of Achievability}
We will verify that a diversity gain of $1$ can be achieved based on a simple achievable power allocation scheme.
In particular, we set $\zeta_{k,n}=\frac{\mu_1}{P}$ in \eqref{zeta}, $\forall
k\in[1:K]$, $n\in[0:K]$,
where $\mu_1=(\hat{r}_0+1)^K-1$. Therefore, for any  $n$,
 $P_{k,n}=\frac{\hat{r}_0(\hat{r}_0+1)^{K-k}P}{(\hat{r}_0+1)^K-1}$ as shown in \eqref{Power-zeta}, and $\sum_{k=1}^K P_{k,n}=P$, i.e., the short-term power constraint is satisfied.
Using this power allocation, the outage probability in \eqref{P_i|n} can be expressed as:
 \begin{align}\label{Indiv_upper}
\mathbb{P}_{k,n}^{\textrm{Indiv}}=
  \left\{\begin{array}{ll}\min\left\{\frac{1-e^{-\frac{\mu_1}{P}}}{1-e^{-\alpha}}
  \thickapprox
  \frac{\mu_1}{P(1-e^{-\alpha})},1\right\}, &
  1\leq k\leq n;\\
  \left[1-e^{-(\frac{\mu_1}{P}-\alpha)}\right]^+, &n+1\leq k\leq K,\end{array}\right.
\end{align}
for a given $\alpha$.
Now, let $\alpha=\ln 2$, i.e., $e^{-\alpha}=\frac{1}{2}$ for simplicity.
 Then, from \eqref{Pr_n}, { $\mathbb{P}_n=\frac{C_K^n}{2^K}$}.
     From \eqref{Indiv_upper}, we have $\mathbb{P}_{k,n}^{\textrm{Indiv}}
    =0$ for $k\in [n+1:K]$ for a sufficiently large  $P$.
    So from \eqref{common_outage} and \eqref{Indiv_upper}, $\mathbb{P}_{n}^{\textrm{Common}}\thickapprox
    1-\left(1-\frac{\mu_1}{P(1-e^{-\alpha})}\right)^n\thickapprox
    \frac{2n\mu_1}{P}$ for $n\in[1:K]$, and
     $\mathbb{P}_{0}^{\textrm{Common}}=0$.  Thus, $\mathbb{P}^{\textrm{Common}}
     =\sum_{n=1}^K\mathbb{P}_n\mathbb{P}_{n}^{\textrm{Common}}
     \thickapprox \sum_{n=1}^K   { \frac{2nC_K^n\mu_1}{2^K P}}
     \doteq P^{-1}$     is obtained.

\subsection{Proof of Optimality}
Now, we  derive a lower bound on COP to verify that
the diversity gain of $1$ is optimal for all possible power allocations
 and all possible choices of  threshold  $\alpha$.
 From \eqref{zeta} and for the short-term power constraint, we have $\zeta_{k,n}\geq \frac{\hat{r}_0}{P}$, so
 $\mathbb{P}_{k,n}^{\textrm{Indiv}}$ can be lower bounded as:
    \begin{align}\label{Indiv_lower}
 \mathbb{P}_{k,n}^{\textrm{Indiv}}\geq
  \left\{\begin{array}{ll}\min\left\{\frac{1-e^{-\frac{\hat{r}_0}{P}}}{1-e^{-\alpha}},
  1\right\}, &
  1\leq k\leq n;\\
  \max\left[{1-e^{-(\frac{\hat{r}_0}{P}-\alpha)}}\right]^+, &n+1\leq k\leq K.\end{array}\right.
\end{align}
From \eqref{common_outage}, it can be observed that
\begin{align}\label{Common_lower} \mathbb{P}_n^{\textrm{Common}}
\geq  \mathbb{P}_{k,n}^{\textrm{Indiv}} \textrm{ for } \forall n\in[0:K], k\in[1:K].\end{align}
  Based on the above two relationships, in the following,
  we will verify that $\mathbb{P}^{\textrm{Common}}\dot{\geq} P^{-1}$
for any $\alpha$.
 Specifically, let $\alpha\doteq P^{\beta}$.

 First,  if $\beta>0$, from \eqref{Pr_n}, we have $\mathbb{P}_K\thickapprox1$. From  \eqref{Indiv_lower} and \eqref{Common_lower},
  $\mathbb{P}_{K}^{\textrm{Common}}\geq \frac{1-e^{-\frac{\hat{r}_0}{P}}}{1-e^{-\alpha}}
 \thickapprox \frac{\hat{r}_0}{P}\doteq P^{-1}$.
As shown in \eqref{common_outage_overall}, $\mathbb{P}^{\textrm{Common}}\geq
\mathbb{P}_{K}\mathbb{P}_{K}^{\textrm{Common}}\dot{\geq} P^{-1}$.

     Second, if $-1\leq\beta\leq 0$, from \eqref{Pr_n}, $\mathbb{P}_1
     \doteq P^{\beta}$.
     From \eqref{Indiv_lower} and \eqref{Common_lower},
      $\mathbb{P}_{1}^{\textrm{Common}}\geq \frac{1-e^{-\frac{\hat{r}_0}{P}}}{1-e^{-\alpha}}
      \thickapprox\frac{\hat{r}_0}{P(1-e^{-\alpha})}\doteq P^{-(1+\beta)}$ since $1-e^{-\alpha}\doteq P^{\beta}$.
     Thus, $\mathbb{P}^{\textrm{Common}}
     \geq\mathbb{P}_1\mathbb{P}_{1}^{\textrm{Common}}\dot{\geq} P^{-1}$.


     Finally, if $ \beta< -1$, from \eqref{Pr_n}, we have $\mathbb{P}_0\thickapprox1$.
     From  \eqref{Indiv_lower} and \eqref{Common_lower},
  $\mathbb{P}_{0}^{\textrm{Common}}\geq \frac{e^{-\alpha}-e^{-\frac{\hat{r}_0}{P}}}{e^{-\alpha}}
  \thickapprox \frac{\hat{r}_0}{P}-\alpha\doteq P^{-1}$. Thus,
  $\mathbb{P}^{\textrm{Common}}\geq
\mathbb{P}_{0}\mathbb{P}_{0}^{\textrm{Common}}\dot{\geq} P^{-1}$.

\section{Proof of Lemma \ref{theorem_diversity_long}}\label{proof_theorem_long}
\subsection{Proof of Achievability}
We will verify that a diversity gain  of $2$ can be achieved.
For a given $\alpha$, we consider a simple achievable power allocation scheme, i.e., we set $\zeta_{k,n}=\frac{\mu_1(K+1)
\mathbb{P}_n}{P}$ in \eqref{zeta}, $\forall k\in[1:K]$, $n\in[0:K]$,
 where $\mathbb{P}_n$ is given in \eqref{Pr_n}. 
This implies that    $P_{k,n}=\frac{\hat{r}_0(\hat{r}_0+1)^{K-k}P}{((\hat{r}_0+1)^K-1)(K+1)\mathbb{P}_n
}$, $k\in[1:K]$,
and $\sum_{k=1}^K P_{k,n}=\frac{P}{(K+1)\mathbb{P}_n}$.
The long-term power constraint in \eqref{power_long-term}  is obviously satisfied.
Using such power allocation, the outage probability in \eqref{P_i|n} can be expressed as:
 \begin{align}\label{Indiv_upper_long}
\mathbb{P}_{k,n}^{\textrm{Indiv}}=
  \left\{\begin{array}{ll}\min\left\{\frac{1-e^{-\frac{\mu_1(K+1)\mathbb{P}_n}{P}}}{1-e^{-\alpha}}
  \thickapprox
  \frac{\mu_1(K+1)\mathbb{P}_n}{P(1-e^{-\alpha})},1\right\}, &
  1\leq k\leq n;\\
  \left[1-{e^{-\left(\frac{\mu_1(K+1)\mathbb{P}_n}{P}-\alpha\right)}}\right]^+, &n+1\leq k\leq K.\end{array}\right.
\end{align}
Now, let $\alpha=\frac{\mu_1(K+1)}{P}\doteq P^{-1}$, so that $\frac{\mu_1(K+1)\mathbb{P}_n}{P}\leq \alpha$.
From \eqref{Pr_n}, $\mathbb{P}_n\thickapprox C_K^n \alpha^n\doteq P^{-n}$.
From \eqref{Indiv_upper_long}, $\mathbb{P}_{k,n}^{\textrm{Indiv}}\thickapprox
\frac{\mu_1(K+1)\mathbb{P}_n}{P\alpha}= \mathbb{P}_n$ for $k\in[1:n]$ and
$\mathbb{P}_{k,n}^{\textrm{Indiv}}=0$ for $k\in[n+1:K]$. Hence, from \eqref{common_outage},
$\mathbb{P}_{n}^{\textrm{Common}}\thickapprox 1-(1-\mathbb{P}_n)^n\thickapprox n\mathbb{P}_n$ for
$n\in[0:K]$. Furthermore, from \eqref{common_outage_overall},
$\mathbb{P}^{\textrm{Common}}\thickapprox\sum_{n=0}^K n(\mathbb{P}_n)^2\doteq
\sum_{n=1}^KP^{-2n}$, where $P^{-2}$ is the dominant term when $n=1$.

\subsection{Proof of Optimality}
Now, we  derive a lower bound on COP to verify that a diversity gain of $2$ is optimal under
the long-term power constraint.
 From \eqref{zeta} and the long-term power constraint, we have $\zeta_{k,n}\geq \frac{\hat{r}_0\mathbb{P}_n}{P}$, so
 $\mathbb{P}_{k,n}^{\textrm{Indiv}}$ can be lower bounded as:
    \begin{align}\label{Indiv_lower_long}
 \mathbb{P}_{k,n}^{\textrm{Indiv}}\geq
  \left\{\begin{array}{ll}\min\left\{\frac{1-e^{-\frac{\hat{r}_0\mathbb{P}_n}{P}}}{1-e^{-\alpha}},
  1\right\}, &
  1\leq k\leq n;\\
  \max\left\{\frac{e^{-\alpha}-e^{-\frac{\hat{r}_0\mathbb{P}_n}{P}}}{e^{-\alpha}},0\right\}, &n+1\leq k\leq K.\end{array}\right.
\end{align}
  Based on the above  relationship and \eqref{Common_lower},
  we can prove that $\mathbb{P}^{\textrm{Common}}\dot{\geq} P^{-2}$
for any $\alpha$.
 Specifically, let $\alpha\doteq P^{\beta}$.

 First,  if $\beta>0$, from \eqref{Pr_n}, we have $\mathbb{P}_K\thickapprox1$. From  \eqref{Indiv_lower} and \eqref{Common_lower},
  $\mathbb{P}_{K}^{\textrm{Common}}\geq \frac{1-e^{-\frac{\hat{r}_0\mathbb{P}_K}{P}}}{1-e^{-\alpha}}
 \thickapprox \frac{\hat{r}_0}{P}\doteq P^{-1}$.
As shown in \eqref{common_outage_overall}, $\mathbb{P}^{\textrm{Common}}\geq
\mathbb{P}_{K}\mathbb{P}_{K}^{\textrm{Common}}\dot{\geq} P^{-1}$.

     Second, if $-1\leq \beta\leq 0$, from \eqref{Pr_n}, { $\mathbb{P}_1=K(1-e^{-\alpha})e^{-\alpha(K-1)}
     \doteq P^{\beta}$}.
     From \eqref{Indiv_lower_long} and \eqref{Common_lower},
      $\mathbb{P}_{1}^{\textrm{Common}}\geq \frac{1-e^{-\frac{\hat{r}_0\mathbb{P}_1}{P}}}{1-e^{-\alpha}}
      \thickapprox\frac{\hat{r}_0\mathbb{P}_1}{P(1-e^{-\alpha})}=
      \frac{\hat{r}_0Ke^{-\alpha(K-1)}}{P}\doteq P^{-1}$.
     Thus, $\mathbb{P}^{\textrm{Common}}
     \geq\mathbb{P}_1\mathbb{P}_{1}^{\textrm{Common}}\dot{\geq} P^{\beta-1}\dot{\geq}P^{-2}$.


     Finally, if $ \beta< -1$, from \eqref{Pr_n}, we have $\mathbb{P}_0\thickapprox1$.
     From  \eqref{Indiv_lower_long} and \eqref{Common_lower},
  $\mathbb{P}_{0}^{\textrm{Common}}\geq \frac{e^{-\alpha}-e^{-\frac{\delta^2\hat{r}_0\mathbb{P}_0}{P}}}{e^{-\alpha}}
  \thickapprox \frac{\delta^2\hat{r}_0}{P}-\alpha\doteq P^{-1}$. Thus,
  $\mathbb{P}^{\textrm{Common}}\geq
\mathbb{P}_{0}\mathbb{P}_{0}^{\textrm{Common}}\dot{\geq} P^{-1}$.

Summarizing these three regions, 
the necessary condition to achieve the optimal
 diversity gain of $2$  is to set  $\beta=-1$.

 \section{Proof of Proposition \ref{poposition_P3}}\label{proof_proposition_P3}
For optimization problem (P3) in \eqref{new_problem_long2},
an asymptotically optimal solution $\{\zeta_{k,n}\}$ at high SNR has the following properties:
\begin{itemize}
  \item[(a)] when $n\geq 3$, $(\zeta_{1,n},\cdots,\zeta_{K,n})$ can be {\em any value} s.t. \eqref{constraint_long2} and \eqref{constraint_long22} and
 $ \zeta_{k,n}\dot{>} P^{-n}$; 
 \item[(b)]  $(\zeta_{1,2},\cdots,\zeta_{K,2})$ satisfy
 $P^{-2}\dot{<}\zeta_{k,2}\dot{<} P^{-1} $ for $k\in[1:K]$,
  $\zeta_{k,2}-\alpha\dot{<} P^{0}$ for $k\in[3:K]$; 
  \item[(c)] $(\zeta_{1,1},\cdots,\zeta_{K,1})$ satisfy $\zeta_{1,1}\dot{\leq} P^{-2}$,
  $\zeta_{k,1}-\alpha\dot{\leq} P^{-2}$, $\forall k\in[2:K]$;
  \item[(d)]$(\zeta_{1,0},\cdots,\zeta_{K,0})$ satisfy $\zeta_{k,0}-\alpha\dot{\leq} P^{-2}$, $\forall k\in[1:K]$.
\end{itemize}
\begin{proof}
From Lemma \ref{theorem_diversity_long}, we know that  the optimal threshold satisfies $\alpha\doteq P^{-1}$,
and the optimal COP satisfies $\mathbb{P}^{\textrm{Common}}\doteq P^{-2}$. These properties can be verified as follows.

(a) From \eqref{Pr_n} and \eqref{common_outage_overall}, we have
$\mathbb{P}_n\doteq P^{-n} $ for  $n\in[0:K]$. This implies
 that each term $\mathbb{P}_n  \mathbb{P}_{n}^{\textrm{Common}}$, $n\geq 3$, affects negligibly the
optimal  COP no matter what   power allocation scheme is used,
   and hence {\em any} power allocation scheme can be adopted
    when $n\geq 3$  as long as it consumes negligible power, i.e., $ \mathbb{P}_n
  \sum_{k=1}^K \frac{(\hat{r}_0+1)^{k-1}\hat{r}_0}{\zeta_{k,n}}
  \rightarrow 0$ as $P\rightarrow \infty$. Let $\zeta_{k,n}\doteq P^{\gamma_{k,n}}$,
  then  $ \mathbb{P}_n
  \sum_{k=1}^K \frac{(\hat{r}_0+1)^{k-1}\hat{r}_0}{\zeta_{k,n}}\doteq P^{-n-\min_{k\in[1:K]}\{\gamma_{k,n}\}}$,
  and hence $\min_{k\in[1:K]}\{\gamma_{k,n}\}>-n$. Combining this constraint with
  \eqref{constraint_long2} and \eqref{constraint_long22}, property (a) is verified.

  (b) To verify property (b), we will show that  $\mathbb{P}_2  \mathbb{P}_{2}^{\textrm{Common}}\dot{<} P^{-2}$
   (i.e., $\mathbb{P}_2  \mathbb{P}_{2}^{\textrm{Common}}$ is negligible compared to the optimal
  COP) can be achieved at  negligible power cost  for the term $ \mathbb{P}_2
  \sum_{k=1}^K \frac{(\hat{r}_0+1)^{k-1}\hat{r}_0}{\zeta_{k,2}}$, only when  $\{\zeta_{k,2}\}$ satisfies the constraints in property (b). Let $\zeta_{k,2}\doteq P^{\gamma_{2,k}}$,
  then, similar to the proof of property (a), $\min_{k\in[1:K]}\{\gamma_{k,2}\}>-2$ should be
  satisfied  such that $\mathbb{P}_2
  \sum_{k=1}^K \frac{(\hat{r}_0+1)^{k-1}\hat{r}_0}{\zeta_{k,2}}\rightarrow 0$ as $P\rightarrow \infty$.
  Moreover, to achieve $\mathbb{P}_2  \mathbb{P}_{2}^{\textrm{Common}}\dot{<} P^{-2}$,
    $\mathbb{P}_{k,2}^{\textrm{Indiv}}\dot{<} P^{0}$ needs to be satisfied according to \eqref{Pr_n} and \eqref{common_outage}, $\forall k\in[1:K]$. Thus, according to
    \eqref{P_i|n},
    $\zeta_{k,2}\dot {<} P^{-1}$ can be verified  for $k\in[1:2]$, with the choice of $\alpha\doteq P^{-1}$;  and $\zeta_{k,2}-\alpha\dot {<} P^{0}$ for $k\in[3:K]$.

    (c) To achieve $\mathbb{P}^{\textrm{Common}}\doteq P^{-2}$,
     $\mathbb{P}_1  \mathbb{P}_{1}^{\textrm{Common}}\dot{\leq} P^{-2}$ has to be satisfied.
     Thus, $\mathbb{P}_{k,1}^{\textrm{Indiv}}\dot{\leq} P^{-1}$, $\forall k\in[1:K]$, needs to be satisfied according to \eqref{Pr_n} and \eqref{common_outage}.  Thus, with the choice  $\alpha\doteq P^{-1}$,
     property (c) can be verified based on    \eqref{P_i|n}.

(d) Similar to the proof of property (c), property (d) can be verified. The details are omitted here
 for brevity.
\end{proof}

From property (a), we know that more than one asymptotically optimal solution exists. To unify
the expression of the approximation, for $n\geq 3$,
 we set $\zeta_{k,n}$ to satisfy $\zeta_{k,n}\dot{<}P^{-1}$ and
$\zeta_{k,n}-\alpha\dot{<} P^{0}$ without loss of asymptotic optimality.
 Thus, together with properties (b)-(d), $\forall n\in[0:K]$, we have
 $\frac{\zeta_{k,n}}{1-e^{\alpha}}\ll 1$ for $k\in[1:n]$ and
 $\zeta_{k,n}-\alpha\ll 1 $ for $k\in[n+1:K]$.
Now, the following approximation can be obtained:
 \begin{align}
    \frac{e^{-{\zeta}_{k,n}}-e^{-\alpha}}
{1-e^{-\alpha}}=1-\frac{1-e^{-\zeta_{k,n}}}{1-e^{-\alpha}}\thickapprox 1-
 \frac{{{\zeta}_{k,n}}}{1-e^{-\alpha}}\thickapprox e^{-
  \frac{{{\zeta}_{k,n}}}
{1-e^{-\alpha}}}.
 \end{align}
 Accordingly, using a Taylor series expansion,
  the approximation of  function $f_{3,n}$ can be expressed as follows:
\begin{align}f_{3,n}(\alpha,\boldsymbol{\zeta}_n)&\thickapprox 1-\prod_{k=1}^n e^{-
  \frac{{{\zeta}_{k,n}}}
{1-e^{-\alpha}}}\prod_{k=n+1}^K e^{-({\zeta}_{k,n}-\alpha)}\nonumber\\
&=1-e^{-\left(\sum_{k=1}^n \frac{{{\zeta}_{k,n}}}
{1-e^{-\alpha}}+\sum_{k=n+1}^K({\zeta}_{k,n}-\alpha)\right)}\nonumber\\
&\thickapprox\sum_{k=1}^n \frac{{{\zeta}_{k,n}}}
{1-e^{-\alpha}}+\sum_{k=n+1}^K({\zeta}_{k,n}-\alpha).\end{align}
With this, problem (P3) in \eqref{new_problem_long2} has been approximately transformed to (P4) in \eqref{P3}.

\section{Proof of Proposition \ref{poposition_zeta2}}\label{proof_theorem_longpower}
This proposition can be proved by using \eqref{kkt2} and \eqref{P4_zeta}.
   For a given $n$ and depending on the values of $k$, three cases need to be considered. Firstly, when $k\in[1:n-1]$, from \eqref{P4_zeta},  $\zeta_{k,n}\leq\zeta_{k+1,n}$ holds   since   $\hat{r}_0>0$.
   Secondly, when $k=n$,
   we have $$ \zeta_{n,n}=\sqrt{\omega (\hat{r}_0+1)^{n-1}\hat{r}_0(1-e^{-\alpha})}
    <\sqrt{\omega (\hat{r}_0+1)^{n}\hat{r}_0} \leq \sqrt{\frac{\omega \mathbb{P}_n(\alpha)(\hat{r}_0+1)^{n}\hat{r}_0}{\mathbb{P}_n(\alpha)-\lambda_{n+1,n}}}=\zeta_{n+1,n}$$
     since $\lambda_{n+1,n}   \geq 0$.
     Thirdly, when $k\in[n+1:K-1]$,  two subcases with respect to
      $\lambda_{k,n}$ are considered.  If $\lambda_{k,n}>0$, $\zeta_{k,n}=\alpha$
      can be obtained from \eqref{kkt2}, so $\zeta_{k,n}\leq\zeta_{k+1,n}$ holds since
      $\zeta_{k+1,n}\geq \alpha$. If $\lambda_{k,n}=0$,   since $\lambda_{k+1,n}   \geq 0$, we have
      $$\zeta_{k,n}=
      \sqrt{\omega (\hat{r}_0+1)^{k-1}\hat{r}_0}\leq
      \sqrt{\frac{{ \omega}\mathbb{P}_n(\alpha)(\hat{r}_0+1)^{k}\hat{r}_0}
      {\mathbb{P}_n(\alpha)-\lambda_{k+1,n}}}=\zeta_{k+1,n}.$$ This  completes the proof.



\section{Proof of Theorem \ref{theorem_algorithm}}\label{proof_theorem_algorithm}
 During  the $t$-th
   iteration,  $\omega^{(t)}$, $\lambda_{k,n}^{(t)}$, and $\zeta_{k,n}^{(t)}$ are calculated according to
      \eqref{omega}, \eqref{lambda}, and \eqref{eq_theorem_long}, respectively.
   Assume   that $i_n^*\geq i_n^{(t)}$,  $\forall n\in[0:K]$, and
  the constraints in \eqref{kkt_constraint} are not satisfied,
     i.e.,  we have to further enlarge
   at least one $i_n^{(t)}$ to find $\{i_n^*\}$. 
Now, divide $\{n\}$ into two sets:
 \begin{align}\label{two_sets}\textrm{$\mathcal{N}_{1}^{(t)}\triangleq \{n:\zeta_{n+i_n^{(t)}+1,n}^{(t)}>\alpha\}$
   and  $\mathcal{N}_{2}^{(t)}\triangleq \{n:\zeta_{n+i_n^{(t)}+1,n}^{(t)}\leq\alpha\}$.}
   \end{align}
 According to the definitions  in \eqref{two_sets},
 we first present an important proposition as follows.
   \begin{Proposition}\label{poposition_omega}
     For the $(t+1)$-th iteration, $\omega^{(t+1)}>\omega^{(t)}$ if we enlarge any   $i_n^{(t)}$ with
     $n\in\mathcal{N}_{1}^{(t)}$;
     $\omega^{(t+1)}\leq \omega^{(t)}$ if we enlarge any   $i_n^{(t)}$ with $n\in\mathcal{N}_{2}^{(t)}$.
   \end{Proposition}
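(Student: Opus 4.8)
The plan is to read off the sign of $\omega^{(t+1)}-\omega^{(t)}$ directly from the closed form \eqref{omega}. Write $\sqrt{\omega^{(t)}}=N_t/D_t$ with $N_t\triangleq\sum_{n=0}^K\mathbb{P}_n(\alpha)A_n(i_n^{(t)})$ and $D_t\triangleq P-\sum_{n=0}^K\mathbb{P}_n(\alpha)B_n(i_n^{(t)})$; both are positive, since $\omega^{(t)}$ is a well-defined positive quantity at every iteration (cf. \eqref{kkt0} and feasibility of problem (P5)). First I would record the effect of raising one or several of the $i_n$'s: by \eqref{An} and \eqref{Bn}, increasing $i_n$ from $i_n^{(t)}$ to some $i_n'>i_n^{(t)}$ deletes the terms $\sqrt{(\hat{r}_0+1)^{k-1}\hat{r}_0}$, $k\in[n+i_n^{(t)}+1:n+i_n']$, from $A_n$ while adding the matching terms $(\hat{r}_0+1)^{k-1}\hat{r}_0/\alpha$ to $B_n$. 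Hence, letting $\mathcal{I}$ be the set of pairs $(n,k)$ so moved into the ``$\zeta_{k,n}=\alpha$'' block, we get $N_{t+1}=N_t-S$ and $D_{t+1}=D_t-T$ with $S\triangleq\sum_{(n,k)\in\mathcal{I}}\mathbb{P}_n(\alpha)\sqrt{(\hat{r}_0+1)^{k-1}\hat{r}_0}>0$ and $T\triangleq\sum_{(n,k)\in\mathcal{I}}\mathbb{P}_n(\alpha)(\hat{r}_0+1)^{k-1}\hat{r}_0/\alpha>0$.

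The next step is an elementary comparison: provided $D_{t+1}>0$, clearing denominators in $\frac{N_t-S}{D_t-T}-\frac{N_t}{D_t}=\frac{N_tT-SD_t}{(D_t-T)D_t}$ shows that $\sqrt{\omega^{(t+1)}}$ exceeds $\sqrt{\omega^{(t)}}$ precisely when $S/T<\sqrt{\omega^{(t)}}$, and is at most $\sqrt{\omega^{(t)}}$ precisely when $S/T\geq\sqrt{\omega^{(t)}}$. Everything therefore reduces to locating the quotient $S/T$, and the key observation is that it is a weighted average, with strictly positive weights $\mathbb{P}_n(\alpha)(\hat{r}_0+1)^{k-1}\hat{r}_0/\alpha$, of the numbers $r_{n,k}\triangleq\alpha/\sqrt{(\hat{r}_0+1)^{k-1}\hat{r}_0}$ over $(n,k)\in\mathcal{I}$. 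Since each such $k$ lies in $[n+i_n^{(t)}+1:K]$, i.e. in the geometric-progression block of the $t$-th iterate, \eqref{eq_theorem_long} gives $\zeta_{k,n}^{(t)}=\sqrt{\omega^{(t)}(\hat{r}_0+1)^{k-1}\hat{r}_0}$, so that $r_{n,k}=\alpha\sqrt{\omega^{(t)}}/\zeta_{k,n}^{(t)}$.

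The two cases then fall out. If only indices $n\in\mathcal{N}_1^{(t)}$ are enlarged, then $\zeta_{n+i_n^{(t)}+1,n}^{(t)}>\alpha$ and, since $\{\zeta_{k,n}^{(t)}\}_{k}$ increases with $k$ along the geometric block (common ratio $\hat{r}_0+1>1$), every moved index obeys $\zeta_{k,n}^{(t)}>\alpha$, hence $r_{n,k}<\sqrt{\omega^{(t)}}$; a weighted average of numbers all strictly below $\sqrt{\omega^{(t)}}$ stays strictly below it, giving $\omega^{(t+1)}>\omega^{(t)}$. If instead indices $n\in\mathcal{N}_2^{(t)}$ are enlarged as in Algorithm I, then by the very rule $i_n^{(t+1)}=\arg\max_i\{i:\zeta_{n+i,n}^{(t)}\leq\alpha\}$ every moved index obeys $\zeta_{k,n}^{(t)}\leq\alpha$, hence $r_{n,k}\geq\sqrt{\omega^{(t)}}$ and $\omega^{(t+1)}\leq\omega^{(t)}$. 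Taking $\mathcal{I}$ to be a singleton already yields the proposition exactly as stated; the weighted-average step only serves to cover the simultaneous multi-index update actually carried out by the algorithm. The one point requiring care is the standing assumption $D_{t+1}>0$, i.e. that the updated configuration remains feasible so that \eqref{omega} still returns a legitimate positive $\omega^{(t+1)}$; this I would settle from feasibility of (P5) together with $i_n^{(t+1)}\in[0:K-n]$. Apart from that the argument is conceptual rather than computational: the main obstacle is spotting the weighted-average structure of $S/T$ and tying each ratio $r_{n,k}$ back to the closed form \eqref{eq_theorem_long}, after which both claims reduce to the triviality that a convex combination cannot leave the range spanned by its inputs.
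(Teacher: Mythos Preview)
Your proof is correct and follows essentially the same approach as the paper: both track the change in $\sqrt{\omega}$ via the closed form \eqref{omega} when some $i_n$ is raised, and reduce the sign question to comparing $\zeta_{k,n}^{(t)}$ against $\alpha$ for each index moved into the ``$\zeta=\alpha$'' block. Your weighted-average reading of $S/T$ is a tidier repackaging of the paper's side-by-side comparison of \eqref{eq_appendix1} and \eqref{eq_appendix2}, and your explicit handling of the multi-index update (plus the careful note on $D_{t+1}>0$) actually covers the algorithm's simultaneous step more directly than the paper's single-$m$ argument, but the underlying mechanism is identical.
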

   \begin{proof}
    This proposition can be proved based on \eqref{eq_theorem_long} and \eqref{omega}.
   For the $t$-th iteration, we first consider the case that some   $i_n^{(t)}$
   with $n\in\mathcal{N}_{1}^{(t)}$
   is selected to be enlarged in the next iteration,
   where   $\mathcal{N}_{1}^{(t)}$ is defined in \eqref{two_sets}.
   Assume without loss of generality that $i_{m}^{(t)}$ with $m\in \mathcal{N}_{1}^{(t)}$ is enlarged, i.e.,   $i_{m}^{(t+1)}=i_{m}^{(t)}+l$, $l\in[1:K-n-i_n^{(t)}]$, and the other $i_n^{(t)}$'s remain unchanged,
   i.e., $i_{n}^{(t+1)}=i_{n}^{(t)}$, $\forall n\neq m$.
   According to \eqref{omega}, we have
   \begin{align}
     \frac{1}{\sqrt{\omega^{(t)}}}{\sum_{n=0}^K\mathbb{P}_n(\alpha)A_n(i_n^{(t)})}=
  {P-\sum_{n=0}^K\mathbb{P}_n(\alpha)B_n (i_n^{(t)})}.
   \end{align}
   Let $u_2(k)\triangleq{(\hat{r}_0+1)^{k-1}\hat{r}_0}$. From \eqref{An} and \eqref{Bn},
   the above equality can be rewritten as
    \begin{align}
     &\frac{1}{\sqrt{\omega^{(t)}}}\left[\sum_{k=1}^m \frac{\sqrt{u_2(k)}}{1-e^{-\alpha}}
     +\sum_{k=m+i_m^{(t)}+l+1}^K\sqrt{u_2(k)}+
     {\sum_{n=0,n\neq m}^K\mathbb{P}_n(\alpha)A_n(i_n^{(t)})}\right]\nonumber\\
     =&
  P-\sum_{n=0}^K\mathbb{P}_n(\alpha)B_n (i_n^{(t)})-\sum_{k=m+i_m^{(t)}+1}^{m+i_m^{(t)}+l}
  \frac{u_2(k)}{\sqrt{w^{(t)}u_2(k)}}. \label{eq_appendix1}
   \end{align}
   Similarly, for the $(t+1)$-th iteration, since only $i_m^{(t)}$ is enlarged,  we obtain
    \begin{align}
     &\frac{1}{\sqrt{\omega^{(t+1)}}}\left[\sum_{k=1}^m \frac{\sqrt{u_2(k)}}{1-e^{-\alpha}}
     +\sum_{k=m+i_m^{(t)}+l+1}^K\sqrt{u_2(k)}+
     {\sum_{n=0,n\neq m}^K\mathbb{P}_n(\alpha)A_n(i_n^{(t)})}\right]\nonumber\\
     =&
  P-\sum_{n=0}^K\mathbb{P}_n(\alpha)B_n (i_n^{(t)})-\sum_{k=m+i_m^{(t)}+1}^{m+i_m^{(t)}+l}
  \frac{u_2(k)}{\alpha}.\label{eq_appendix2}
   \end{align}
   Comparing the right hand side terms of \eqref{eq_appendix1} and \eqref{eq_appendix2},
   the one in \eqref{eq_appendix1} is larger than the one in \eqref{eq_appendix2}, since
   $\zeta_{k,n}^{(t)}=\sqrt{w^{(t)}u_2(k)}>\alpha$, $\forall k\in[m+i_m^{(t)}+1:
   m+i_m^{(t)}+l]$, which can be obtained from \eqref{eq_theorem_long} and the
   definition of $\mathcal{N}_1^{(t)}$ in \eqref{two_sets}. Thus,
   $\frac{1}{\sqrt{w^{(t)}}}>\frac{1}{\sqrt{w^{(t+1)}}}$ can be obtained by comparing the left hand
   side terms of \eqref{eq_appendix1} and \eqref{eq_appendix2}.

Now, we have proven  that $w^{(t+1)}>w^{(t)}$ if  we enlarge any   $i_n^{(t)}$ with $n\in\mathcal{N}_{1}^{(t)}$. Following similar steps, we can show that
     $\omega^{(t+1)}\leq \omega^{(t)}$ if we enlarge any   $i_n^{(t)}$ with $n\in\mathcal{N}_{2}^{(t)}$.
       \end{proof}

     Based on Proposition \ref{poposition_omega}, another proposition is given in the following.
     \begin{Proposition}\label{proposition_prin1}
      For the $t$-th iteration, at least one $i_n^{(t)}\neq i_n^*$ for
      $n\in\mathcal{N}_{2}^{(t)}$ must exist.
     \end{Proposition}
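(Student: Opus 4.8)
The plan is a proof by contradiction. Suppose, contrary to the claim, that every index $n\in\mathcal{N}_2^{(t)}$ already satisfies $i_n^{(t)}=i_n^*$. Put $D\triangleq\{n\in[0:K]:i_n^*>i_n^{(t)}\}$. Since $\{i_n^*\}$ satisfies all the complementary slackness / KKT conditions whereas, by the standing assumption of this appendix, $\{i_n^{(t)}\}$ violates \eqref{kkt_constraint}, the two sequences differ; together with $i_n^*\geq i_n^{(t)}$ for all $n$ this forces $D\neq\emptyset$. The contradiction hypothesis says $D\cap\mathcal{N}_2^{(t)}=\emptyset$, i.e.\ $D\subseteq\mathcal{N}_1^{(t)}$ (recall $\mathcal{N}_1^{(t)},\mathcal{N}_2^{(t)}$ from \eqref{two_sets}). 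From here I would derive two incompatible comparisons between $\omega^{(t)}$, the multiplier produced by \eqref{omega} from $\{i_n^{(t)}\}$, and $\omega^{*}$, the multiplier produced by the same formula from $\{i_n^*\}$.

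First, to get $\omega^{*}<\omega^{(t)}$, I would fix any $m\in D\subseteq\mathcal{N}_1^{(t)}$. By \eqref{eq_theorem_long}, for $k>m+i_m^{(t)}$ one has $\zeta_{k,m}^{(t)}=\sqrt{\omega^{(t)}(\hat r_0+1)^{k-1}\hat r_0}$, which is increasing in $k$; since $m\in\mathcal{N}_1^{(t)}$ this quantity already exceeds $\alpha$ at $k=m+i_m^{(t)}+1$, so $\omega^{(t)}(\hat r_0+1)^{k-1}\hat r_0>\alpha^2$ for all $k\geq m+i_m^{(t)}+1$. On the other hand $i_m^*>i_m^{(t)}$ puts $k=m+i_m^*$ among the indices with $\zeta_{k,m}^{*}=\alpha$, so the KKT requirement $\lambda_{m+i_m^*,m}^{*}\geq 0$ combined with \eqref{lambda} gives $\omega^{*}(\hat r_0+1)^{m+i_m^*-1}\hat r_0\leq\alpha^2$. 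As $(\hat r_0+1)^{k-1}\hat r_0$ is increasing and $m+i_m^*-1\geq m+i_m^{(t)}$, this yields $\omega^{*}(\hat r_0+1)^{m+i_m^{(t)}}\hat r_0\leq\alpha^2<\omega^{(t)}(\hat r_0+1)^{m+i_m^{(t)}}\hat r_0$, hence $\omega^{*}<\omega^{(t)}$.

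Second, to get the opposite inequality $\omega^{*}>\omega^{(t)}$, I would compare $\omega^{(t)}$ and $\omega^{*}$ directly through \eqref{omega}--\eqref{Bn}, in the spirit of the proof of Proposition~\ref{poposition_omega}. Writing $u(k)\triangleq(\hat r_0+1)^{k-1}\hat r_0$, $a\triangleq\sum_{n=0}^K\mathbb{P}_n(\alpha)A_n(i_n)$ and $b\triangleq P-\sum_{n=0}^K\mathbb{P}_n(\alpha)B_n(i_n)$, equation \eqref{omega} reads $\sqrt{\omega}=a/b$; I attach superscripts $(t)$ or $*$ to indicate which of the sequences $\{i_n^{(t)}\}$, $\{i_n^*\}$ is substituted. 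The two sequences agree off $D$, and for $n\in D$ enlarging $i_n^{(t)}$ to $i_n^*$ deletes from $A_n$ exactly the terms $\sqrt{u(k)}$, $k\in[n+i_n^{(t)}+1:n+i_n^*]$, and adds to $B_n$ exactly the terms $u(k)/\alpha$ over the same range. Each such $k$ obeys $\omega^{(t)}u(k)>\alpha^2$ by the previous paragraph, equivalently $u(k)/\alpha>\sqrt{u(k)}/\sqrt{\omega^{(t)}}$; summing these strict inequalities, weighted by $\mathbb{P}_n(\alpha)>0$, over $n\in D$ and the relevant $k$ gives $b^{(t)}-b^{*}>(a^{(t)}-a^{*})/\sqrt{\omega^{(t)}}$. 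Substituting the relation $b^{(t)}=a^{(t)}/\sqrt{\omega^{(t)}}$ and using $a^{*}>0$, this rearranges to $b^{*}<a^{*}/\sqrt{\omega^{(t)}}$, i.e.\ $a^{*}/\sqrt{\omega^{*}}<a^{*}/\sqrt{\omega^{(t)}}$, hence $\omega^{*}>\omega^{(t)}$ --- contradicting the first comparison. Therefore the contradiction hypothesis is untenable, and some $n\in\mathcal{N}_2^{(t)}$ has $i_n^{(t)}\neq i_n^{*}$.

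I expect the main obstacle to be the bookkeeping in the third paragraph: pinning down precisely which summands of $A_n$ and $B_n$ are created or destroyed when an index $i_n$ is enlarged (in particular, that the deleted and added index ranges are both exactly $[n+i_n^{(t)}+1:n+i_n^*]$), and then recognizing that the single pointwise inequality $u(k)/\alpha>\sqrt{u(k)}/\sqrt{\omega^{(t)}}$ --- which is nothing but the defining inequality of membership in $\mathcal{N}_1^{(t)}$, and is the same mechanism already exploited in the proof of Proposition~\ref{poposition_omega} --- is exactly what makes the net change in $\sqrt{\omega}$ point in the ``wrong'' direction. The remaining steps are routine.
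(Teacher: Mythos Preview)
Your argument is correct. Both proofs are by contradiction from the same hypothesis, but the routes differ.

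The paper's proof is \emph{iterative}: assuming $i_n^{(t)}=i_n^*$ for all $n\in\mathcal{N}_2^{(t)}$, it argues that any subsequent enlargement must occur inside $\mathcal{N}_1^{(t)}$; by Proposition~\ref{poposition_omega} each such enlargement strictly increases $\omega$, which via \eqref{lambda} forces some $\lambda$ at a newly-clamped index to become negative, and moreover $\mathcal{N}_1^{(t+1)}=\mathcal{N}_1^{(t)}$. Hence the KKT requirement \eqref{kkt_constraint} is violated at every future iteration, so $\{i_n^*\}$ is never reached.

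Your proof is \emph{static}: you bypass the iteration entirely and compare $\omega^{(t)}$ directly with the optimal multiplier $\omega^{*}$ obtained from $\{i_n^*\}$ through \eqref{omega}. The first comparison ($\omega^{*}<\omega^{(t)}$) comes from the KKT inequality $\lambda_{m+i_m^*,m}^{*}\geq 0$ at a single index $m\in D\subseteq\mathcal{N}_1^{(t)}$; the second ($\omega^{*}>\omega^{(t)}$) is exactly the bookkeeping of Proposition~\ref{poposition_omega} applied in one shot from $\{i_n^{(t)}\}$ to $\{i_n^{*}\}$ rather than incrementally. This is arguably cleaner: it isolates the contradiction in a single pair of inequalities rather than relying on an induction over future iterations, and it makes transparent that the only mechanism at work is the pointwise inequality $u(k)/\alpha>\sqrt{u(k)}/\sqrt{\omega^{(t)}}$ characterizing $\mathcal{N}_1^{(t)}$. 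The paper's approach, on the other hand, stays closer to the algorithmic narrative and feeds naturally into the proof of Proposition~\ref{proposition_prin2}.
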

     \begin{proof}
   Reduction to absurdity is adopted.  We first assume that $i_n^{(t)}=i_n^*$, $\forall n\in \mathcal{N}_{2}^{(t)}$, and we need to find the other $i_n^*$'s by enlarging at least one $i_n^{(t)}$
   with $n\in\mathcal{N}_{1}^{(t)}$.
  According to \eqref{eq_theorem_long},  $\zeta_{n+i_n^{(t)}+1,n}^{(t)}=
 \sqrt{w^{(t)}(\hat{r}_0+1)^{i_n^{(t)}}\hat{r}_0}>\alpha$ when   $n\in\mathcal{N}_{1}^{(t)}$;
 thus, it is easy to obtain $\lambda_{n+i_n^{(t+1)}+1,n}<0$ if we enlarge any   $i_n^{(t)}$ with
 $n\in\mathcal{N}_{1}^{(t)}$,
  based on \eqref{lambda}
 and Proposition \ref{poposition_omega}. In addition, $\mathcal{N}_{1}^{(t+1)}
 =\mathcal{N}_{1}^{(t)}$ obviously holds, i.e., the set $\mathcal{N}_{1}^{(t)}$ will not change
 in the next iteration.
 By analogy,
 $\lambda_{n+i_n^{(t')}+1,n}<0$, $\forall t'\geq t+1$ if we enlarge any $i_n^{(t)}$ with
  $n\in\mathcal{N}_{1}^{(t)}$.
  This implies  that the constraint in \eqref{kkt_constraint}
 would never  be satisfied if  $i_n^{(t)}=i_n^*$,  $\forall n\in \mathcal{N}_{2}^{(t)}$.
 \end{proof}
According to Proposition \ref{proposition_prin1},
$\{i_n^*\}$ must be found using the following update  rule.

{\bf Rule 1}: Enlarge at least one element $i_n^{(t)}$ with $n\in\mathcal{N}_2^{(t)}$ in the $t$-th iteration.

In order to improve the search efficiency, Rule 1 can be further refined into another update rule.
A proposition is first given as follows.
   \begin{Proposition}\label{proposition_prin2}
   When Rule 1 is adopted for searching  $\{i_n^*\}$,  $i_n^*\geq v^{(t)}$,  $\forall n\in\mathcal{N}_{2}^{(t)}$,  where $v^{(t)}\triangleq\arg\max_{i\in\left[i_n^{(t)}:K-n\right]}
   \{i:\zeta_{n+i}^{(t)}\leq \alpha\}$.
   \end{Proposition}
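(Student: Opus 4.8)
The plan is to argue by contradiction, after first rephrasing both $i_n^*$ and $v^{(t)}$ as cut-offs on the multiplier $\omega$. Write $u_2(k)\triangleq(\hat{r}_0+1)^{k-1}\hat{r}_0$ and let $\omega^*$ be the value of $\omega$ associated with the true sequence $\{i_n^*\}$ through \eqref{omega}. Since $n\in\mathcal{N}_2^{(t)}$, by \eqref{two_sets} and \eqref{eq_theorem_long} we have $\zeta_{n+i,n}^{(t)}=\sqrt{\omega^{(t)}u_2(n+i)}\le\alpha$ for every $i\in[i_n^{(t)}+1:v^{(t)}]$, so in particular $v^{(t)}\ge i_n^{(t)}+1$ and $\omega^{(t)}u_2(n+v^{(t)})\le\alpha^2$. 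On the other hand, the conditions in \eqref{kkt_constraint} defining $\{i_n^*\}$ give, through \eqref{lambda}, $\lambda_{k,n}^*=\mathbb{P}_n(\alpha)\bigl(1-\omega^*u_2(k)/\alpha^2\bigr)\ge 0$ for $k\in[n+1:n+i_n^*]$ together with $\zeta_{n+i_n^*+1,n}^*=\sqrt{\omega^*u_2(n+i_n^*+1)}>\alpha$; since $u_2$ is increasing this says precisely that $i_n^*$ is the largest index $i\in[0:K-n]$ with $\omega^*u_2(n+i)\le\alpha^2$.

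Now suppose, for contradiction, that $i_n^*<v^{(t)}$ for some $n\in\mathcal{N}_2^{(t)}$. Then $i_n^*+1\le v^{(t)}$, hence $u_2(n+i_n^*+1)\le u_2(n+v^{(t)})$, and combining the two cut-off inequalities above gives
\begin{align}
\omega^*u_2(n+v^{(t)})\ \ge\ \omega^*u_2(n+i_n^*+1)\ >\ \alpha^2\ \ge\ \omega^{(t)}u_2(n+v^{(t)}),\nonumber
\end{align}
so that $\omega^*>\omega^{(t)}$. It remains to show that in fact $\omega^*\le\omega^{(t)}$, which produces the contradiction.

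For this I would use monotonicity of the power balance. Set $G(\{i_n\},\omega)\triangleq\sum_{n=0}^K\mathbb{P}_n(\alpha)\sum_{k=1}^K u_2(k)/\zeta_{k,n}$ with the $\zeta_{k,n}$ as in \eqref{eq_theorem_long}; then $G(\{i_n\},\cdot)$ is strictly decreasing in $\omega$ (there is always at least one summand proportional to $1/\sqrt{\omega}$), and \eqref{omega_P} reads $G(\{i_n^{(t)}\},\omega^{(t)})=P=G(\{i_n^*\},\omega^*)$. Starting from $\{i_n^{(t)}\}$ (which satisfies $i_n^{(t)}\le i_n^*$ for all $n$ by the standing hypothesis of the proof of Theorem \ref{theorem_algorithm}), I would reach $\{i_n^*\}$ by a sequence of unit enlargements, at each stage picking a coordinate $m$ still below its target that is of $\mathcal{N}_2$-type for the current $\omega$, i.e. $\sqrt{\omega\,u_2(m+i_m+1)}\le\alpha$. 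Such a coordinate exists at every stage: if all unfinished coordinates were of $\mathcal{N}_1$-type, then---by Proposition \ref{poposition_omega}---enlarging any of them would strictly increase $\omega$, keep all unfinished coordinates of $\mathcal{N}_1$-type, and, by \eqref{lambda}, produce a negative multiplier that can never be restored to nonnegative, contradicting that the reachable sequence $\{i_n^*\}$ satisfies \eqref{kkt_constraint}. Each admissible unit enlargement replaces a single $\zeta_{k,n}=\sqrt{\omega u_2(k)}\le\alpha$ by $\zeta_{k,n}=\alpha$, hence does not increase $G(\cdot,\omega)$ at fixed $\omega$, and therefore does not increase the $\omega$ solving $G=P$; chaining these steps gives $\omega^*\le\omega^{(t)}$.

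The step I expect to be the main obstacle is precisely this last one: the $\mathcal{N}_1/\mathcal{N}_2$ classification is relative to the current value of $\omega$, which itself moves as coordinates are enlarged, so Proposition \ref{poposition_omega} has to be applied stage by stage along the path from $\{i_n^{(t)}\}$ to $\{i_n^*\}$, and the claim that no ``unfinished'' $\mathcal{N}_1$-coordinate can block that path has to be justified by re-running the reduction used in the proof of Proposition \ref{proposition_prin1} at each intermediate state.
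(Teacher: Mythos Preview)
Your argument is correct and rests on the same mechanism as the paper's: Proposition~\ref{poposition_omega} makes $\omega$ non-increasing along any Rule-1 trajectory, which is incompatible with the KKT requirement $\zeta_{n+i_n^*+1,n}>\alpha$ once $i_n^*+1\le v^{(t)}$. The paper's version is simply the two-line form of this: under Rule~1 one has $\omega^{(t')}\le\omega^{(t)}$ for all $t'\ge t$, hence $\zeta_{n+i_n^*+1,n}^{(t')}=\sqrt{\omega^{(t')}u_2(n+i_n^*+1)}\le\alpha$ forever, so \eqref{kkt_constraint} can never be met. You unpack the same contradiction as the explicit comparison $\omega^*>\omega^{(t)}$ versus $\omega^*\le\omega^{(t)}$, which is a legitimate and arguably more transparent repackaging.

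Regarding the step you flag as the main obstacle: it is not one. You do not need the path from $\{i_n^{(t)}\}$ to $\{i_n^*\}$ to reach its target, nor do you need to re-run Proposition~\ref{proposition_prin1} at each stage. If at some intermediate state $\{j_n\}$ (with $j_n\le i_n^*$ and current multiplier $\omega'\le\omega^{(t)}$) every unfinished coordinate $m$ is of $\mathcal{N}_1$-type, pick any such $m$: then $\omega'u_2(m+j_m+1)>\alpha^2$, while the KKT condition $\lambda_{m+j_m+1,m}^*\ge 0$ in \eqref{lambda} (valid since $j_m+1\le i_m^*$) gives $\omega^*u_2(m+j_m+1)\le\alpha^2$. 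Hence $\omega^*<\omega'\le\omega^{(t)}$ directly, and you are done without extending the path further. In other words, either the path of $\mathcal{N}_2$-type unit enlargements reaches $\{i_n^*\}$ (so $\omega^*\le\omega^{(t)}$ by monotonicity), or it stalls and the same inequality drops out of a one-line comparison with the true KKT point.
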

   \begin{proof}
   Also using reduction to absurdity, we first assume that
   there exists one $i_n^*\in\left[i_n^{(t)}+1:v^{(t)}-1\right]$ with
  $n\in\mathcal{N}_{2}^{(t)}$. Then, $\zeta_{n+i_n^*+1,n}^{(t)}=
  \sqrt{w^{(t)}(\hat{r}_0+1)^{i_n^{(t)}}\hat{r}_0}\leq \alpha$
   according to \eqref{eq_theorem_long} and \eqref{two_sets}; thus, it is easy to obtain
   $\zeta_{n+i_n^*+1,n}^{(t+1)}\leq \alpha$ based on
    Proposition \ref{poposition_omega} and Rule 1.
  By analogy, $\zeta_{n+i_n^*+1,n}^{(t')}\leq \alpha$, $\forall t'\geq t+1$ when enlarging at least one element $i_n^{(t)}$ with $n\in\mathcal{N}_2^{(t)}$ (i.e., Rule 1).
  This implies  that the constraint in \eqref{kkt_constraint}
 would never  be satisfied if there existed any $i_n^*\in\left[i_n^{(t)}+1:v^{(t)}-1\right]$ with
 $n\in\mathcal{N}_{2}^{(t)}$.
 \end{proof}
 Now, according to Proposition \ref{proposition_prin2}, Rule 1 can be refined into Rule 2 to further
 improve search efficiency as follows.

 {\bf Rule 2}:
 Enlarge each $i_n^{(t)}$ with $i_n^{(t)}<K-n$ and $n\in\mathcal{N}_{2}^{(t)}$ as
 $i_n^{(t+1)}=v^{(t)}$. 

 Based on Propositions \ref{proposition_prin1} and \ref{proposition_prin2},
  $\{i_n^*\}$ must be found using Rule 2.
Note that Rule 2 has been adopted in Step 2-c of Algorithm I (Section \ref{subsection_long}), and Theorem \ref{theorem_algorithm} is proved.

 \begin{Remark} Using  Rule 2 in Step 2-c of Algorithm I, we can easily verify that the constraint
 $\lambda_{k,n}^{(t)}\geq 0$ always holds
    for $k\in[n+1:n+i_n^{(t)}]$, according to \eqref{eq_theorem_long}, \eqref{lambda}, and Proposition \ref{poposition_omega}.
    Thus, it is not necessary to include this constraint in Step 2-b of Algorithm I.
    \end{Remark}

\bibliographystyle{IEEEtran}
\bibliography{references}
\end{document}